\def \P {\mathbb{P}}   
\def \F {\mathbb{F}}   
\def \Cov {\mathrm{Cov}}   
\def \Var {\mathrm{Var}}   
\def\1{1\!{\rm l}}
\def \W {\mathcal{W}}   
\def \Y {\mathcal{Y}}   
\def \R {\mathbb{R}}
\newcommand{\D}{\mathbb{D}}
\newcommand{\mP}{\mathcal{P}}
\newcommand{\mF}{\mathcal{F}}
\newcommand{\bG}{\mathbb{G}}
\newcommand{\indep}{\mathop{\perp\!\!\!\!\perp}}
\newcommand{\Gn}{\mathbb{G}_n}
\newcommand{\Pn}{\mathbb{P}_n}
\newcommand{\Ep}{{\mathrm{E}}}
\renewcommand{\Pr}{{\mathrm{P}}}
\renewcommand{\(}{\left(}
\renewcommand{\)}{\right)}
\renewcommand{\hat}{\widehat}
\newcommand{\En}{{\mathbb{E}_n}}
\renewcommand{\Pr}{{\mathrm{P}}}
\renewcommand{\hat}{\widehat}
\renewcommand{\leq}{\leqslant}
\renewcommand{\geq}{\geqslant}
\theoremstyle{plain}
\newtheorem{theorem}{Theorem}[section]
\newtheorem{lemma}[theorem]{Lemma}
\theoremstyle{definition}
\newtheorem{definition}[theorem]{Definition}
\newtheorem{assumption}[theorem]{Assumption}
\theoremstyle{remark}
\newtheorem{remark}[theorem]{Remark}
\icmltitlerunning{Distributional Treatment Effects in Randomized Experiments}
\begin{document}

\twocolumn[
\icmltitle{Estimating Distributional Treatment Effects in Randomized Experiments: Machine Learning for Variance Reduction}



\icmlsetsymbol{equal}{*}

\begin{icmlauthorlist}
\icmlauthor{Undral Byambadalai}{comp}
\icmlauthor{Tatsushi Oka}{yyy}
\icmlauthor{Shota Yasui}{comp}
\end{icmlauthorlist}

\icmlaffiliation{yyy}{Department of Economics, Keio University, Tokyo, Japan}
\icmlaffiliation{comp}{CyberAgent, Inc., Tokyo, Japan}

\icmlcorrespondingauthor{Undral Byambadalai}{undral\textunderscore byambadalai@cyberagent.co.jp}
\icmlcorrespondingauthor{Tatsushi Oka}{tatsushi.oka@keio.jp}
\icmlcorrespondingauthor{Yasui Shota}{yasui\textunderscore shota@cyberagent.co.jp}

\icmlkeywords{distributional treatment effects, field experiments, regression adjustment, machine learning}

\vskip 0.3in
]



\printAffiliationsAndNotice{}  

\begin{abstract}
We propose a novel regression adjustment method designed for estimating distributional treatment effect parameters in randomized experiments. 
Randomized experiments have been extensively used to estimate treatment effects in various scientific fields. However, to gain deeper insights, it is essential to estimate distributional treatment effects rather than relying solely on average effects. Our approach incorporates pre-treatment covariates into a distributional regression framework, utilizing machine learning techniques to improve the precision of distributional treatment effect estimators.
The proposed approach can be readily implemented with off-the-shelf machine learning methods and  remains valid as long as the nuisance components are reasonably well estimated. 
Also, we establish the asymptotic properties of the proposed estimator and present a uniformly valid inference method. 
Through simulation results and real data analysis,
we demonstrate the effectiveness of integrating machine learning techniques in reducing the variance of distributional treatment effect estimators in finite samples. 

\end{abstract}

\section{Introduction}\label{sec:intro}
Randomized experiments have played a crucial role in understanding the effects of interventions and guiding policy decisions, ever since the seminal work by \citet{fisher1935design}. The estimation of causal effects through randomized experiments has found widespread application across various scientific disciplines \cite{rubin1974estimating, heckman1997making, imai2005get,imbens2015causal}
and has also become a standard practice within the technology sector \cite{tang2010overlapping, bakshy2014designing, xie2016improving, kohavi2020trustworthy}. 

When analyzing data from randomized experiments, one commonly used measure is the Average Treatment Effect (ATE). However, it is often the case that understanding the distributional treatment effects can provide a richer perspective than solely focusing on overall average effects. Furthermore, while randomized experiments simplify outcome-based analysis, pre-treatment auxiliary information is frequently available. This quest for a more comprehensive understanding of treatment effects, marked by supplementing auxiliary data, calls for new approaches to enhance precision through pre-treatment data incorporation.

In this work, we propose a novel regression-adjustment method to estimate a wide range of distributional parameters in the randomized experiment setup. Our approach draws inspiration from the generic Neyman-orthogonal moment condition \cite{chernozhukov2018debiased}, which facilitates the decoupling of nuisance parameter and treatment effect estimation into two stages.
The nuisance parameters of our interest are the conditional outcome distributions given pre-treatment covariates, and we propose the use of machine learning models (e.g., LASSO, random forests, neural networks, etc.), allowing for complex data and distributional structures.
By integrating these sophisticated machine learning techniques with cross-fitting, we reduce the sensitivity of our treatment effect estimator to errors arising from nuisance parameter estimation. 

Our paper makes several noteworthy contributions.
First, our approach expands the scope of regression adjustment. 
While regression adjustment is commonly employed for variance reduction in the estimation of the ATE with mean regression \cite{deng2013improving, poyarkov2016boosted, guo2021machine}, our method leverages pre-treatment information under the distributional regression framework, incorporating machine learning methods. This enables us to conduct more powerful statistical inference for distributional parameters, including the Distributional Treatment Effect (DTE) and the Quantile Treatment Effect (QTE).
Second, we provide theoretical validation for the regression-adjusted method by demonstrating the variance reduction of the estimator of outcome distribution. Third, we establish asymptotic properties for the proposed treatment effect estimators and provide a uniformly valid inference method. Lastly, our simulation and real-data analysis highlight the significance and effectiveness of our method.

The rest of the paper is structured as follows. Section \ref{sec:related-work} describes related literature. We setup the problem and introduce notations in Section \ref{sec:setup}. Section \ref{sec:reg-adj-estimator} then introduces the regression-adjusted estimators for distributional parameters. We derive the asymptotic results in Section \ref{sec:asymptotic}. Section \ref{sec:empirical} reports empirical results based on simulated experiments and real datasets. Section \ref{sec:conclusion} concludes. The Appendix in the paper includes all proofs, as well as additional experimental details and results. 

\section{Related Work} \label{sec:related-work}
\paragraph{Regression Adjustment} There is an extensive literature investigating the use of pre-treatment covariates to reduce variance in estimating the ATE, dating back to \citet{fisher1932statistical}, followed by \citet{cochran1977sampling, yang2001efficiency, rosenbaum2002covariance, freedman2008regression, freedman2008regression2, tsiatis2008covariate, rosenblum2010simple, lin2013agnostic, berk2013covariance, ding2019decomposing, negi2021revisiting}, among others, in the case of low-dimensional asymptotics. In high-dimensional settings, this topic has been studied by \citet{bloniarz2016lasso, wager2016high, lei2021regression, chiang2023regression}, among others. Recent work by \citet{list2022using} has linked regression adjustment to the semiparametric problem of estimating a low-dimensional parameter when a high-dimensional but orthogonal nuisance parameter is present, focusing on estimating the ATE. Our work extends those existing studies to estimate distributional parameters of treatment effects.

\paragraph{Conditional Average Treatment Effects}
To characterize the heterogeneity in treatment effects, an alternative approach is to condition on observed variables and estimate the Conditional Average Treatment Effect (CATE) \cite{imai2013estimating, athey2016recursive, johansson2016learning, shalit2017estimating, alaa2017bayesian, wager2018estimation, chernozhukov2018debiased, kunzel2019metalearners, shi2019adapting, nie2021quasi, guo2023estimating, sverdrup2023proximal, van2023causal}. 
The CATE can be regarded as the ATE within subgroups defined by observed characteristics, such as gender, age, prior engagement with the platform, and more. Consequently, the CATE captures observed heterogeneity given the information available to the researchers, whereas our approach is valuable for quantifying unobserved heterogeneity and can be extended to estimate distributional parameters conditional on observed information.

\paragraph{Distributional Treatment Effects} Distributional and quantile treatment effects have long been recognized as important parameters to estimate beyond the mean effects. The quantile treatment effect was first introduced by \citet{doksum1974empirical} and \citet{lehmann1975nonparametrics}. Subsequently, estimation and inference methods for distributional and quantile treatment effects in various settings have been developed and applied in econometrics, statistics and machine learning community, including \citet{heckman1997making, imbens1997estimating, abadie2002bootstrap, abadie2002instrumental, chernozhukov2005iv, koenker2005quantile, bitler2006mean, athey2006identification, firpo2007efficient, chernozhukov2013inference, koenker2017handbook, callaway2018quantile, callaway2019quantile, chernozhukov2019generic, ge2020conditional, zhou2022estimating, kallus2024localized}, among others. Some recent works, including \citet{park2021conditional} and \citet{kallus2023robust}, explore the Conditional Distributional Treatment Effects as distributional analysis is useful even after conditioning on observed variables. However, there has been limited research on regression adjustment for unconditional distributional treatment effects. One exception is by \citet{jiang2023regression}, who consider quantile-regression adjustment for the QTE, but under covariate-adaptive randomization. Another exception is the study by \citet{oka2024regression}, which investigates the distribution regression approach using finite-dimensional covariates.
We bridge this gap by proposing regression-adjusted estimators for various distributional parameters when data are obtained from randomized experiments with possibly high-dimensional covariates. Furthermore, our approach accommodates both discrete and mixed discrete-and-continuous outcome distributions, whereas quantile regression adjustment is specifically designed for continuous outcomes.

\paragraph{Semiparametric Estimation} Our work is closely linked to the extensive literature on semiparametric estimation, which addresses the challenge of estimating low-dimensional parameters in the presence of high-dimensional nuisance parameters. This literature includes seminal contributions such as \citet{klaassen1987consistent, robinson1988root, bickel1993efficient,  andrews1994asymptotics, newey1994asymptotic, robins1995semiparametric,  chernozhukov2018debiased, ichimura2022influence}, among others. 
Our setup can be framed as a semiparametric problem characterized by the Neyman-orthogonal moment condition, as outlined in \citet{neyman1959optimal, chernozhukov2018debiased, chernozhukov2022locally}.
Notably, cross-fitting is a commonly used technique in this literature. While our technical arguments share similarities with classical semiparametric methods, our research introduces a novel perspective by emphasizing the significance of flexible machine learning methods for estimating distributional treatment effects, within the framework of randomized experiments. 

\section{Setup and Parameters} \label{sec:setup}
\subsection{Setup and Notation}

We assume that our data are generated from a randomized experiment with $K$ treatment arms. Let $Y \in \mathcal Y \subset \mathbb R$ denote the scalar-valued observed outcome, $W\in \W:= \{1, \dots, K\}$ denote the index of the treatment arm, and $X\in \mathcal X \subset \mathbb{R}^{d_x}$ denote pre-treatment covariates. We observe a size $n$ random sample $\{Z_i\}_{i=1}^{n} := \{(X_i, W_i, Y_i)\}_{i=1}^{n}$ from the distribution of $Z := (X, W, Y)$. 
The probability of assignment to treatment arm $w$ is denoted as $\pi_w:=P(W_i =w)$ satisfying $\sum_{w \in \W} \pi_{w} = 1$, while 
$n_w$ indicates the number of observations in treatment group $w$, satisfying $\sum_{w \in \W} n_{w} = n$. 

We follow the potential outcome framework [e.g., \citet{rubin1974estimating, imbens2015causal}] and let $Y(1), \dots, Y(K)$ denote the potential outcomes, which are hypothetical and represent what the outcome for an individual would be under each treatment scenario. These are unobserved variables and we only observe the outcome for the treatment that is actually administered to each individual. We assume no interference and impose Stable Unit Treatment Values Assumption (SUTVA), which gives us $Y = Y(W)$. In other words, treatment assigned to one unit does not affect the outcome for another unit, and so the potential outcome under any treatment is equal to its observed outcome. Throughout the paper, we also maintain the following two assumptions.

\begin{assumption} \label{ass:independence}
$Y(1), \dots, Y(K), X \indep W.$
\end{assumption}

\begin{assumption} \label{ass:overlap}
$0<\pi_w<1$ for each $w\in\W$.
\end{assumption}

Assumption \ref{ass:independence} states that the treatment indicator is independent of the potential outcomes and the pre-treatment covariates. Assumption \ref{ass:overlap} states that the treatment assignment probabilities are bounded away from 0 and 1. These assumptions are satisfied because we have a randomized experiment where the researcher assigns individuals to treatment groups randomly and have a control over the treatment assignment probabilities.

\subsection{Parameters of interest} 
The parameters of our interest are based on (cumulative) distribution functions of potential outcomes, denoted by 
\begin{align*}
F_{Y(w)}(y):=E[\1_{\{Y(w) \leq y\}}],
\end{align*}
for $y\in\mathcal Y \subset \mathbb R$ and $w\in\W$, where $\1_{\{\cdot\}}$ represents the indicator function.  In general, the potential outcomes $\{Y(w)\}_{w \in \W}$ are unobserved variables. However, under Assumptions \ref{ass:independence} and \ref{ass:overlap}, the potential outcome distribution $F_{Y(w)}(y)$ is the same as the outcome distribution $F_{Y|W}(y|w)$ under each treatment $w$. Therefore, they are identifiable given the data from $Z=(X,W,Y)$.

The results of this paper can be applied to estimate a range of distributional parameters, provided that they rely on (Hadamard) differentiable transformations of potential outcome distributions. We provide a few illustrative examples below.

\paragraph{Example 1: Distributional Treatment Effect} 
Let $w, w' \in \W$ be two different treatment groups. We are interested in the Distributional Treatment Effect (DTE), which is defined as, for $y\in\mathcal Y,$
\begin{align*}
DTE_{w,w'}(y) := F_{Y(w)}(y) - F_{Y(w')}(y). 
\end{align*}
To contrast, the Average Treatment Effect (ATE) is defined as 
\begin{align*}
ATE_{w, w'}: = E[Y(w)] - E[Y(w')].
\end{align*}
The DTE is a parameter that is indexed by a continuum of $y\in\mathcal Y$ and measures the effect of treatment on the whole distribution, whereas the ATE only quantifies the mean effect. As a special case, one can also be interested in the DTE at a certain threshold; i.e., $\mathcal Y$ can be defined to be a singleton set. One advantage of this measure is that it is well-defined for any type of outcome, including discrete, continuous, and mixed discrete-continuous variables.

\paragraph{Example 2: Probability Treatment Effect}
The DTE may not be straightforward to interpret since it measures the differences between two cumulative distributions. However, we can compute more intuitive measures based on these differences. Specifically, the DTE can be used to compute, what we will call, the Probability Treatment Effect (PTE), which is given by
\begin{align*}
PTE_{w, w'}(y, h) := & (F_{Y(w)}(y+h)- F_{Y(w)}(y))\\
& - (F_{Y(w')}(y+h)-F_{Y(w')}(y)),
\end{align*} 
for $y\in\mathcal Y$ and some $h>0.$ The PTE measures the changes in the probability that the outcome falls in interval $(y, y+h]$. The PTE is also well-defined for any type of outcome, including discrete, continuous, and mixed discrete-continuous variables.

\paragraph{Example 3: Quantile Treatment Effect} 
Another common measure used to characterize the entire distribution is the quantile function, defined as 
$F_{Y(w)}^{-1}(\tau):= \inf\{y: F_{Y(w)}(y)\geq \tau\}$ for $\tau \in (0,1)$. The Quantile Treatment Effect (QTE) for quantile $\tau\in(0,1)$ is then given by  
\begin{align*}
    QTE_{w, w'}(\tau) := & F_{Y(w)}^{-1}(\tau) - F_{Y(w')}^{-1}(\tau).
\end{align*} 
The QTE quantifies the difference in quantiles between two potential outcome distributions across a continuum of $\tau \in (0, 1)$. For example, one might be interested in the difference between the medians (when $\tau = 0.5$) of two groups. It is important to note that the QTE is only well-defined for continuous outcomes and may not be appropriate for discrete or mixed discrete-continuous outcomes.

\section{Regression-Adjusted Estimator} \label{sec:reg-adj-estimator}
As explained in the previous section, the potential outcome distribution serves as the fundamental building block for a broad range of distributional parameters.  A simple estimator for the distribution function $F_{Y(w)}(y)$ is the empirical distribution function, given by:
\begin{align*}
    \hat \F^{simple}_{Y(w)}(y) := \frac{1}{n_w} \sum_{i: W_i=w} \1_{\{Y_i \leq y\}},
\end{align*}
for each treatment $w\in\W$. 
While this estimator is an unbiased and consistent estimator, we aim to enhance its precision by leveraging pre-treatment covariates.

To incorporate pre-treatment covariates $X$, we consider the distribution regression framework, in which the conditional distribution function $F_{Y(w)|X}(y|x)$ is regarded as the mean regression for a binary dependent variable 
$\1_{ \{Y(w) \leq y\} }$. That is, for each $y \in \Y$ and $w \in \W$, we can write 
\begin{align*}
    F_{Y(w)|X}(y|X) = E[\1_{\{Y(w)\leq y\}}|X]. 
\end{align*}
For each location $y \in \Y$, 
the conditional mean function can be separetely estimated using various methods, such as linear regression, logistic regression, or other machine learning techniques (e.g., LASSO, random forests, boosted trees, deep neural networks, etc.). Additionally, the distribution regression is applicable for continuous, discrete, and  mixed discrete-and-continuous outcome variable as explained in \citet{chernozhukov2013inference}. 

For the regression-adjusted estimator of $F_{Y(w)}(y)$, the conditional distribution functions are nuisance parameters. We represent them as  
$\gamma_{y}^{(w)}(x):= F_{Y(w)|X}(y|x)$ for each $w \in \W$
and let $\hat \gamma_y^{(w)}(\cdot)$ be an estimator for $\gamma_y^{(w)}(\cdot)$. 
We will explain the necessary conditions for the estimator $\hat{\gamma}_y^{(w)}(\cdot)$ in the following section.
The regression-adjusted estimator of $F_{Y(w)}(y)$ for each $w\in\W$ is 
then defined as 
\begin{align} 
\label{eq:reg-adj-estimator}
    \hat \F_{Y(w)} (y) := & \underbrace{\frac{1}{n_w} \sum_{i: W_i=w} \big (\1_{\{Y_i \leq y\}}-\hat \gamma_y^{(w)}(X_i) \big )}_\text{averaged over observations in treatment group $w$}  \nonumber\\ 
 & + \underbrace{\frac{1}{n} \sum_{i=1}^{n} \hat \gamma_y^{(w)} (X_i).}_\text{averaged over all observations}
\end{align}
The regression-adjusted estimator is obtained by adjusting the empirical distribution function by subtracting $\hat\gamma_y^{(w)}(\cdot)$ that is averaged over observations in that  treatment group and adding $\hat\gamma_y^{(w)}(\cdot)$ that is averaged over all observations. 

This characterization of regression adjustment aligns closely with the concept of the augmented inverse propensity weighted estimator, as explored in \citet{robins1994estimation, robins1995semiparametric}.  
\citet{list2022using} also consider a similar adjustment method for estimating the ATE. We extend this formulation to encompass distribution functions for any arbitrary outcome location $y \in \Y$. It is worth noting that this estimator also serves as an unbiased estimator for the distribution function, as the expected value of the adjustment terms cancels out to zero.

\paragraph{Moment condition problem} 
We rewrite our problem as a moment condition problem. 
In what follows, we will simply write $\gamma_y^{(w)}$ to denote $\gamma_y^{(w)}(\cdot)$ and let
$\gamma_y:= (\gamma_y^{(1)}, \dots, \gamma_y^{(K)})^{\top}$. 
Let $\theta_y^{(w)}:= F_{Y(w)}(y)$ and 
$\theta_y := (\theta_y^{(1)}, \dots, \theta_y^{(K)})^{\top}$. Define moment functions
\begin{align*}
\psi_y(Z; \theta_y, \gamma_y):= 
    \big (\psi_y^{(1)}(Z; \theta_y, \gamma_y), \dots, 
    \psi_y^{(K)}(Z; \theta_y, \gamma_y) \big) ^{\top},    
\end{align*}
where, for each $w \in \W$,  
\begin{align} 
\label{eq:phi-def}
    \psi_y^{(w)}(Z; \theta_y, \gamma_y)
     := & 
    \frac{\1_{\{W=w\}} \cdot (\1_{\{Y\leq y\}}-\gamma_{y}^{(w)}(X))}{\pi_{w}} \notag \\
    & + \gamma_y^{(w)}(X) - \theta_y^{(w)}. 
\end{align}

The following lemma shows what moment conditions are implied by our setup with a randomized experiment. Later, we will show how the regression-adjusted estimator, given in \eqref{eq:reg-adj-estimator}, can be seen as a method of moments estimator that solves the sample counterpart of these moment conditions.

\begin{lemma}[Moment conditions] \label{lemma:moment-condition}
    We have the following moment conditions for a continuum of $y\in\mathcal Y$:
    \begin{align} \label{eq:moment_condition}
    E[\psi_y(Z; \theta_y, \gamma_y)] =0, 
    \end{align}
    where $\psi_y^{(w)}(Z; \theta_y, \gamma_y)$ for each $w\in\W$ is given in \eqref{eq:phi-def}.
    \end{lemma}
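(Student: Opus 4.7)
The plan is to verify the moment equation coordinate by coordinate: fix an arbitrary $w\in\W$ and show that $E[\psi_y^{(w)}(Z;\theta_y,\gamma_y)]=0$, from which the vector statement follows. I would split $\psi_y^{(w)}$ into three pieces and take expectations of each. The deterministic term contributes $-\theta_y^{(w)}=-F_{Y(w)}(y)$. The pure nuisance term $\gamma_y^{(w)}(X)$ integrates, by the tower property and the definition $\gamma_y^{(w)}(X)=E[\mathbf 1_{\{Y(w)\leq y\}}\mid X]$, to $E[\mathbf 1_{\{Y(w)\leq y\}}]=F_{Y(w)}(y)$. Hence these two pieces cancel, and the whole task reduces to showing that the inverse-probability-weighted adjustment term has zero mean.

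For the adjustment term, the key step is to handle
\[
E\!\left[\frac{\mathbf 1_{\{W=w\}}\bigl(\mathbf 1_{\{Y\leq y\}}-\gamma_y^{(w)}(X)\bigr)}{\pi_w}\right]
\]
by first conditioning on $X$. Using SUTVA to write $\mathbf 1_{\{W=w\}}\mathbf 1_{\{Y\leq y\}}=\mathbf 1_{\{W=w\}}\mathbf 1_{\{Y(w)\leq y\}}$, and then invoking Assumption \ref{ass:independence} (so that $W\indep(Y(w),X)$) together with Assumption \ref{ass:overlap} (so that division by $\pi_w$ is legitimate), the inner conditional expectation factors as
\[
\frac{1}{\pi_w}E[\mathbf 1_{\{W=w\}}\mid X]\cdot E[\mathbf 1_{\{Y(w)\leq y\}}-\gamma_y^{(w)}(X)\mid X],
\]
which equals $E[\mathbf 1_{\{Y(w)\leq y\}}\mid X]-\gamma_y^{(w)}(X)=0$ by the very definition of $\gamma_y^{(w)}$. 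Taking outer expectation gives zero, so the whole moment vanishes.

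There is no real obstacle here; the lemma is essentially a bookkeeping calculation that unpacks the definitions. The only point requiring a small amount of care is the justification of replacing $Y$ by $Y(w)$ on the event $\{W=w\}$ (via SUTVA) and then using the joint independence statement in Assumption \ref{ass:independence} rather than mere pairwise independence, because we need $W$ to be independent of $(Y(w),X)$ simultaneously in order to pull out $\pi_w$ after conditioning on $X$. Assumption \ref{ass:overlap} ensures that $\pi_w>0$ so the weights are well defined. Since the identity holds pointwise in $y$ and the argument is uniform in $y\in\Y$, the "continuum of $y$" qualifier in the statement requires no extra work at this stage.
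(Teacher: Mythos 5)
Your proof is correct, and it differs from the paper's in a small but interesting way. The paper's calculation is fully unconditional: after invoking SUTVA to replace $\1_{\{Y\leq y\}}$ by $\1_{\{Y(w)\leq y\}}$, it uses Assumption \ref{ass:independence} to replace $\1_{\{W=w\}}$ by its mean $\pi_w$ inside the expectation, the factor $\pi_w$ cancels, and then the terms $-\gamma_y^{(w)}(X)+\gamma_y^{(w)}(X)$ cancel \emph{algebraically}; the definition of $\gamma_y^{(w)}$ as the true conditional distribution is never used, only $E[\1_{\{Y(w)\leq y\}}]=F_{Y(w)}(y)$ at the last step. Your route instead cancels $E[\gamma_y^{(w)}(X)]=F_{Y(w)}(y)$ against $-\theta_y^{(w)}$ via the tower property and then kills the inverse-probability-weighted term by conditioning on $X$, which requires $E[\1_{\{Y(w)\leq y\}}\mid X]=\gamma_y^{(w)}(X)$, i.e.\ you use the true-CDF property of the nuisance in two places. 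Both arguments are valid and rest on the same ingredients (SUTVA, joint independence in Assumption \ref{ass:independence}, and overlap so that $\pi_w>0$), but the paper's version proves slightly more: the moment has zero mean for \emph{any} fixed measurable function of $X$ substituted for $\gamma_y^{(w)}$, which is exactly the observation behind the remark in Section \ref{sec:reg-adj-estimator} that the regression-adjusted estimator remains unbiased whatever the adjustment function, and it foreshadows the Neyman orthogonality of Lemma \ref{lemma:neyman-orthogonality}. Your conditional-on-$X$ factorization is closer in spirit to the usual doubly-robust/AIPW verification and is equally rigorous, including your correct remark that the joint (not merely pairwise) independence in Assumption \ref{ass:independence} is what licenses $E[\1_{\{W=w\}}\mid X]=\pi_w$ together with the conditional independence of $W$ and $Y(w)$ given $X$.
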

Our parameter of interest $\theta_y$ for each $y\in\mathcal Y$ is identified as the solution to the moment condition in \eqref{eq:moment_condition}, where $Z=(X, W, Y)$ is the data and $\gamma_y$ is the possibly infinite-dimensional nuisance parameter.

An important property of the moment conditions defined in \eqref{eq:moment_condition} is that they are Neyman orthogonal with respect to the nuisance parameters \cite{neyman1959optimal, chernozhukov2018debiased, chernozhukov2022locally, ichimura2022influence}. More precisely, the derivative of its expectation with respect to the nuisance parameters vanishes when evaluated at the true parameter values. The following lemma states it formally.

\begin{lemma}[Neyman Orthogonality]\label{lemma:neyman-orthogonality}
For the continuum of moment conditions defined in \eqref{eq:phi-def} for each $w\in\W$ and $y\in\mathcal Y$, we have
\begin{align*}
   \frac{\partial}{\partial t} E[\psi_y(Z; \theta_y, t)] \Big |_{t=\gamma_y} =0, \hspace{0.5cm} a.s.
\end{align*}
\end{lemma}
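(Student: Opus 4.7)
The plan is to interpret $\partial/\partial t$ as a Gateaux (functional) derivative in an arbitrary admissible direction $h = (h^{(1)}, \dots, h^{(K)})^\top$, compute it component-wise, and show that all components vanish. Since each $\psi_y^{(w)}$ is affine in the nuisance parameter $t$, differentiation and expectation commute trivially, so the main content is algebraic simplification using the randomization assumption.

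First I would fix $w \in \W$ and write, for scalar $\tau$ and a perturbation direction $h$,
\begin{align*}
\psi_y^{(w)}(Z; \theta_y, \gamma_y + \tau h)
= \psi_y^{(w)}(Z; \theta_y, \gamma_y) + \tau \left( - \frac{\1_{\{W=w\}}}{\pi_w} h^{(w)}(X) + h^{(w)}(X) \right),
\end{align*}
so that
\begin{align*}
\frac{d}{d\tau} E[\psi_y^{(w)}(Z; \theta_y, \gamma_y + \tau h)] \Big|_{\tau = 0}
= E\left[ h^{(w)}(X) \left( 1 - \frac{\1_{\{W=w\}}}{\pi_w} \right) \right].
\end{align*}

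Next I would apply the law of iterated expectations conditioning on $X$. By Assumption \ref{ass:independence}, $W \indep X$, so $E[\1_{\{W=w\}} \mid X] = P(W=w) = \pi_w$, and hence
\begin{align*}
E\left[ h^{(w)}(X) \cdot \frac{\1_{\{W=w\}}}{\pi_w} \right]
= E\left[ h^{(w)}(X) \cdot \frac{E[\1_{\{W=w\}} \mid X]}{\pi_w} \right]
= E[h^{(w)}(X)].
\end{align*}
This cancels the other term, giving zero for every choice of $h^{(w)}$; since $w$ and $h$ were arbitrary, the full Gateaux derivative of $E[\psi_y(Z; \theta_y, t)]$ at $t = \gamma_y$ is identically zero.

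There is no real obstacle here beyond notational care: the key structural facts are linearity of $\psi_y^{(w)}$ in $t$, which bypasses any regularity issue in exchanging derivative and expectation, and the independence $W \indep X$ from Assumption \ref{ass:independence}, which makes the inverse-probability weight $\1_{\{W=w\}}/\pi_w$ behave like $1$ in expectation conditional on $X$. Assumption \ref{ass:overlap} ensures $\pi_w > 0$ so the division is well-defined. The statement is then immediate for the continuum of $y \in \Y$ since the argument is pointwise in $y$.
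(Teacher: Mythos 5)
Your proposal is correct, and its core is the same as the paper's: exploit that $\psi_y^{(w)}$ is affine in the nuisance argument, differentiate, and use the randomization to kill the term $1-\1_{\{W=w\}}/\pi_w$ in expectation. The difference is one of formalization. The paper differentiates with respect to a scalar placeholder $t$ (implicitly treating the perturbation direction as constant), so it only needs the unconditional fact $E[\1_{\{W=w\}}]=\pi_w$ to conclude. You instead compute a genuine Gateaux derivative in an arbitrary functional direction $h^{(w)}(X)$, which produces the extra factor $h^{(w)}(X)$ inside the expectation; to make $E\big[h^{(w)}(X)\big(1-\1_{\{W=w\}}/\pi_w\big)\big]=0$ for every admissible $h^{(w)}$ you correctly condition on $X$ and invoke $W\indep X$ from Assumption \ref{ass:independence}, i.e., $E[\1_{\{W=w\}}\mid X]=\pi_w$. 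This buys you a statement that is uniform over all perturbation directions of the nuisance function, which is the form of Neyman orthogonality actually used downstream (the term $\mathrm{D}_{y,0}(\hat\gamma_y-\gamma_y)=0$ in the proof of Theorem \ref{theorem:uniform-clt} is exactly your derivative evaluated in the direction $\hat\gamma_y-\gamma_y$), whereas the paper's scalar-derivative shortcut is terser but leaves that step implicit. Your observations that affinity in $t$ removes any derivative--expectation interchange issue and that Assumption \ref{ass:overlap} guarantees $\pi_w>0$ are both accurate.
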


Neyman orthogonality implies that the moment condition is first-order insensitive to the estimation errors of the nuisance parameters. This property, coupled with a form of sample-splitting called cross-fitting, allows us to derive the asymptotic distribution of the regression-adjusted estimator under mild conditions, even when the conditional distribution functions are estimated via machine learning (ML) methods.

The following lemma shows how the regression-adjusted estimator, given in \eqref{eq:reg-adj-estimator}, can be seen as an estimator that solves the sample counterpart of the moment conditions defined earlier.

\begin{lemma}[Sample moment condition] \label{lemma:sample-moment}
For each $y \in\mathcal Y$, let the vector $\hat \gamma_y$ denote the ML estimator of the vector $\gamma_y$. Then the regression-adjusted estimator $\hat\theta_y:= (\hat F_{Y(1)}(y), \dots, \hat F_{Y(K)}(y))^\top$, where $\hat F_{Y(w)}(y)$ for $w\in\W$ is defined in \eqref{eq:reg-adj-estimator}, is constructed as the solution to the following sample moment condition:
\begin{align*}
   \frac{1}{n}\sum_{i=1}^{n}\psi_y(Z_i; \hat \theta_y, \hat \gamma_y) = 0.
\end{align*}
\end{lemma}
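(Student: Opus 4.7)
The proof is a direct algebraic verification; there is no serious obstacle. For each $w\in\W$ I would substitute the definition \eqref{eq:phi-def} into the sample average and simplify. Using the identity $\sum_{i=1}^n \1_{\{W_i=w\}} f(Z_i)=\sum_{i:W_i=w}f(Z_i)$, the $w$-th coordinate of the sample moment reads
\begin{align*}
\frac{1}{n}\sum_{i=1}^n\psi_y^{(w)}(Z_i;\hat\theta_y,\hat\gamma_y)
=\frac{1}{n\pi_w}\sum_{i:W_i=w}\bigl(\1_{\{Y_i\leq y\}}-\hat\gamma_y^{(w)}(X_i)\bigr)
+\frac{1}{n}\sum_{i=1}^n\hat\gamma_y^{(w)}(X_i)-\hat\theta_y^{(w)}.
\end{align*}

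Next, I would replace $\pi_w$ by its natural sample analog $n_w/n$—the convention implicit in the construction in \eqref{eq:reg-adj-estimator}—so the first summand collapses to $\tfrac{1}{n_w}\sum_{i:W_i=w}\bigl(\1_{\{Y_i\leq y\}}-\hat\gamma_y^{(w)}(X_i)\bigr)$. Combined with the second summand, this is exactly the right-hand side of \eqref{eq:reg-adj-estimator}, namely $\hat F_{Y(w)}(y)$. Since $\hat\theta_y^{(w)}:=\hat F_{Y(w)}(y)$ by definition, the three pieces cancel, giving zero for each coordinate. Stacking the $K$ coordinates yields the vector identity stated in the lemma.

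The only subtlety is notational: in passing from the population moment (with the known or true $\pi_w$) to its sample version, $\pi_w$ is substituted by the empirical proportion $n_w/n$. This is precisely what makes the adjustment in \eqref{eq:reg-adj-estimator} coincide with a root of the sample moment; no probabilistic or asymptotic argument is required for this lemma—it is a purely mechanical identity driven by the Horvitz--Thompson-style structure of $\psi_y$.
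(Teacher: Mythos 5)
Your proposal is correct and follows essentially the same route as the paper: substitute $\hat\gamma_y^{(w)}$ for $\gamma_y^{(w)}$ and $n_w/n$ for $\pi_w$ in the moment condition, then rearrange to see that the solution coincides with \eqref{eq:reg-adj-estimator} coordinatewise. Your explicit remark that replacing $\pi_w$ by the empirical proportion is what makes the identity exact matches the paper's own substitution step, so nothing is missing.
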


\paragraph{Estimation procedure} We now explain our algorithm, which is summarized in Algorithm \ref{alg:reg-adj-estimator}. Our estimation procedure involves a sample-splitting method called cross-fitting [e.g.,  \citet{chernozhukov2018debiased}]. First, we split the data into $L$ roughly equal-sized folds, where $L>1$. Then, for every observation, we use a ML method and predict nuisance functions $\hat \gamma_y^{(w)}(X_i)$ by training on data from treatment group $w$, excluding data points from the fold the observation belongs to.  This ensures that the observation and the nuisance estimates are independent. Finally, we form a point estimate of $F_{Y(w)}(y)$ by plugging in estimates of the nuisances in \eqref{eq:reg-adj-estimator}. We do this for all $y\in\mathcal Y$ and $w\in\W$. Then we stack the estimators together to get a regression-adjusted estimator $(\hat\theta_y)_{y\in\mathcal Y}$. We discuss the statistical inference in the next section.

\begin{algorithm}[tb]
   \caption{Regression-adjusted estimator}
   \label{alg:reg-adj-estimator}
\begin{algorithmic}
   \STATE {\bfseries Input:} Data $\{(X_i, W_i, Y_i)\}_{i=1}^{n}$ split randomly into $L$ roughly equal-sized folds where $L>1$; $\mathcal{S}$ a supervised learning algorithm

   \FOR{$\ell=1$ {\bfseries to} $L$}
   
   \STATE
   Generate $\hat \gamma_y^{(w)}(X_i)$ predicting $\1_{\{Y_i\leq y\}}$ given $W_i=w$ and $X_i$ for each treatment group $w\in\W$ and each level $y\in \mathcal Y$, by training on data not in fold $\ell$ but in treatment group $w$, using $\mathcal{S}$.
   \ENDFOR
   
   \STATE Compute $\hat F_{Y(w)} (y)$, for each $y\in\mathcal Y$ and $w\in\W$, according to \eqref{eq:reg-adj-estimator}.
   \STATE {\bfseries Result:} Regression-adjusted estimator $(\hat\theta_y)_{y\in\mathcal Y}$.
\end{algorithmic}
\end{algorithm}

\paragraph{Efficiency Gain} 
To illustrate the potential efficiency gain from the proposed regression-adjusted method, consider the scenario where the true conditional distribution function, $\gamma_{y}$, is employed in (\ref{eq:reg-adj-estimator}), leading to the idealized form of the regression-adjusted estimator, denoted by $ \widetilde{\theta}_{y}^{(w)}:= \widetilde{F}_{Y(w)}(y)$. Let $\widetilde{\theta}_{y}:= (\widetilde{\theta}_{y}^{(1)}, \dots,\widetilde{\theta}_{y}^{(K)})^\top$. The following theorem highlights the efficiency improvements of this regression-adjusted estimator in comparison to the empirical distribution function.
As demonstrated in the next section, our estimator asymptotically possess the same efficiency property in terms of variance.

\begin{theorem}
    \label{theorem:efficiency}
    Suppose that $n_{w}/n = \pi_{w} + o(1)$ as $n\to \infty$ for every $w \in \W$. Then, we have \\
    (a)
        for any $w \in \W$ and $y \in \mathcal{Y}$, 
        \begin{align*}
        \Var \big(
        \widehat{\F}_{Y(w)}^{simple}(y)
        \big)
        \ge
        \Var \big(
          \widetilde{\F}_{Y(w)}(y)
        \big) + o(n^{-1}), 
      \end{align*}
       where the equality holds only if $F_{Y(w)|X}(y) = F_{Y(w)}(y)$, \\   
    (b) for any $y \in \Y$,
        \begin{align*}
            \Var
            \big(
            \widehat{\theta}_{y}^{simple}
            \big)
            \succeq
            \Var
            \big(
            \widetilde{\theta}_{y}
            \big) + o(n^{-1}),
        \end{align*}
        where $\succeq$ denotes the positive semi-definiteness. 
        When
        $
               \Var 
     \big (
         F_{Y(w)}(y|X)
         -
         r \cdot 
         F_{Y(w')}(y|X)
         \big )
         > 0 
        $
 for any distinct
$w, w' \in \W$ and $r \in \R$, 
        the positive definite result holds.   

\end{theorem}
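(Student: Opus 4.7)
The plan is to compute the variance and covariance matrices of both estimators exactly up to $o(n^{-1})$ by conditioning on $(W,X)$ and applying the law of total variance, and then to identify the resulting difference matrix with the covariance of an explicit auxiliary random variable. Throughout I abbreviate $\gamma_w := \gamma_y^{(w)}$ and $\theta_w := \theta_y^{(w)} = E[\gamma_w(X)]$.

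For part (a), conditioning $\widehat{\F}^{simple}_{Y(w)}(y)$ on $n_w$ shows it is a sample mean of $n_w$ iid Bernoulli$(\theta_w)$ draws, giving $\Var\bigl(\widehat{\F}^{simple}_{Y(w)}(y)\bigr) = \theta_w(1-\theta_w)\,E[1/n_w] = \theta_w(1-\theta_w)/(n\pi_w) + o(n^{-1})$ (the $E[1/n_w]$ Taylor expansion is standard since $n_w$ concentrates exponentially around $n\pi_w$). For the adjusted estimator I would rewrite
\begin{equation*}
\widetilde{\F}_{Y(w)}(y) = \widehat{\F}^{simple}_{Y(w)}(y) + \widehat{\mu}^{(w)} - \widehat{\mu}_w^{(w)},
\end{equation*}
with $\widehat{\mu}^{(w)} := n^{-1}\sum_i \gamma_w(X_i)$ and $\widehat{\mu}_w^{(w)} := n_w^{-1}\sum_{i:W_i=w}\gamma_w(X_i)$, and condition on $(W,X)$: the within-group adjustment cancels the conditional mean of the empirical CDF, so $E[\widetilde{\F}_{Y(w)}(y)\mid W,X] = \widehat{\mu}^{(w)}$, while $\Var(\widetilde{\F}_{Y(w)}(y)\mid W,X) = n_w^{-2}\sum_{i:W_i=w}\gamma_w(X_i)(1-\gamma_w(X_i))$. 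The law of total variance then yields $\Var(\widetilde{\F}_{Y(w)}(y)) = E[\gamma_w(1-\gamma_w)]/(n\pi_w) + \Var(\gamma_w(X))/n + o(n^{-1})$. Combining this with the identity $\theta_w(1-\theta_w) = E[\gamma_w(1-\gamma_w)] + \Var(\gamma_w(X))$ (another application of the law of total variance, this time to $\1_{\{Y(w)\le y\}}$) gives
\begin{equation*}
\Var\bigl(\widehat{\F}^{simple}_{Y(w)}(y)\bigr) - \Var\bigl(\widetilde{\F}_{Y(w)}(y)\bigr) = \frac{(1-\pi_w)\Var(\gamma_w(X))}{n\pi_w} + o(n^{-1}),
\end{equation*}
which is nonnegative and vanishes iff $\Var(\gamma_w(X)) = 0$, i.e., $F_{Y(w)|X}(y) = F_{Y(w)}(y)$.

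For part (b), the same $(W,X)$-conditioning handles the cross-covariances: for $w\ne w'$, conditionally $\widetilde{\F}_{Y(w)}(y)$ and $\widetilde{\F}_{Y(w')}(y)$ depend on disjoint subsets of the $Y_i$'s, so $\Cov(\widetilde{\F}_{Y(w)},\widetilde{\F}_{Y(w')}\mid W,X) = 0$ and only $\Cov(\widehat{\mu}^{(w)},\widehat{\mu}^{(w')}) = \Cov(\gamma_w,\gamma_{w'})/n$ survives; the simple estimator's off-diagonals are also zero by the same disjointness. Hence the difference matrix $\Delta := n\bigl[\Var(\widehat{\theta}_y^{simple}) - \Var(\widetilde{\theta}_y)\bigr]$ satisfies $\Delta_{ww} = (1-\pi_w)\Var(\gamma_w)/\pi_w$ and $\Delta_{ww'} = -\Cov(\gamma_w,\gamma_{w'})$ up to $o(1)$. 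The key trick is to exhibit $\Delta$ as the covariance of an explicit random variable: set $h_w(X) := \gamma_w(X) - \theta_w$ and
\begin{equation*}
\eta := \sum_{w\in\W} v_w\,\frac{\1_{\{W=w\}}}{\pi_w}\,h_w(X).
\end{equation*}
Using $W\indep X$, one obtains $E[\eta\mid X] = \sum_w v_w h_w(X)$ and $E[\eta^2\mid X] = \sum_w v_w^2 h_w(X)^2/\pi_w$ (cross terms vanish because $\1_{\{W=w\}}\1_{\{W=w'\}} = 0$), hence $v^\top \Delta v = E[\Var(\eta\mid X)] \ge 0$. For strict positive definiteness, if $v^\top\Delta v = 0$ then $\Var(\eta\mid X) = 0$ a.s., forcing $v_w h_w(X)/\pi_w$ to be constant in $w$ for a.e.\ $X$: if some $v_w = 0$ while $v_{w'}\ne 0$, this gives $h_{w'}\equiv 0$, violating the hypothesis at $r=0$; if all $v_w\ne 0$, then for any $w\ne w'$ the quantity $\gamma_w(X) - c\,\gamma_{w'}(X)$ with $c = v_{w'}\pi_w/(v_w\pi_{w'})$ is a.s.\ constant, again violating the hypothesis.

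The main obstacle is guessing the right auxiliary variable $\eta$; once in hand, the PSD claim is immediate from $E[\Var(\eta\mid X)]\ge 0$ and the strict case unfolds mechanically from the stated non-degeneracy assumption. The gap between $1/n_w$ and $1/(n\pi_w)$ is harmless and accounts precisely for the $o(n^{-1})$ slack in the theorem.
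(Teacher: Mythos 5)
Your proposal is correct, and its overall skeleton matches the paper's: compute the $O(n^{-1})$ variance--covariance difference entrywise (diagonal $\frac{1-\pi_w}{\pi_w}\Var(\gamma_y^{(w)}(X))/n$, off-diagonal $-\Cov(\gamma_y^{(w)}(X),\gamma_y^{(w')}(X))/n$), then show the quadratic form is nonnegative, and positive under the stated non-degeneracy condition. The differences are in the mechanics. For the entries, you condition on $(W,X)$ and use the law of total variance (plus $E[1/n_w]=1/(n\pi_w)+o(n^{-1})$), whereas the paper expands $\Var\bigl(\widehat{\F}^{simple}_{Y(w)}(y)-(\widehat{\P}_X^{(w)}-\widehat{\P}_X)\gamma_y^{(w)}\bigr)$ into covariance terms and uses the iterated-expectation identity $\Cov(\1_{\{Y(w)\le y\}},\gamma_y^{(w)}(X))=\Var(\gamma_y^{(w)}(X))$; the formulas agree. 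The genuinely different step is the positive semi-definiteness argument: the paper writes the difference matrix as $E[\mathrm{diag}(h)\,A\,\mathrm{diag}(h)]$ with $h_w=\gamma_y^{(w)}(X)-E[\gamma_y^{(w)}(X)]$ and invokes its Lagrange-identity/Bergstr\"om lemma (using $\sum_w\pi_w=1$) to obtain the explicit pairwise representation $\tfrac12\sum_{w\neq w'}\Var\bigl(v_w\gamma_y^{(w)}(X)\pi_{w'}-v_{w'}\gamma_y^{(w')}(X)\pi_w\bigr)/(\pi_w\pi_{w'})$, from which the positive-definiteness condition is read off directly; you instead identify $v^{\top}\Delta v=E[\Var(\eta\mid X)]$ for the explicit variable $\eta=\sum_w v_w\1_{\{W=w\}}h_w(X)/\pi_w$, which makes nonnegativity immediate and has a nice semiparametric interpretation (the quadratic form is exactly the variance contributed by not exploiting $\gamma_y$), at the cost of a short case analysis ($v_w=0$ versus all $v_w\neq 0$, the former using the hypothesis at $r=0$) to recover the stated positive-definiteness condition --- and your case analysis is carried out correctly. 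Two small points worth tightening but not gaps: the claim that the simple estimator's off-diagonal covariances vanish needs the one-line extra observation that, conditioning on $W$ alone, the two group empirical CDFs use disjoint i.i.d.\ blocks of $(X_i,Y_i)$ and have constant conditional means $\theta_w,\theta_{w'}$ (the conditional means given $(W,X)$ are not constant); and the $E[1/n_w]$ step should formally exclude the exponentially unlikely event $n_w=0$, a looseness the paper shares.
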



Theorem \ref{theorem:efficiency}(a) shows the efficiency gains achieved by applying regression adjustment to distribution functions. Furthermore, Theorem \ref{theorem:efficiency}(b) elaborates on these gains in terms of a vector of regression-adjusted estimators, indicating a marked improvement in the precision of the estimator for the DTE as a special case.

\section{Asymptotic distribution} \label{sec:asymptotic}

In this section, we derive the asymptotic distribution of the regression-adjusted estimator. These results are built upon the functional central limit theorem, functional delta method and other related results from \citet{belloni2017program}.
\paragraph{Additional Notation} We introduce some additional notations to state our results. For a vector $a=(a_1, \dots, a_p)^{\top}\in\mathbb R^p$, $\|a\|= \sqrt{a^{\top}a}$ denotes the Euclidean norm of $a$. Let $\ell^{\infty}(\mathcal Y)$ be the space of uniformly bounded functions mapping an arbitrary index set $\Y$ to the real line; $UC(\Y)$ be the space of uniformly continuous functions mapping  $\Y$ to the real line. $\mathbb G_{n,P} f$ denotes the empirical process $\sqrt{n}\sum_{i=1}^{n}(f(Z_i)-\int f(z)dP(z))$; but we will omit $P$ and simply write $\mathbb G_{n} f$. Let $\mathcal P_n$ denote the set of probability measures, that is weakly increasing in $n$, i.e., $\mathcal P_n \subseteq \mathcal P_{n+1}$. We use $\rightsquigarrow$ to denote the convergence in distribution or law. Lastly, let $\mathbb G_P$ denote the P-Brownian bridge, as defined in Appendix Section \ref{sec:empirical-process}.

The following theorem shows that under regularity conditions, stated fully in Appendix \ref{app:regularity-conditions}, our estimator $(\hat{\theta}_y)_{y\in\mathcal Y}$ is asymptotically Gaussian. Since we employ cross-fitting, the conditions required for the estimation of nuisance functions become much milder compared to not using any data-splitting. It is required that the estimators of nuisance functions attain sufficiently rapid rates of convergence $\tau_n$, in particular $\tau_n = o(n^{-1/4})$ in smooth problems. 

\begin{theorem}[Uniform Functional Central Limit Theorem]\label{theorem:uniform-clt}
Suppose Assumption \ref{ass: S1}, \ref{ass: S2} and \ref{ass: AS} hold. Then, for an estimator $(\hat{\theta}_y)_{y\in\mathcal Y}$ that is defined in Algorithm \ref{alg:reg-adj-estimator}, 
\begin{align*}
    \sqrt{n}(\hat\theta_y -\theta_y)_{y\in\mathcal Y} = (\mathbb G_n  {\psi_y})_{y\in\mathcal Y} +o_p(1)
\end{align*}
in $\ell^{\infty}(\mathcal Y)^K$ uniformly in $P\in\mathcal P_n$, where 
\begin{align*}
Z_{n,P} := (\mathbb G_n  \psi_y)_{y\in\mathcal Y} \rightsquigarrow Z_P:= (\mathbb G_P  \psi_y)_{y\in\mathcal Y} 
\end{align*}
in $\ell^{\infty}(\mathcal Y)^K$ uniformly in $P\in\mathcal P_n$, where the paths of $y \mapsto \mathbb G_p  {\psi_y}$ are a.s. uniformly continuous on a semi-metric space $(\mathcal Y, d_{\mathcal Y})$ and 
\begin{align*}
    & \sup_{P\in\mathcal P_n} E_P \sup_{y\in\mathcal Y} \|\mathbb G_P  {\psi_y}\| < \infty,  \\
    & \lim_{\delta \rightarrow 0} \sup_{P\in\mathcal P_n} E_P \sup_{d_{\mathcal Y}(y,   y') \leq \delta}\|\mathbb G_P  {\psi_y} - \mathbb G_P  {\psi}_{y'} \|=0.
\end{align*}
\end{theorem}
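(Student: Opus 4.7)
}

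The plan is to start from the sample moment condition in Lemma \ref{lemma:sample-moment} and produce a linearization of $\sqrt{n}(\hat{\theta}_y - \theta_y)$ that is uniform in $y \in \mathcal{Y}$ (and uniform in $P \in \mathcal{P}_n$), then invoke a functional central limit theorem in the spirit of \citet{belloni2017program}. Specifically, because the moment function $\psi_y^{(w)}$ is linear in $\theta_y^{(w)}$ with derivative $-1$, the exact identity
\begin{align*}
\sqrt{n}(\hat{\theta}_y - \theta_y) \;=\; \sqrt{n}\,\mathbb{E}_n \psi_y(Z;\theta_y, \hat{\gamma}_y)
\end{align*}
holds with no Taylor remainder in $\theta$, so the whole problem reduces to controlling the plugged-in nuisance. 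I would then decompose
\begin{align*}
\sqrt{n}\,\mathbb{E}_n \psi_y(Z;\theta_y, \hat{\gamma}_y)
\;=\; \mathbb{G}_n \psi_y(\cdot;\theta_y,\gamma_y)
\;+\; \mathbb{G}_n\bigl(\psi_y(\cdot;\theta_y,\hat{\gamma}_y) - \psi_y(\cdot;\theta_y,\gamma_y)\bigr)
\;+\; \sqrt{n}\,\bigl(E[\psi_y(Z;\theta_y,\hat{\gamma}_y)\mid \hat{\gamma}_y] - E[\psi_y(Z;\theta_y,\gamma_y)]\bigr),
\end{align*}
where the last expectation is taken over $Z$ only, with $\hat{\gamma}_y$ held fixed at its value on the auxiliary fold. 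The leading term is the target process; the middle term is a stochastic equicontinuity term; the last is a bias term, controlled by Neyman orthogonality.

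The first reduction is to handle the bias term via Lemma \ref{lemma:neyman-orthogonality}. A second-order Taylor expansion of the map $t \mapsto E[\psi_y(Z;\theta_y,t)]$ around $t = \gamma_y$ has vanishing first derivative by orthogonality, so the remainder is controlled by the squared $L_2(P)$-error of $\hat{\gamma}_y$. The convergence-rate hypothesis $\tau_n = o(n^{-1/4})$ in Assumption \ref{ass: AS} then gives a $\sqrt{n}\cdot o(n^{-1/2}) = o(1)$ bound, uniformly in $y$, as long as one has a uniform version of the rate (which I would assume is exactly the content of the smoothness/rate part of the regularity conditions). For the stochastic equicontinuity term I would use cross-fitting: on each fold, $\hat{\gamma}_y$ is built from an independent auxiliary subsample, so conditionally on that subsample the empirical process is indexed by a single (data-determined) deterministic function, and a conditional Chebyshev/maximal-inequality argument gives that its size is controlled by the $L_2(P)$-norm of $\psi_y(\cdot;\theta_y,\hat{\gamma}_y) - \psi_y(\cdot;\theta_y,\gamma_y)$, which vanishes at rate $\tau_n$. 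Taking the $\sup$ over $y \in \mathcal{Y}$ requires a bracketing/covering entropy argument for the class $\{\psi_y(\cdot;\theta_y,\gamma_y) : y \in \mathcal{Y}\}$, which should be a Donsker class by Assumption \ref{ass: S2} (the indicators $\1\{Y \le y\}$ form a VC class, and $\gamma_y^{(w)}$ inherits regularity from the conditional distribution function being monotone in $y$).

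Once the linearization
\begin{align*}
\sqrt{n}(\hat{\theta}_y - \theta_y)_{y\in\mathcal{Y}} = (\mathbb{G}_n \psi_y)_{y\in\mathcal{Y}} + o_p(1)
\end{align*}
is established in $\ell^\infty(\mathcal{Y})^K$ uniformly in $P \in \mathcal{P}_n$, the second half of the statement is the weak convergence $\mathbb{G}_n \psi_y \rightsquigarrow \mathbb{G}_P \psi_y$, which I would obtain by verifying that $\{\psi_y : y \in \mathcal{Y}\}$ is a uniform Donsker class under $\mathcal{P}_n$. Finite-dimensional convergence follows from the ordinary multivariate CLT applied to each finite collection of indices $y_1,\dots,y_m$, with the covariance determined by $E[\psi_y \psi_{y'}^\top]$. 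Asymptotic tightness, together with the claimed uniform continuity of sample paths and the $\sup$-envelope moment conditions, follows from uniform entropy bounds on the class, exactly along the lines of Theorem B.1 (or the uniform functional CLT) in \citet{belloni2017program}; this is where I would cite their machinery rather than redo it.

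The main obstacle, in my view, is achieving all of the above \emph{uniformly in $y \in \mathcal{Y}$ and uniformly in $P \in \mathcal{P}_n$} simultaneously. The pointwise-in-$y$ version of the argument is essentially the standard double/debiased machine learning story of \citet{chernozhukov2018debiased}; the nontrivial extra work is to promote that pointwise result to a process-level statement. Concretely, one needs (i) a uniform (in $y$) rate $\tau_n$ on $\hat{\gamma}_y$, (ii) a uniform (in $y$) stochastic equicontinuity bound on the cross-fitted remainder, and (iii) uniform entropy/Donsker control of the orthogonal score class $\{\psi_y\}_{y\in\mathcal{Y}}$ across $P \in \mathcal{P}_n$. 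Each of these is plausible under Assumptions \ref{ass: S1}--\ref{ass: AS}, but glueing them together with the cross-fitting construction is the technical heart of the proof, and is where I expect the book-keeping to be heaviest.
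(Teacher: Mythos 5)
Your proposal is correct in substance and shares the paper's backbone — Neyman orthogonality to kill the first-order nuisance term, maximal-inequality/entropy control of the empirical-process remainder under Assumptions \ref{ass: S2}--\ref{ass: AS}, and an appeal to the uniform-in-$P$ Donsker theorem of \citet{belloni2017program} for the weak convergence — but the execution differs from the paper's in two ways. First, the paper does not exploit the linearity of $\psi_y$ in $\theta_y$: it runs the generic approximate Z-estimation argument, with a preliminary uniform rate step ($\sup_y\|\hat\theta_y-\theta_y\|\lesssim\tau_n$ via the identifiability condition in Assumption \ref{ass: S1}(iii)), a Taylor expansion jointly in $(\theta,\gamma)$ whose second-order remainder $II_1$ is bounded by $\sqrt{n}\tau_n^2=o(1)$, and a separate step showing $\inf_\theta\sqrt{n}\|\En\psi_y(Z;\theta,\hat\gamma_y)\|=o_P(1)$. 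Your linear-in-$\theta$ shortcut bypasses all of that and is a genuine simplification for this particular score; indeed you could go further, since $\psi_y$ is also affine in $\gamma$ and $W\indep X$, so your ``bias'' term is exactly zero rather than merely second order. Second, you control the stochastic-equicontinuity term by conditioning on the auxiliary fold (the standard cross-fitting device), whereas the paper controls it unconditionally, via Lemma \ref{lemma:CCK} applied to the realization class $\mathcal{F}_2$ built from $\mathcal{G}_{yn}$ in Assumption \ref{ass: AS}, with $\sigma\lesssim\tau_n^{\alpha/2}$; both routes work, and your conditional argument still needs exactly the entropy-in-$y$ control you acknowledge, which is what Assumptions \ref{ass: S2}(i) and \ref{ass: AS} supply.

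One small inaccuracy: the estimator of Algorithm \ref{alg:reg-adj-estimator} solves the sample moment condition with $\pi_w$ replaced by $n_w/n$ (Lemma \ref{lemma:sample-moment}), so your ``exact identity'' $\sqrt{n}(\hat\theta_y-\theta_y)=\sqrt{n}\,\En\psi_y(Z;\theta_y,\hat\gamma_y)$ does not hold exactly when $\psi_y$ is defined with the true $\pi_w$, as in the theorem. The discrepancy is $(\pi_w^{-1}-\hat\pi_w^{-1})\En[\1_{\{W=w\}}(\1_{\{Y\le y\}}-\hat\gamma_y^{(w)}(X))]$, which is $o_P(n^{-1/2})$ uniformly in $y$ (the second factor is itself $O_P(n^{-1/2}+\tau_n)$ uniformly, which again requires a maximal-inequality argument), so your conclusion survives; the paper absorbs this by treating $\hat\theta_y$ as an $\epsilon_n$-approximate solution with $\epsilon_n=o(n^{-1/2})$. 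Make that step explicit and your argument is complete.
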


Then, as a special case of the above theorem, for fixed $y\in\Y$, we have pointwise asymptotic normality, stated as $\sqrt{n} (\hat\theta_y -\theta_y) \rightsquigarrow N(0, \Var(\psi_y))$. Note that, $\Var(\psi_y)$ can be consistently estimated via sample moment conditions using cross-fitting as well. The estimate of the asymptotic variance can then be used to construct the confidence intervals in a usual manner.

\paragraph{Functionals of $\theta$}
\begin{figure*}[!htbp]
\vskip 0.2in
\begin{center}
\includegraphics[width=0.8\textwidth]{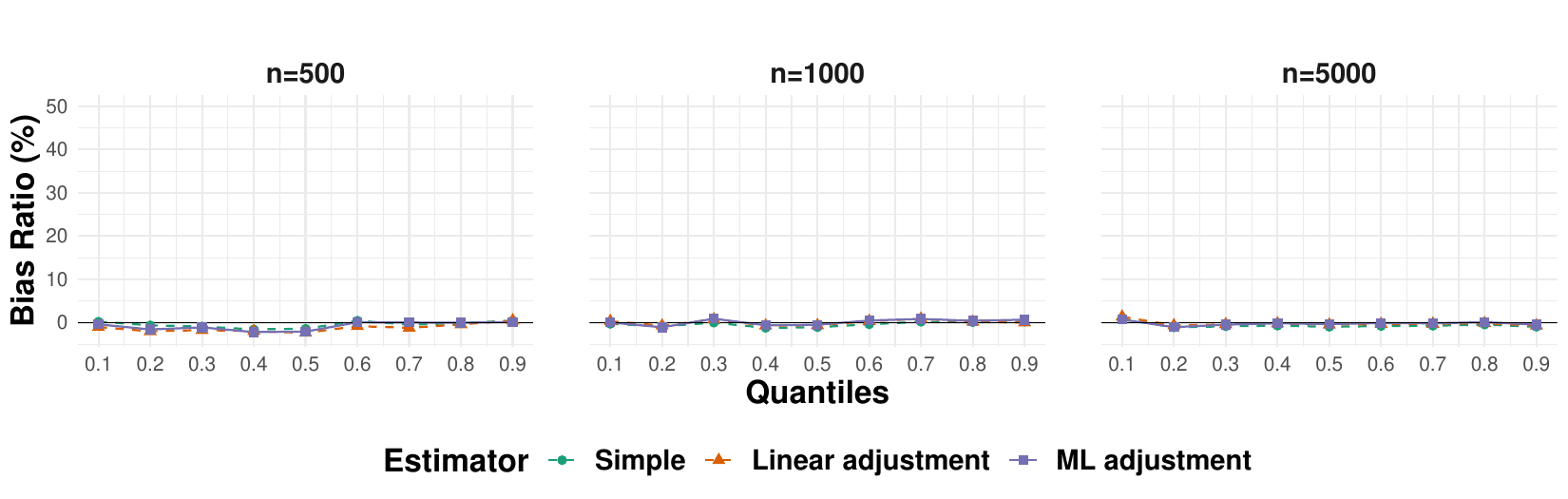}
\includegraphics[width=0.8\textwidth]{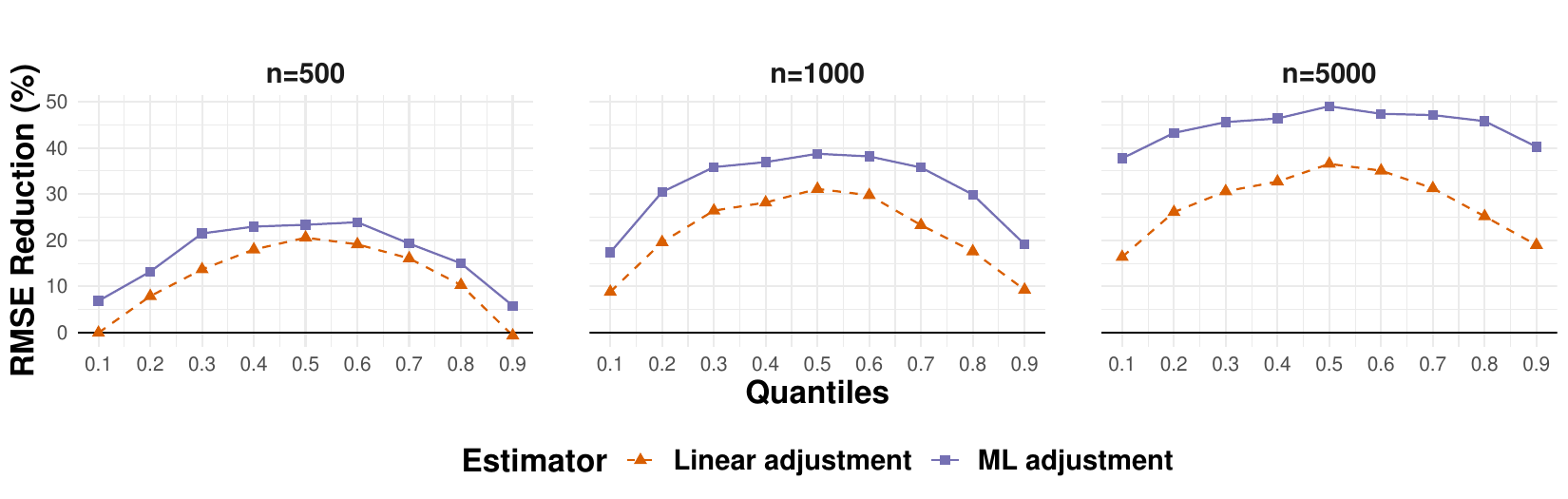}
\caption{Bias (top figure), as a \% of true value, of different DTE estimators and RMSE reduction in \% (bottom figure) of adjusted estimators compared to simple DTE estimator, under sample sizes $\{500, 1000, 5000\}$, calculated over 1000 simulations. The simple estimator is calculated from empirical distribution functions. The regression-adjusted estimators (linear adjustment and ML adjustment based on LASSO) are implemented using 5-fold cross-fitting.}
\label{fig:bias}
\end{center}
\vskip -0.2in
\end{figure*} 

The parameters we are interested in are functionals of potential outcome distributions. The examples include the DTE, the PTE and the QTE we discussed in Section \ref{sec:setup}.  If, for instance, we are interested in the DTE between treatments $1$ and $2$, for each $y\in\mathcal Y$, we can calculate it as $(1,-1, 0, \dots, 0) \theta_y$. Let $\phi(\theta^0)=\phi((\theta_y)_{y\in\mathcal Y})$. The following theorem shows the large sample law of the plug-in estimator $\phi(\hat \theta):= \phi((\hat \theta_y)_{y\in\mathcal Y})$. For complex objects, the inference can be facilitated by bootstrap. The validity of a multiplier bootstrap method \cite{gine1984some} is also shown in the theorem below.

\begin{theorem}[Uniform Limit Theory and Validity of Multiplier Bootstrap for Smooth Functionals of $\theta$] \label{thm:uniform-bootstrap} Suppose that for each $P \in  \mP:= \cup_{n \geq n_0} \mP_n$,  $\theta^0= \theta^0_P$ is an element of a compact set
 $\mathbb{D}_{\theta}$.
Let $\phi: \D_{\phi} \subset\ell ^{\infty}(\mathcal Y)^{K} \longmapsto \ell^{\infty}(\mathcal Q)$ be Hadamard-differentiable uniformly in $ \theta \in \D_{\theta} \subset \D_{\phi}$ tangentially to $UC(\mathcal Y)^{K}$ with derivative map  $\phi_{\theta}^{\prime}$. Then,
\begin{align*}
\sqrt{n}(\phi(\hat \theta) - \phi(\theta^0)) \rightsquigarrow T_P:= \phi'_{\theta^0_P}(Z_P) \text{ in } \ell^{\infty}(\mathcal Q), 
\end{align*}
uniformly in $P\in\mathcal P_n$, where $T_P$ is a zero mean tight Gaussian process, for each $P\in\mathcal P$. Moreover, 
\begin{align*}
 \sqrt{n}(\phi(\hat\theta^*) - \phi(\hat \theta)) \rightsquigarrow_B T_P \text{ in } \ell^{\infty}(\mathcal Q), 
\end{align*}
uniformly in $P\in\mathcal P_n$.
\end{theorem}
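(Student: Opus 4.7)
The plan is to apply a uniform-in-$P$ functional delta method twice: once to the data process $\sqrt{n}(\hat\theta - \theta^0)$ to obtain the limit distribution, and once to the multiplier-bootstrap increment $\sqrt{n}(\hat\theta^* - \hat\theta)$ to obtain the bootstrap approximation.

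\emph{First claim.} Theorem \ref{theorem:uniform-clt} yields $\sqrt{n}(\hat\theta - \theta^0) \rightsquigarrow Z_P = (\mathbb{G}_P \psi_y)_{y \in \mathcal{Y}}$ in $\ell^\infty(\mathcal{Y})^K$ uniformly in $P \in \mathcal{P}_n$, with sample paths of $Z_P$ almost surely in $UC(\mathcal{Y})^K$. Since $\theta^0_P$ lies in the compact set $\mathbb{D}_\theta$ for every $P$ and $\phi$ is Hadamard-differentiable uniformly in $\theta \in \mathbb{D}_\theta$ tangentially to $UC(\mathcal{Y})^K$, the uniform functional delta method of \citet{belloni2017program} gives
\begin{equation*}
\sqrt{n}\bigl(\phi(\hat\theta) - \phi(\theta^0)\bigr) = \phi'_{\theta^0_P}\bigl(\sqrt{n}(\hat\theta - \theta^0)\bigr) + o_P(1)
\end{equation*}
uniformly in $P\in\mathcal{P}_n$. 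Combined with the FCLT for $\sqrt{n}(\hat\theta - \theta^0)$ this produces $\sqrt{n}(\phi(\hat\theta) - \phi(\theta^0)) \rightsquigarrow \phi'_{\theta^0_P}(Z_P) =: T_P$ in $\ell^\infty(\mathcal{Q})$. Tightness and zero-mean Gaussianity of $T_P$ are inherited from $Z_P$ via the continuous linear map $\phi'_{\theta^0_P}$.

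\emph{Bootstrap claim.} Introduce i.i.d. multiplier weights $\{\xi_i\}_{i=1}^n$ independent of the data with $E[\xi_i] = 0$, $\mathrm{Var}(\xi_i) = 1$ and sufficient exponential moments, and define the bootstrap increment by
\begin{equation*}
\sqrt{n}\bigl(\hat\theta_y^* - \hat\theta_y\bigr) = \frac{1}{\sqrt{n}}\sum_{i=1}^n \xi_i\, \psi_y(Z_i;\hat\theta_y,\hat\gamma_y).
\end{equation*}
Cross-fitting combined with Neyman orthogonality (Lemma \ref{lemma:neyman-orthogonality}) allows us to substitute $(\theta_y,\gamma_y)$ for $(\hat\theta_y,\hat\gamma_y)$ inside the sum at cost $o_P(1)$, uniformly in $y \in \mathcal{Y}$ and $P \in \mathcal{P}_n$, by a direct parallel of the stochastic-expansion argument underlying Theorem \ref{theorem:uniform-clt}. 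A uniform-in-$P$ multiplier central limit theorem, as in \citet{belloni2017program}, then yields $\sqrt{n}(\hat\theta^* - \hat\theta) \rightsquigarrow_B Z_P$ in $\ell^\infty(\mathcal{Y})^K$. Applying the functional delta method for the multiplier bootstrap (again in its uniform form) delivers $\sqrt{n}(\phi(\hat\theta^*) - \phi(\hat\theta)) \rightsquigarrow_B \phi'_{\theta^0_P}(Z_P) = T_P$ in $\ell^\infty(\mathcal{Q})$ uniformly in $P \in \mathcal{P}_n$.

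\emph{Main obstacle.} The substantive difficulty is promoting both delta-method steps from pointwise-in-$P$ to uniform-in-$P$ statements. This requires equicontinuity of $\theta \mapsto \phi'_\theta$ over the compact set $\mathbb{D}_\theta$ and $o_P(1)$-control of the linearization remainders that is itself uniform over $\mathcal{P}_n$, both for the data process and for the bootstrap. For the bootstrap there is an additional layer: the score is evaluated at the estimated nuisance $\hat\gamma_y$, and showing that this perturbation is absorbed at rate $o_P(1)$ is exactly where Neyman orthogonality and cross-fitting are indispensable. Once these uniform approximations are in place, the theorem follows from the continuous mapping theorem and the bootstrap delta method of \citet{belloni2017program}.
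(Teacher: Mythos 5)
Your proposal is correct and follows essentially the same route as the paper: Theorem \ref{theorem:uniform-clt} plus the uniform functional delta method for the first claim, and a linearization of the multiplier process to $(\mathbb{G}_n\,\xi\psi_y)_{y\in\mathcal Y}$ followed by the uniform multiplier-bootstrap CLT and the bootstrap delta method (Theorems \ref{lemma: uniform Donsker for bootstrap}, \ref{thm: delta-method}, \ref{theorem:delta-method-bootstrap}) for the second. The only nuance is that the paper's bootstrap linearization does not invoke Neyman orthogonality: because the multipliers are mean-zero and independent of the data, the substitution of $(\theta_y,\gamma_y)$ for $(\hat\theta_y,\hat\gamma_y)$ is controlled purely by the maximal inequality (Lemma \ref{lemma:CCK}) over the shrinking class $\xi\mathcal{F}_3$, with orthogonality needed only for the data process in Theorem \ref{theorem:uniform-clt}.
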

Here $\rightsquigarrow_B$ denotes weak convergence of the bootstrap law in probability, as defined in Appendix \ref{app:asymptotics}. $\phi(\hat\theta^*)=\phi((\hat \theta^*)_{y\in\mathcal Y})$ is the bootstrap version of $\phi(\hat \theta) $, and $\hat\theta_y^*=\hat\theta_y + n^{-1} \sum_{i=1}^{n} \xi_i\psi_y(Z_i; \hat\theta_y, \hat\gamma_y)$ is the multiplier bootstrap version of $\hat\theta_y$. More details about the multiplier bootstrap procedure to obtain pointwise and uniform confidence bands can be found in the Appendix \ref{app:multiplier-bootstrap}. The assumption of Hadamard differentiability is imposed so that we can use the delta method. The formal definition can be found in the Appendix \ref{app:asymptotics}.

\section{Empirical results} \label{sec:empirical}
In this section, we compare our regression-adjusted estimators to simple estimators in two types of experiments. In the first experiment, we use a synthetic dataset to assess the performance of our proposed method in finite samples. For the second experiment, we reanalyze data from a randomized experiment, conducted by \citet{ferraro2013using}, to compare the methods using real-world data.

\subsection{Simulation Study} \label{subsec:simulation}
\begin{figure*}[!htbp]
\vskip 0.2in
\begin{center}
\includegraphics[width=0.8\columnwidth]{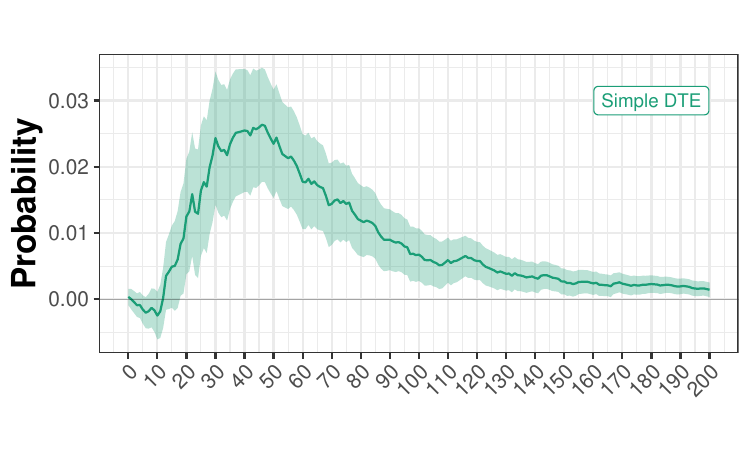}
\includegraphics[width=0.8\columnwidth]{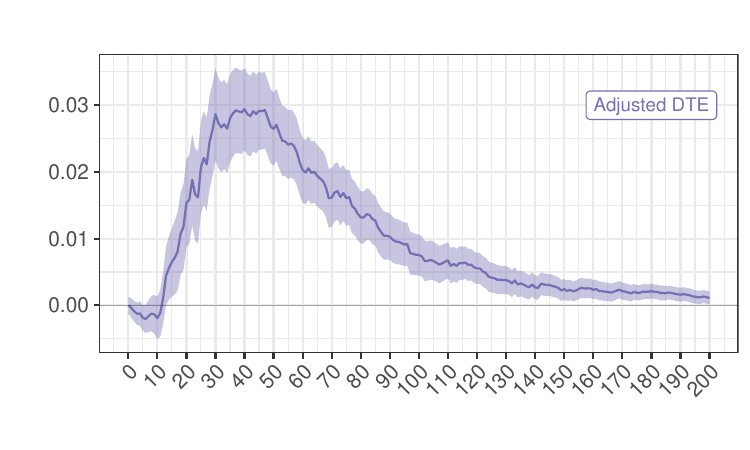}
\includegraphics[width=0.8\columnwidth]{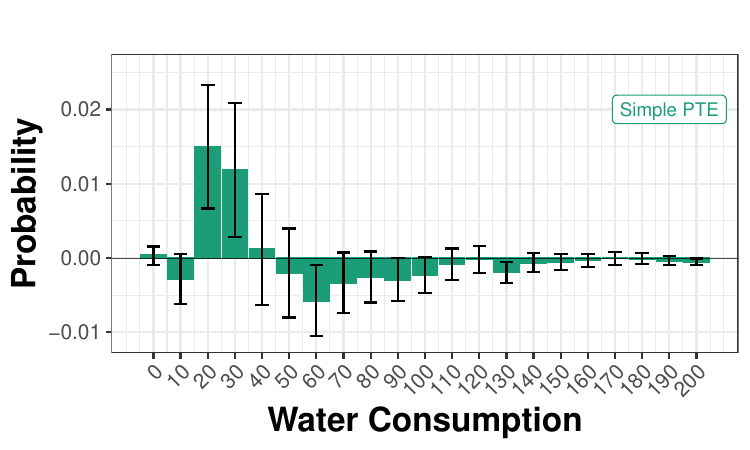}
\includegraphics[width=0.8\columnwidth]{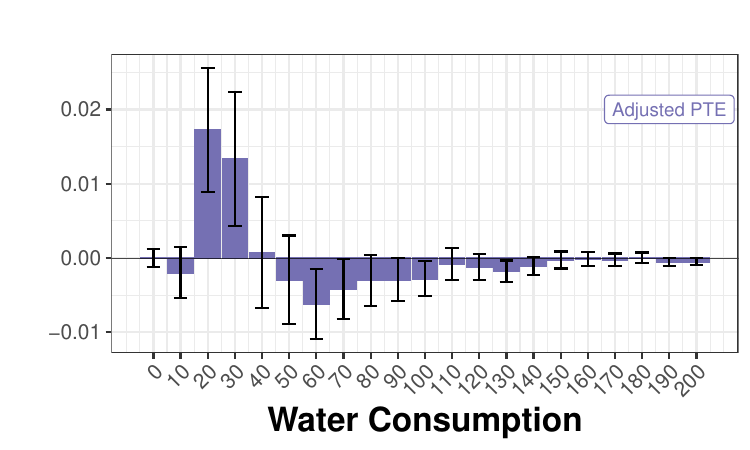}
\caption{Distributional Treatment Effect (DTE) and Probability Treatment Effect (PTE) of a nudge (Strong Social Norm vs. Control) on water consumption (in thousands of gallons). The top left figure  represents the simple DTE; the top right figure depicts the regression-adjusted DTE, computed for $y\in\{0,1,2,\dots, 200\}$. The bottom left figure represents the simple PTE; the bottom right figure represents the regression-adjusted PTE, computed for $y\in\{0,10,20,\dots, 200\}$ and $h=10$. The regression adjustment is implemented via gradient boosting with 10-fold cross-fitting. The shaded areas and error bars represent the 95\% pointwise confidence intervals. $n=78,500$.}
\label{fig:water-dte}
\end{center}
\vskip -0.2in
\end{figure*}
We conduct Monte Carlo simulation study to evaluate the performance of our adjusted estimators in finite samples. We compare the simple estimator with two regression-adjusted estimators - linear adjustment and ML adjustment. The simple estimator is based on empirical distribution functions. The regression-adjusted estimators are calculated according to the procedure in Algorithm \ref{alg:reg-adj-estimator} with 5 folds. For the linear adjustment, we use a linear regression for estimating $\hat \gamma_y$. For the ML adjustment, we use logistic LASSO for estimating $\hat \gamma_y$.

In this experiment, we generate i.i.d. sample of size $n \in \{500, 1000, 5000\}$ with covariates, a binary treatment and a continuous outcome. We design the data generating process such that the half of covariates are irrelevant to the outcome. Appendix \ref{sec:app_experiment} contains more details about the data generating process and describes the evaluation metrics. 

The top figure of Figure \ref{fig:bias} plots the bias of these estimators as a \% of the true values of the DTE, across different quantiles under the sample sizes we consider. We confirm that the bias is small for all estimators across all quantiles. Even when sample size is small ($n=500$), the bias is at most $2\%$. This is as expected since all estimators are unbiased estimators of the distribution functions and hence the DTE.

Next, we turn to the RMSE. The bottom figure of Figure \ref{fig:bias} plots the RMSE reduction in $\%$ terms for the linear and ML adjustment, compared to the simple estimator. We see that the linear adjustment and ML adjustment estimators yield smaller RMSE compared to the simple estimator across all quantiles. Moreover, we see that ML adjustment outperforms linear adjustment in all cases. Specifically, for $n=5000$, RMSE reduction over the simple estimator is around 40-50\% for the ML adjusted estimator, while it is between roughly 15-35\% for the linearly adjusted estimator. This is as expected because our data generating process consists of variables that are irrelevant to the outcome, and so ML adjustment better captures the relationship between the outcome and covariates compared to the linear model. Improved prediction quality for $\hat \gamma_y$ results in more variance reduction for the DTE.

\subsection{Nudges to reduce water consumption}
We reanalyze data from a randomized experiment conducted by \citet{ferraro2013using} in 2007, to examine the effect of norm-based messages or nudges on water usage in Cobb County, Atlanta, Georgia. Three different interventions to reduce water usage were implemented and compared to the control group (no nudge). 
\citet{list2022using} re-estimate the regression-adjusted ATE for the intervention called ``strong social norm'' that combines prosocial appeal and social comparison (the strongest intervention) relative to the control group. We extend their analysis by estimating the regression-adjusted DTE and PTE of this intervention over the control group ($W\in\{0,1\}$) using the same pre-treatment covariates $X$, which is monthly water consumption in the year prior to the experiment. Thus, the dimension of the covariate space $\mathcal X$ is $d_x=12$. The outcome variable $Y$ is level of water consumption from June to September of 2007. The unit of our outcome variable is in thousands of gallons and is discretely distributed.
Note that although the measure in gallons appears to be approximately continuous in practice, the presence of subtle discreteness can create problems for both theoretical and practical statistical inference. Thus, the QTE is not  applicable here. The results for the other two treatments - ``technical advice'' and ``weak social norm'' - relative to the control group are summarized in Appendix \ref{sec:water-nudge}.

Figure \ref{fig:water-dte} plots the DTE and the PTE of the intervention compared to the control group. We compute the DTE for $y\in\{0,1,2,\dots, 200\}$. Figure \ref{fig:water-dte} top left represents the simple estimate of the DTE, whereas the top right figure depicts the regression-adjusted estimate of the DTE. For regression adjustment, we estimate the conditional distribution functions $\hat \gamma_y$ via gradient boosting using 10-fold cross-fitting. The shaded areas represent the 95\% pointwise confidence intervals. We can see that the regression-adjusted DTE has substantially smaller variance and hence tighter confidence intervals compared to the simple DTE, especially between 15 and 110. Based on the regression-adjusted DTE results, we see that the DTE is increasing up until 40-50 and starts declining after that. We can draw conclusions about how the outcome distributions differ under treatment and control, keeping in mind this is differences in cumulative distributions.

More intuitive measure of this distribution change is the PTE. Figure \ref{fig:water-dte} bottom left represents the simple PTE, while the bottom right represents the regression-adjusted PTE, with the 95\% pointwise confidence intervals. We show the PTE in increments of $h=10$ for $y\in\{0, 10, 20,\dots, 200\}$. We see that the treatment is effective in that it reduces the probability of high water consumption and increases the probability of low water consumption. Specifically, the results from the regression-adjusted PTE indicate increase in water usage in the range of 20-40; decrease in water usage in the range of 60-110 (with a minor exception that within the range (80, 90], which is represented by point 80 in the graph, the confidence interval exceeds 0 by only a little). The variance reduction is especially helpful at the range (70, 80]. The probability change for the range (70,80] is not significant under simple estimates, while it is significantly negative under regression adjustment.

\section{Conclusion} \label{sec:conclusion}
We provide a novel regression adjustment method to estimate various measures of distributional treatment effects to capture heterogeneity. Our framework accommodates high-dimensional setup with many pre-treatment covariates and offers flexible modeling by incorporating machine learning techniques for the regression adjustment. 

Some limitations of our method are as follows. Firstly, we consider a setting where we have an experimental data with perfect compliance and no interference (no network or peer effects). While suitable for some applications, these assumptions may prove restrictive in other contexts. Secondly, our approach relies on the presence of pre-treatment covariates highly predictive of the outcome. While we enhance variance reduction compared to linear regression by employing flexible machine learning methods to improve prediction quality, substantial variance reduction may not occur if the covariates lack high-quality information. Thirdly, we focus on a setup where experimental data is already collected, neglecting opportunities to incorporate variance reduction strategies at the design stage of the experiment. These limitations suggest avenues for future research on distributional analysis that incorporates these concerns.

\section*{Acknowledgements}
We extend our gratitude to the four anonymous reviewers and the program chairs for their insightful comments and discussions, which significantly enhanced the quality of this paper. We also appreciate the feedback provided by Hiroki Yanagisawa and Shuting Wu. Additionally, Oka acknowledges the financial support from JSPS KAKENHI Grant Number 24K04821.

\section*{Impact Statement}
This paper presents work whose goal is to advance the field of Machine Learning. There are many potential societal consequences of our work, none which we feel must be specifically highlighted here.

\clearpage
\bibliography{dte.bib}

\begin{thebibliography}{84}
\providecommand{\natexlab}[1]{#1}
\providecommand{\url}[1]{\texttt{#1}}
\expandafter\ifx\csname urlstyle\endcsname\relax
  \providecommand{\doi}[1]{doi: #1}\else
  \providecommand{\doi}{doi: \begingroup \urlstyle{rm}\Url}\fi

\bibitem[Abadie(2002)]{abadie2002bootstrap}
Abadie, A.
\newblock Bootstrap tests for distributional treatment effects in instrumental variable models.
\newblock \emph{Journal of the American statistical Association}, 97\penalty0 (457):\penalty0 284--292, 2002.

\bibitem[Abadie et~al.(2002)Abadie, Angrist, and Imbens]{abadie2002instrumental}
Abadie, A., Angrist, J., and Imbens, G.
\newblock Instrumental variables estimates of the effect of subsidized training on the quantiles of trainee earnings.
\newblock \emph{Econometrica}, 70\penalty0 (1):\penalty0 91--117, 2002.

\bibitem[Alaa \& Van Der~Schaar(2017)Alaa and Van Der~Schaar]{alaa2017bayesian}
Alaa, A.~M. and Van Der~Schaar, M.
\newblock Bayesian inference of individualized treatment effects using multi-task gaussian processes.
\newblock \emph{Advances in neural information processing systems}, 30, 2017.

\bibitem[Andrews(1994{\natexlab{a}})]{andrews1994asymptotics}
Andrews, D.~W.
\newblock Asymptotics for semiparametric econometric models via stochastic equicontinuity.
\newblock \emph{Econometrica: Journal of the Econometric Society}, pp.\  43--72, 1994{\natexlab{a}}.

\bibitem[Andrews(1994{\natexlab{b}})]{andrews1994empirical}
Andrews, D.~W.
\newblock Empirical process methods in econometrics.
\newblock \emph{Handbook of econometrics}, 4:\penalty0 2247--2294, 1994{\natexlab{b}}.

\bibitem[Athey \& Imbens(2016)Athey and Imbens]{athey2016recursive}
Athey, S. and Imbens, G.
\newblock Recursive partitioning for heterogeneous causal effects.
\newblock \emph{Proceedings of the National Academy of Sciences}, 113\penalty0 (27):\penalty0 7353--7360, 2016.

\bibitem[Athey \& Imbens(2006)Athey and Imbens]{athey2006identification}
Athey, S. and Imbens, G.~W.
\newblock Identification and inference in nonlinear difference-in-differences models.
\newblock \emph{Econometrica}, 74\penalty0 (2):\penalty0 431--497, 2006.

\bibitem[Bakshy et~al.(2014)Bakshy, Eckles, and Bernstein]{bakshy2014designing}
Bakshy, E., Eckles, D., and Bernstein, M.~S.
\newblock Designing and deploying online field experiments.
\newblock In \emph{Proceedings of the 23rd international conference on World wide web}, pp.\  283--292, 2014.

\bibitem[Belloni et~al.(2017)Belloni, Chernozhukov, Fernandez-Val, and Hansen]{belloni2017program}
Belloni, A., Chernozhukov, V., Fernandez-Val, I., and Hansen, C.
\newblock Program evaluation and causal inference with high-dimensional data.
\newblock \emph{Econometrica}, 85\penalty0 (1):\penalty0 233--298, 2017.

\bibitem[Berk et~al.(2013)Berk, Pitkin, Brown, Buja, George, and Zhao]{berk2013covariance}
Berk, R., Pitkin, E., Brown, L., Buja, A., George, E., and Zhao, L.
\newblock Covariance adjustments for the analysis of randomized field experiments.
\newblock \emph{Evaluation review}, 37\penalty0 (3-4):\penalty0 170--196, 2013.

\bibitem[Bickel et~al.(1993)Bickel, Klaassen, Bickel, Ritov, Klaassen, Wellner, and Ritov]{bickel1993efficient}
Bickel, P.~J., Klaassen, C.~A., Bickel, P.~J., Ritov, Y., Klaassen, J., Wellner, J.~A., and Ritov, Y.
\newblock \emph{Efficient and adaptive estimation for semiparametric models}, volume~4.
\newblock Springer, 1993.

\bibitem[Bitler et~al.(2006)Bitler, Gelbach, and Hoynes]{bitler2006mean}
Bitler, M.~P., Gelbach, J.~B., and Hoynes, H.~W.
\newblock What mean impacts miss: Distributional effects of welfare reform experiments.
\newblock \emph{American Economic Review}, 96\penalty0 (4):\penalty0 988--1012, 2006.

\bibitem[Bloniarz et~al.(2016)Bloniarz, Liu, Zhang, Sekhon, and Yu]{bloniarz2016lasso}
Bloniarz, A., Liu, H., Zhang, C.-H., Sekhon, J.~S., and Yu, B.
\newblock Lasso adjustments of treatment effect estimates in randomized experiments.
\newblock \emph{Proceedings of the National Academy of Sciences}, 113\penalty0 (27):\penalty0 7383--7390, 2016.

\bibitem[Callaway \& Li(2019)Callaway and Li]{callaway2019quantile}
Callaway, B. and Li, T.
\newblock Quantile treatment effects in difference in differences models with panel data.
\newblock \emph{Quantitative Economics}, 10\penalty0 (4):\penalty0 1579--1618, 2019.

\bibitem[Callaway et~al.(2018)Callaway, Li, and Oka]{callaway2018quantile}
Callaway, B., Li, T., and Oka, T.
\newblock Quantile treatment effects in difference in differences models under dependence restrictions and with only two time periods.
\newblock \emph{Journal of Econometrics}, 206\penalty0 (2):\penalty0 395--413, 2018.

\bibitem[Chen \& Guestrin(2016)Chen and Guestrin]{Chen:2016:XST:2939672.2939785}
Chen, T. and Guestrin, C.
\newblock {XGBoost}: A scalable tree boosting system.
\newblock In \emph{Proceedings of the 22nd ACM SIGKDD International Conference on Knowledge Discovery and Data Mining}, KDD '16, pp.\  785--794, New York, NY, USA, 2016. ACM.

\bibitem[Chernozhukov \& Hansen(2005)Chernozhukov and Hansen]{chernozhukov2005iv}
Chernozhukov, V. and Hansen, C.
\newblock An iv model of quantile treatment effects.
\newblock \emph{Econometrica}, 73\penalty0 (1):\penalty0 245--261, 2005.

\bibitem[Chernozhukov et~al.(2013)Chernozhukov, Fern{\'a}ndez-Val, and Melly]{chernozhukov2013inference}
Chernozhukov, V., Fern{\'a}ndez-Val, I., and Melly, B.
\newblock Inference on counterfactual distributions.
\newblock \emph{Econometrica}, 81\penalty0 (6):\penalty0 2205--2268, 2013.

\bibitem[Chernozhukov et~al.(2014)Chernozhukov, Chetverikov, and Kato]{chernozhukov2014gaussian}
Chernozhukov, V., Chetverikov, D., and Kato, K.
\newblock Gaussian approximation of suprema of empirical processes.
\newblock \emph{The Annals of Statistics}, 42\penalty0 (4):\penalty0 1564, 2014.

\bibitem[Chernozhukov et~al.(2018)Chernozhukov, Chetverikov, Demirer, Duflo, Hansen, Newey, and Robins]{chernozhukov2018debiased}
Chernozhukov, V., Chetverikov, D., Demirer, M., Duflo, E., Hansen, C., Newey, W., and Robins, J.
\newblock {Double/debiased machine learning for treatment and structural parameters}.
\newblock \emph{The Econometrics Journal}, 21\penalty0 (1):\penalty0 C1--C68, 01 2018.

\bibitem[Chernozhukov et~al.(2019)Chernozhukov, Fernandez-Val, Melly, and W{\"u}thrich]{chernozhukov2019generic}
Chernozhukov, V., Fernandez-Val, I., Melly, B., and W{\"u}thrich, K.
\newblock Generic inference on quantile and quantile effect functions for discrete outcomes.
\newblock \emph{Journal of the American Statistical Association}, 2019.

\bibitem[Chernozhukov et~al.(2022)Chernozhukov, Escanciano, Ichimura, Newey, and Robins]{chernozhukov2022locally}
Chernozhukov, V., Escanciano, J.~C., Ichimura, H., Newey, W.~K., and Robins, J.~M.
\newblock Locally robust semiparametric estimation.
\newblock \emph{Econometrica}, 90\penalty0 (4):\penalty0 1501--1535, 2022.

\bibitem[Chiang et~al.(2023)Chiang, Matsushita, and Otsu]{chiang2023regression}
Chiang, H.~D., Matsushita, Y., and Otsu, T.
\newblock Regression adjustment in randomized controlled trials with many covariates.
\newblock \emph{arXiv preprint arXiv:2302.00469}, 2023.

\bibitem[Cochran(1977)]{cochran1977sampling}
Cochran, W.~G.
\newblock \emph{Sampling techniques}.
\newblock john wiley \& sons, 1977.

\bibitem[Deng et~al.(2013)Deng, Xu, Kohavi, and Walker]{deng2013improving}
Deng, A., Xu, Y., Kohavi, R., and Walker, T.
\newblock Improving the sensitivity of online controlled experiments by utilizing pre-experiment data.
\newblock In \emph{Proceedings of the sixth ACM international conference on Web search and data mining}, pp.\  123--132, 2013.

\bibitem[Ding et~al.(2019)Ding, Feller, and Miratrix]{ding2019decomposing}
Ding, P., Feller, A., and Miratrix, L.
\newblock Decomposing treatment effect variation.
\newblock \emph{Journal of the American Statistical Association}, 114\penalty0 (525):\penalty0 304--317, 2019.

\bibitem[Doksum(1974)]{doksum1974empirical}
Doksum, K.
\newblock Empirical probability plots and statistical inference for nonlinear models in the two-sample case.
\newblock \emph{The annals of statistics}, pp.\  267--277, 1974.

\bibitem[Ferraro \& Price(2013{\natexlab{a}})Ferraro and Price]{DVN1/22633_2013}
Ferraro, P. and Price, M.
\newblock {Replication data for: Using Nonpecuniary Strategies to Influence Behavior: Evidence from a Large-Scale Field Experiment}.
\newblock Technical report, 2013{\natexlab{a}}.
\newblock URL \url{https://doi.org/10.7910/DVN1/22633}.

\bibitem[Ferraro \& Price(2013{\natexlab{b}})Ferraro and Price]{ferraro2013using}
Ferraro, P.~J. and Price, M.~K.
\newblock Using nonpecuniary strategies to influence behavior: Evidence from a large-scale field experiment.
\newblock \emph{Review of Economics and Statistics}, 95\penalty0 (1):\penalty0 64--73, 2013{\natexlab{b}}.

\bibitem[Firpo(2007)]{firpo2007efficient}
Firpo, S.
\newblock Efficient semiparametric estimation of quantile treatment effects.
\newblock \emph{Econometrica}, 75\penalty0 (1):\penalty0 259--276, 2007.

\bibitem[Fisher(1932)]{fisher1932statistical}
Fisher, R.~A.
\newblock \emph{Statistical methods for research workers}.
\newblock Oliver and Boyd, 1932.

\bibitem[Fisher(1935)]{fisher1935design}
Fisher, R.~A.
\newblock \emph{The Design of Experiments}.
\newblock Oliver and Boyd, 1935.

\bibitem[Freedman(2008{\natexlab{a}})]{freedman2008regression}
Freedman, D.~A.
\newblock On regression adjustments to experimental data.
\newblock \emph{Advances in Applied Mathematics}, 40\penalty0 (2):\penalty0 180--193, 2008{\natexlab{a}}.

\bibitem[Freedman(2008{\natexlab{b}})]{freedman2008regression2}
Freedman, D.~A.
\newblock On regression adjustments in experiments with several treatments.
\newblock \emph{Annals of Applied Statistics}, 2:\penalty0 176--96, 2008{\natexlab{b}}.

\bibitem[Friedman et~al.(2010)Friedman, Tibshirani, and Hastie]{friedman2010regularization}
Friedman, J., Tibshirani, R., and Hastie, T.
\newblock Regularization paths for generalized linear models via coordinate descent.
\newblock \emph{Journal of Statistical Software}, 33\penalty0 (1):\penalty0 1--22, 2010.
\newblock \doi{10.18637/jss.v033.i01}.

\bibitem[Ge et~al.(2020)Ge, Huang, Fang, Guo, Liu, Lin, and Xiong]{ge2020conditional}
Ge, Q., Huang, X., Fang, S., Guo, S., Liu, Y., Lin, W., and Xiong, M.
\newblock Conditional generative adversarial networks for individualized treatment effect estimation and treatment selection.
\newblock \emph{Frontiers in genetics}, 11:\penalty0 585804, 2020.

\bibitem[Gin{\'e} \& Zinn(1984)Gin{\'e} and Zinn]{gine1984some}
Gin{\'e}, E. and Zinn, J.
\newblock Some limit theorems for empirical processes.
\newblock \emph{The Annals of Probability}, pp.\  929--989, 1984.

\bibitem[Guo et~al.(2023)Guo, Zhang, Wang, and Long]{guo2023estimating}
Guo, X., Zhang, Y., Wang, J., and Long, M.
\newblock Estimating heterogeneous treatment effects: mutual information bounds and learning algorithms.
\newblock In \emph{International Conference on Machine Learning}, pp.\  12108--12121. PMLR, 2023.

\bibitem[Guo et~al.(2021)Guo, Coey, Konutgan, Li, Schoener, and Goldman]{guo2021machine}
Guo, Y., Coey, D., Konutgan, M., Li, W., Schoener, C., and Goldman, M.
\newblock Machine learning for variance reduction in online experiments.
\newblock \emph{Advances in Neural Information Processing Systems}, 34:\penalty0 8637--8648, 2021.

\bibitem[Heckman et~al.(1997)Heckman, Smith, and Clements]{heckman1997making}
Heckman, J.~J., Smith, J., and Clements, N.
\newblock Making the most out of programme evaluations and social experiments: Accounting for heterogeneity in programme impacts.
\newblock \emph{The Review of Economic Studies}, 64\penalty0 (4):\penalty0 487--535, 1997.

\bibitem[Ichimura \& Newey(2022)Ichimura and Newey]{ichimura2022influence}
Ichimura, H. and Newey, W.~K.
\newblock The influence function of semiparametric estimators.
\newblock \emph{Quantitative Economics}, 13\penalty0 (1):\penalty0 29--61, 2022.

\bibitem[Imai(2005)]{imai2005get}
Imai, K.
\newblock Do get-out-the-vote calls reduce turnout? the importance of statistical methods for field experiments.
\newblock \emph{American Political Science Review}, 99\penalty0 (2):\penalty0 283--300, 2005.

\bibitem[Imai \& Ratkovic(2013)Imai and Ratkovic]{imai2013estimating}
Imai, K. and Ratkovic, M.
\newblock Estimating treatment effect heterogeneity in randomized program evaluation.
\newblock \emph{Annals of Applied Statistics}, 2013.

\bibitem[Imbens \& Rubin(1997)Imbens and Rubin]{imbens1997estimating}
Imbens, G.~W. and Rubin, D.~B.
\newblock Estimating outcome distributions for compliers in instrumental variables models.
\newblock \emph{The Review of Economic Studies}, 64\penalty0 (4):\penalty0 555--574, 1997.

\bibitem[Imbens \& Rubin(2015)Imbens and Rubin]{imbens2015causal}
Imbens, G.~W. and Rubin, D.~B.
\newblock \emph{Causal inference in statistics, social, and biomedical sciences}.
\newblock Cambridge University Press, 2015.

\bibitem[Jiang et~al.(2023)Jiang, Phillips, Tao, and Zhang]{jiang2023regression}
Jiang, L., Phillips, P.~C., Tao, Y., and Zhang, Y.
\newblock Regression-adjusted estimation of quantile treatment effects under covariate-adaptive randomizations.
\newblock \emph{Journal of Econometrics}, 234\penalty0 (2):\penalty0 758--776, 2023.

\bibitem[Johansson et~al.(2016)Johansson, Shalit, and Sontag]{johansson2016learning}
Johansson, F., Shalit, U., and Sontag, D.
\newblock Learning representations for counterfactual inference.
\newblock In \emph{International conference on machine learning}, pp.\  3020--3029. PMLR, 2016.

\bibitem[Kallus \& Oprescu(2023)Kallus and Oprescu]{kallus2023robust}
Kallus, N. and Oprescu, M.
\newblock Robust and agnostic learning of conditional distributional treatment effects.
\newblock In \emph{International Conference on Artificial Intelligence and Statistics}, pp.\  6037--6060. PMLR, 2023.

\bibitem[Kallus et~al.(2024)Kallus, Mao, and Uehara]{kallus2024localized}
Kallus, N., Mao, X., and Uehara, M.
\newblock Localized debiased machine learning: Efficient inference on quantile treatment effects and beyond.
\newblock \emph{Journal of Machine Learning Research}, 25\penalty0 (16):\penalty0 1--59, 2024.

\bibitem[Klaassen(1987)]{klaassen1987consistent}
Klaassen, C.~A.
\newblock Consistent estimation of the influence function of locally asymptotically linear estimators.
\newblock \emph{The Annals of Statistics}, 15\penalty0 (4):\penalty0 1548--1562, 1987.

\bibitem[Koenker(2005)]{koenker2005quantile}
Koenker, R.
\newblock \emph{Quantile regression}, volume~38.
\newblock Cambridge university press, 2005.

\bibitem[Koenker et~al.(2017)Koenker, Chernozhukov, He, and Peng]{koenker2017handbook}
Koenker, R., Chernozhukov, V., He, X., and Peng, L.
\newblock \emph{Handbook of quantile regression}.
\newblock CRC press, 2017.

\bibitem[Kohavi et~al.(2020)Kohavi, Tang, and Xu]{kohavi2020trustworthy}
Kohavi, R., Tang, D., and Xu, Y.
\newblock \emph{Trustworthy online controlled experiments: A practical guide to a/b testing}.
\newblock Cambridge University Press, 2020.

\bibitem[K{\"u}nzel et~al.(2019)K{\"u}nzel, Sekhon, Bickel, and Yu]{kunzel2019metalearners}
K{\"u}nzel, S.~R., Sekhon, J.~S., Bickel, P.~J., and Yu, B.
\newblock Metalearners for estimating heterogeneous treatment effects using machine learning.
\newblock \emph{Proceedings of the national academy of sciences}, 116\penalty0 (10):\penalty0 4156--4165, 2019.

\bibitem[Lehmann \& D'Abrera(1975)Lehmann and D'Abrera]{lehmann1975nonparametrics}
Lehmann, E.~L. and D'Abrera, H.~J.
\newblock \emph{Nonparametrics: statistical methods based on ranks.}
\newblock Holden-day, 1975.

\bibitem[Lei \& Ding(2021)Lei and Ding]{lei2021regression}
Lei, L. and Ding, P.
\newblock Regression adjustment in completely randomized experiments with a diverging number of covariates.
\newblock \emph{Biometrika}, 108\penalty0 (4):\penalty0 815--828, 2021.

\bibitem[Lin(2013)]{lin2013agnostic}
Lin, W.
\newblock Agnostic notes on regression adjustments to experimental data: Reexamining freedman’s critique.
\newblock \emph{The Annals of Applied Statistics}, 7\penalty0 (1):\penalty0 295--318, 2013.

\bibitem[List et~al.(2022)List, Muir, and Sun]{list2022using}
List, J.~A., Muir, I., and Sun, G.~K.
\newblock Using machine learning for efficient flexible regression adjustment in economic experiments.
\newblock Technical report, National Bureau of Economic Research, 2022.

\bibitem[Negi \& Wooldridge(2021)Negi and Wooldridge]{negi2021revisiting}
Negi, A. and Wooldridge, J.~M.
\newblock Revisiting regression adjustment in experiments with heterogeneous treatment effects.
\newblock \emph{Econometric Reviews}, 40\penalty0 (5):\penalty0 504--534, 2021.

\bibitem[Newey(1994)]{newey1994asymptotic}
Newey, W.~K.
\newblock The asymptotic variance of semiparametric estimators.
\newblock \emph{Econometrica: Journal of the Econometric Society}, pp.\  1349--1382, 1994.

\bibitem[Neyman(1959)]{neyman1959optimal}
Neyman, J.
\newblock Optimal asymptotic tests of composite hypotheses.
\newblock \emph{Probability and statsitics}, pp.\  213--234, 1959.

\bibitem[Nie \& Wager(2021)Nie and Wager]{nie2021quasi}
Nie, X. and Wager, S.
\newblock Quasi-oracle estimation of heterogeneous treatment effects.
\newblock \emph{Biometrika}, 108\penalty0 (2):\penalty0 299--319, 2021.

\bibitem[Oka et~al.(2024)Oka, Yasui, Hayakawa, and Byambadalai]{oka2024regression}
Oka, T., Yasui, S., Hayakawa, Y., and Byambadalai, U.
\newblock Regression adjustment for estimating distributional treatment effects in randomized controlled trials.
\newblock \emph{arXiv preprint arXiv:2407.14074}, 2024.

\bibitem[Park et~al.(2021)Park, Shalit, Sch{\"o}lkopf, and Muandet]{park2021conditional}
Park, J., Shalit, U., Sch{\"o}lkopf, B., and Muandet, K.
\newblock Conditional distributional treatment effect with kernel conditional mean embeddings and u-statistic regression.
\newblock In \emph{International Conference on Machine Learning}, pp.\  8401--8412. PMLR, 2021.

\bibitem[Poyarkov et~al.(2016)Poyarkov, Drutsa, Khalyavin, Gusev, and Serdyukov]{poyarkov2016boosted}
Poyarkov, A., Drutsa, A., Khalyavin, A., Gusev, G., and Serdyukov, P.
\newblock Boosted decision tree regression adjustment for variance reduction in online controlled experiments.
\newblock In \emph{Proceedings of the 22nd ACM SIGKDD International Conference on Knowledge Discovery and Data Mining}, pp.\  235--244, 2016.

\bibitem[Robins \& Rotnitzky(1995)Robins and Rotnitzky]{robins1995semiparametric}
Robins, J.~M. and Rotnitzky, A.
\newblock Semiparametric efficiency in multivariate regression models with missing data.
\newblock \emph{Journal of the American Statistical Association}, 90\penalty0 (429):\penalty0 122--129, 1995.

\bibitem[Robins et~al.(1994)Robins, Rotnitzky, and Zhao]{robins1994estimation}
Robins, J.~M., Rotnitzky, A., and Zhao, L.~P.
\newblock Estimation of regression coefficients when some regressors are not always observed.
\newblock \emph{Journal of the American statistical Association}, 89\penalty0 (427):\penalty0 846--866, 1994.

\bibitem[Robinson(1988)]{robinson1988root}
Robinson, P.~M.
\newblock Root-n-consistent semiparametric regression.
\newblock \emph{Econometrica: Journal of the Econometric Society}, pp.\  931--954, 1988.

\bibitem[Rosenbaum(2002)]{rosenbaum2002covariance}
Rosenbaum, P.~R.
\newblock Covariance adjustment in randomized experiments and observational studies.
\newblock \emph{Statistical Science}, 17\penalty0 (3):\penalty0 286--327, 2002.

\bibitem[Rosenblum \& Van Der~Laan(2010)Rosenblum and Van Der~Laan]{rosenblum2010simple}
Rosenblum, M. and Van Der~Laan, M.~J.
\newblock Simple, efficient estimators of treatment effects in randomized trials using generalized linear models to leverage baseline variables.
\newblock \emph{The international journal of biostatistics}, 6\penalty0 (1), 2010.

\bibitem[Rubin(1974)]{rubin1974estimating}
Rubin, D.~B.
\newblock Estimating causal effects of treatments in randomized and nonrandomized studies.
\newblock \emph{Journal of Educational Psychology}, 66\penalty0 (5):\penalty0 688, 1974.

\bibitem[Shalit et~al.(2017)Shalit, Johansson, and Sontag]{shalit2017estimating}
Shalit, U., Johansson, F.~D., and Sontag, D.
\newblock Estimating individual treatment effect: generalization bounds and algorithms.
\newblock In \emph{International conference on machine learning}, pp.\  3076--3085. PMLR, 2017.

\bibitem[Shi et~al.(2019)Shi, Blei, and Veitch]{shi2019adapting}
Shi, C., Blei, D., and Veitch, V.
\newblock Adapting neural networks for the estimation of treatment effects.
\newblock \emph{Advances in neural information processing systems}, 32, 2019.

\bibitem[Sverdrup \& Cui(2023)Sverdrup and Cui]{sverdrup2023proximal}
Sverdrup, E. and Cui, Y.
\newblock Proximal causal learning of conditional average treatment effects.
\newblock In \emph{International Conference on Machine Learning}, pp.\  33285--33298. PMLR, 2023.

\bibitem[Tang et~al.(2010)Tang, Agarwal, O'Brien, and Meyer]{tang2010overlapping}
Tang, D., Agarwal, A., O'Brien, D., and Meyer, M.
\newblock Overlapping experiment infrastructure: More, better, faster experimentation.
\newblock In \emph{Proceedings of the 16th ACM SIGKDD international conference on Knowledge discovery and data mining}, pp.\  17--26, 2010.

\bibitem[Tay et~al.(2023)Tay, Narasimhan, and Hastie]{tay2023elastic}
Tay, J.~K., Narasimhan, B., and Hastie, T.
\newblock Elastic net regularization paths for all generalized linear models.
\newblock \emph{Journal of Statistical Software}, 106\penalty0 (1):\penalty0 1--31, 2023.
\newblock \doi{10.18637/jss.v106.i01}.

\bibitem[Tsiatis et~al.(2008)Tsiatis, Davidian, Zhang, and Lu]{tsiatis2008covariate}
Tsiatis, A.~A., Davidian, M., Zhang, M., and Lu, X.
\newblock Covariate adjustment for two-sample treatment comparisons in randomized clinical trials: a principled yet flexible approach.
\newblock \emph{Statistics in medicine}, 27\penalty0 (23):\penalty0 4658--4677, 2008.

\bibitem[van~der Laan et~al.(2023)van~der Laan, Ulloa-P{\'e}rez, Carone, and Luedtke]{van2023causal}
van~der Laan, L., Ulloa-P{\'e}rez, E., Carone, M., and Luedtke, A.
\newblock Causal isotonic calibration for heterogeneous treatment effects.
\newblock \emph{arXiv preprint arXiv:2302.14011}, 2023.

\bibitem[van~der Vaart \& Wellner(1996)van~der Vaart and Wellner]{van1996weak}
van~der Vaart, A. and Wellner, J.
\newblock \emph{Weak Convergence and Empirical Processes: With Applications to Statistics}.
\newblock Springer Science \& Business Media, 1996.

\bibitem[Wager \& Athey(2018)Wager and Athey]{wager2018estimation}
Wager, S. and Athey, S.
\newblock Estimation and inference of heterogeneous treatment effects using random forests.
\newblock \emph{Journal of the American Statistical Association}, 113\penalty0 (523):\penalty0 1228--1242, 2018.

\bibitem[Wager et~al.(2016)Wager, Du, Taylor, and Tibshirani]{wager2016high}
Wager, S., Du, W., Taylor, J., and Tibshirani, R.~J.
\newblock High-dimensional regression adjustments in randomized experiments.
\newblock \emph{Proceedings of the National Academy of Sciences}, 113\penalty0 (45):\penalty0 12673--12678, 2016.

\bibitem[Xie \& Aurisset(2016)Xie and Aurisset]{xie2016improving}
Xie, H. and Aurisset, J.
\newblock Improving the sensitivity of online controlled experiments: Case studies at netflix.
\newblock In \emph{Proceedings of the 22nd ACM SIGKDD International Conference on Knowledge Discovery and Data Mining}, pp.\  645--654, 2016.

\bibitem[Yang \& Tsiatis(2001)Yang and Tsiatis]{yang2001efficiency}
Yang, L. and Tsiatis, A.~A.
\newblock Efficiency study of estimators for a treatment effect in a pretest--posttest trial.
\newblock \emph{The American Statistician}, 55\penalty0 (4):\penalty0 314--321, 2001.

\bibitem[Zhou et~al.(2022)Zhou, Carson~IV, and Carlson]{zhou2022estimating}
Zhou, T., Carson~IV, W.~E., and Carlson, D.
\newblock Estimating potential outcome distributions with collaborating causal networks.
\newblock \emph{Transactions on machine learning research}, 2022, 2022.

\end{thebibliography}
\bibliographystyle{icml2024}

\newpage 
\onecolumn
\appendix

{\Large \textbf{Appendix} }

The Appendix is structured as follows. Section \ref{app:notation} summarizes the notation in the main text and introduces additional notations used in the Appendix. Section \ref{app:moment} provides key tools and proofs for the claims appeared in Section \ref{sec:reg-adj-estimator}. Section \ref{app:multiplier-bootstrap} describes the multiplier bootstrap procedure for inference. Section \ref{app:asymptotics} provides key tools and proofs for the claims presented in Section \ref{sec:asymptotic}. Finally, Section \ref{sec:app_experiment} contains more detailed information and additional results from the experiments.


\section{Summary of Notation} \label{app:notation}
\begin{table}[htbp]    \centering
    \caption{Summary of Notation}
    \begin{tabular}{ll}
    \hline \hline
    \multicolumn{2}{l}{\textbf{Notation in the main text}}\\
    $X$ & Pre-treatment covariates \\
    $W$ & Treatment variable \\
    $Y$ & Outcome variable \\
    $Y(w)$ & Potential outcome for treatment group $w$ \\
    $\{Z_i\}_{i=1}^{n}$ & $\{(X_i, W_i, Y_i)\}_{i=1}^{n}$, observed data \\
    $\pi_w$ & Treatment assignment probability for treatment group $w$ \\
    $n_w$ & Number of observations in treatment group $w$ \\
    $F_{Y(w)}(y)$ & $E[\1_{\{Y(w) \leq y\}}]$, potential outcome distribution function \\
    $\theta_y$ & $(F_{Y(1)}(y), \dots, F_{Y(K)}(y))^{\top}$, vector of potential outcome distribution functions \\
    $\gamma_y^{(w)}(x)$ & $E[\1_{\{Y\leq y\}}|W=w, X=x]$, conditional distribution function \\
    $\gamma_y$ & $(\gamma_y^{(1)}, \dots, \gamma_y^{(K)})^\top$, vector of conditional distribution functions \\
    $\psi_y^{(w)}(Z; \theta_y, \gamma_y)$ & moment function defined in \eqref{eq:phi-def} \\
    $\psi_y(Z)$ & $\psi_y(Z; \theta_y, \gamma_y) :=
    \big (\psi_y^{(1)}(Z; \theta_y, \gamma_y), \dots, 
    \psi_y^{(K)}(Z; \theta_y, \gamma_y) \big) ^{\top}$, vector of moment functions \\
    $\succeq$ & positive semi-definiteness \\
    $\|a\|$ & $\sqrt{a^{\top}a}$, Euclidean norm of a vector $a=(a_1, \dots, a_p)^{\top}\in\mathbb R^p$ \\
    $\ell^{\infty}(\mathcal Y)$ & space of uniformly bounded functions mapping an arbitrary index set $\Y$ to the real line \\
    $UC(\Y)$ & space of uniformly continuous functions mapping an arbitrary index set  $\Y$ to the real line \\
    $\mathbb G_{n}f $ & $\sqrt{n}\sum_{i=1}^{n}(f(Z_i)-\int f(z)dP(z))$, empirical process  \\
    $\mathbb G_P$ & P-Brownian bridge \\
    $Z_{n,P}$  & $(\mathbb G_n  \psi_y)_{y\in\mathcal Y}$ \\
    $Z_P$ & $(\mathbb G_P  \psi_y)_{y\in\mathcal Y}$ \\
    $T_P$ & $\phi'_{\theta^0_P}(Z_P)$, where $\phi'_{\theta}$ is derivative map of functional $\phi$ \\
    $\mathcal P_n$ & set of probability measures, that is weakly increasing in $n$ \\
    $\rightsquigarrow$ & convergence in distribution or law \\
    $\rightsquigarrow_B$  & weak convergence of the bootstrap law in probability \\
    &\\
    \multicolumn{2}{l}{\textbf{Additional notation in the Appendix}} \\
    $N(\epsilon,\mathcal{F}, \| \cdot \|)$ & $\epsilon$-covering number of the class of functions $\mathcal{F}$ with respect to the norm $\| \cdot \|$ \\
    $a\vee b$ &  $\max\{a,b\}$ for real numbers $a$ and $b$ \\
    $a \wedge b $ & $\min\{a,b\}$ for real numbers $a$ and $b$ \\
    $[K]$ & $\{1, \dots, K\}$ for a positive integer $K$ \\
    $x_n \lesssim y_n$  & for sequences $x_n$ and $y_n$ in $\mathbb R$, $x_n \leq Ay_n$ for a constant $A$ that does not depend on $n$\\
    $\|\cdot\|_{P,q}$ & $L^q(P)$ norm \\
    $BL_1(\mathbb{D})$ & space of functions mapping $\mathbb{D}$ to $[0,1]$ with Lipschitz norm at most 1 \\
    
  $X_n= O_P(a_n)$ & $\lim_{K\to\infty}\lim_{n\to\infty} P(|X_n| > K a_n)=0$ for a sequence of positive constants $a_n$\\
  $X_n =o_P(a_n)$ & $\sup_{K>0}\lim_{n\to\infty} P(|X_n|>K a_n)=0$ for a sequence of positive constants $a_n$ \\
    
    \hline \hline
    \end{tabular}
    \label{tab:notation_summary}
\end{table}

\section{Key Tools and Proofs for Section \ref{sec:reg-adj-estimator}} \label{app:moment}

\subsection{Proofs of Lemmas in Section \ref{sec:reg-adj-estimator}}
\begin{proof}[\textbf{Proof of Lemma \ref{lemma:moment-condition}}]
Let $y\in\mathcal Y$ held constant. It is sufficient to show that each element in the vector of moment conditions  in \eqref{eq:moment_condition} holds. 
From the definition of $\psi_y^{(w)}(Z; \theta_y, \gamma_y)$ in \eqref{eq:phi-def}, we can show, for each $w\in\W$,  
\begin{align*}
 E\big[\psi_y^{(w)}(Z; \theta_y, \gamma_y)  \big ] 
& = E\Big[\frac{\1_{\{W=w\}} \cdot (\1_{\{Y\leq y\}}-\gamma_y^{(w)}(X))}{\pi_w} + \gamma_y^{(w)}(X)- F_{Y(w)}(y) \Big]\\
& =E\Big[\frac{\1_{\{W=w\}} \cdot (\1_{\{Y(w)\leq y\}}-\gamma_y^{(w)}(X))}{\pi_w} + \gamma_y^{(w)}(X)-F_{Y(w)}(y) \Big] \\
& = E\Big[
E[\1_{\{W=w\}}] \cdot \frac{(\1_{\{Y(w)\leq y\}}-\gamma_y^{(w)}(X))}{\pi_w} + \gamma_y^{(w)}(X)-F_{Y(w)}(y) \Big] \\
& = E\Big[\pi_w \cdot
\frac{(\1_{\{Y(w)\leq y\}}-\gamma_y^{(w)}(X))}{\pi_w}  + \gamma_y^{(w)}(X)-F_{Y(w)}(y) \Big]\\
& = E\Big[\1_{\{Y(w)\leq y\}}-\gamma_y^{(w)}(X)+ \gamma_y^{(w)}(X)-F_{Y(w)}(y) \Big] \\
& = 0.
\end{align*}
Here, the first equality is due to 
the definition in \eqref{eq:phi-def}
and 
the second equality comes from the definition of potential outcomes $Y=Y(W)$. The third equality holds because of the independence assumption in Assumption \ref{ass:independence}. The fourth equality comes from the fact that $E[\1_{\{W=w\}}]=P(W=w)=\pi_w$. Finally, all terms cancel out to be zero since $E[\1_{\{Y(w)\leq y\}}]=F_{Y(w)}(y)$ by definition.
\end{proof}

\begin{proof}[\textbf{Proof of Lemma \ref{lemma:neyman-orthogonality}}]
For each $y\in\mathcal Y$ and $w\in\W$, with probability approaching 1, we have
\begin{align*}
\frac{\partial}{\partial t} 
E\big[
\psi_y^{(w)}(Z; \theta_y, t)  \big] & 
=
E\Big[\frac{\partial}{\partial t} \Big(\frac{\1_{\{W=w\}} \cdot (\1_{\{Y\leq y\}}-t)}{\pi_w} +t-F_{Y(w)}(y)\Big) \Big] \\
 & = E\Big[-\frac{\1_{\{W=w\}}}{\pi_w} +1 \Big] \\
 & = -\frac{E[\{\1_{\{W=w\}}]}{\pi_w} +1 \\
 & = 0.
\end{align*}
Thus, the desired conclusion follows. 
\end{proof}

\vspace{0.1cm}
\begin{proof}[\textbf{Proof of Lemma \ref{lemma:sample-moment}}]
For each  $y\in\mathcal Y$ and $w\in\W$, we substitute the estimate 
$\hat\gamma_y^{(w)}(X_i)$ for $\gamma_y^{(w)}(X_{i})$ and $n_w/n$ for $\pi_w$ in equation \eqref{eq:moment_condition}. Then, 
$
\widehat \theta_{y}:= 
\big(
\widehat{\F}_{Y(1)}(y) , 
\dots, 
\widehat{\F}_{Y(K)}(y) 
\big)^{\top}
$
solves the sample counterpart of equation \eqref{eq:moment_condition} for $\theta_{y}$, so that 
\begin{align*}
\hat{\F}_{Y(w)}(y) = & \frac{1}{n}\sum_{i=1}^{n} \Big[\frac{\1_{\{W_i=w\}} \cdot (\1_{\{Y_i\leq y\}}-\hat \gamma_y^{(w)}(X_i))}{n_w/n} + \hat \gamma_y^{(w)}(X_i)\Big].
\end{align*}
Rearranging the terms in the above equation, we obtain
\begin{align*}
\hat \F_{Y(w)}(y) = &  \frac{1}{n_w}\sum_{i:W_i=w}\big(\1_{\{Y_i\leq y\}}-\hat \gamma_y^{(w)}(X_i)\big) + \frac{1}{n}\sum_{i=1}^{n} \hat \gamma_y^{(w)}(X_i), 
\end{align*}
which is the regression-adjusted estimator given in equation (\ref{eq:reg-adj-estimator}). 
\end{proof}

\subsection{Proof of Theorem \ref{theorem:efficiency}}

For the sake of proof completeness, we first present a variant of Lagrange's identity and Bergstr\"{o}m's inequality in the below lemma, which is useful to prove the efficiency gain of the regression adjustment. 

\vspace{0.2cm}
\begin{lemma}
    \label{lemma:L-ind}
    For any 
    $(a_{1}, \dots, a_{K}) \in \R^{K}$
    and 
    $(b_{1}, \dots, b_{K}) \in \R^{K}$
    with $b_{k} > 0 $ for all $k=1, \dots, K$,
    we can show that 
    \begin{align*}
     \sum_{k=1}^{K} \frac{a_{k}^{2}}{b_{k}}   
     - 
     \frac{
     \big (
     \sum_{k=1}^{K} a_{k}
     \big )^2 
    }{
     \sum_{k=1}^{K} b_{k}
    }
    = 
     \frac{
     1
    }{
     \sum_{k=1}^{K} b_{k}
    }
    \cdot 
    \frac{1}{2}
    \sum_{k=1}^{K}
    \sum_{\substack{\ell =1 \\ \ell \neq k}}^K 
    \frac{
    (a_{k} b_{\ell} 
    - 
    a_{\ell} b_{k}
    )^2      
    }{
    b_{k} b_{\ell}
    } ,
    \end{align*}
    which implies 
    Bergstr\"{o}m's inequality, given by 
  \begin{align*}
     \sum_{k=1}^{K} \frac{a_{k}^{2}}{b_{k}}   
     \geq 
     \frac{
     \big (
     \sum_{k=1}^{K} a_{k}
     \big )^2 
    }{ 
     \sum_{k=1}^{K} b_{k}
    }
    . 
    \end{align*}

\end{lemma}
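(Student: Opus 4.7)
The plan is to derive the identity as a re-scaling of the classical Lagrange identity
\[
\Big(\sum_{k=1}^{K} x_k^2\Big)\Big(\sum_{k=1}^{K} y_k^2\Big) - \Big(\sum_{k=1}^{K} x_k y_k\Big)^{2} = \frac{1}{2}\sum_{k=1}^{K}\sum_{\ell=1}^{K}(x_k y_\ell - x_\ell y_k)^{2},
\]
and then to obtain Bergstr\"om's inequality as an immediate corollary.

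First, I would set $x_k := a_k/\sqrt{b_k}$ and $y_k := \sqrt{b_k}$, which is permitted because $b_k > 0$. With this substitution one checks directly that $\sum_k x_k^2 = \sum_k a_k^2/b_k$, $\sum_k y_k^2 = \sum_k b_k$, and $\sum_k x_k y_k = \sum_k a_k$, while
\[
(x_k y_\ell - x_\ell y_k)^{2} = \Big(\tfrac{a_k \sqrt{b_\ell}}{\sqrt{b_k}} - \tfrac{a_\ell \sqrt{b_k}}{\sqrt{b_\ell}}\Big)^{2} = \frac{(a_k b_\ell - a_\ell b_k)^{2}}{b_k b_\ell}.
\]
Plugging into the classical Lagrange identity and dividing both sides by $\sum_k b_k > 0$ yields the claimed equality, after noting that the diagonal terms $k=\ell$ contribute zero and can therefore be omitted from the double sum (equivalently, writing $\sum_k \sum_\ell = \sum_k \sum_{\ell \neq k}$ in the resulting expression).

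For Bergstr\"om's inequality, I would simply observe that the right-hand side of the identity is a nonnegative sum of squares divided by positive quantities, so the left-hand side must be $\geq 0$, giving $\sum_k a_k^2/b_k \geq (\sum_k a_k)^{2}/\sum_k b_k$. There is no real obstacle here: the only nontrivial step is recognizing the correct substitution $x_k = a_k/\sqrt{b_k}$, $y_k = \sqrt{b_k}$, after which the proof is a one-line reduction to Lagrange's identity. If the authors prefer a self-contained argument without invoking Lagrange's identity by name, the identity can alternatively be verified by direct algebraic expansion: multiply both sides by $\sum_k b_k$, expand $(\sum_k a_k)^{2} = \sum_k a_k^{2} + \sum_{k\neq \ell} a_k a_\ell$ and $(\sum_k a_k^2/b_k)(\sum_\ell b_\ell) = \sum_k a_k^2 + \sum_{k\neq \ell} a_k^{2} b_\ell / b_k$, and regroup the cross terms into $\tfrac{1}{2}\sum_{k\neq \ell}(a_k b_\ell - a_\ell b_k)^{2}/(b_k b_\ell)$ using the symmetrization $a_k^{2} b_\ell/b_k + a_\ell^{2} b_k/b_\ell - 2 a_k a_\ell = (a_k b_\ell - a_\ell b_k)^{2}/(b_k b_\ell)$.
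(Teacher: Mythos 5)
Your proposal is correct and follows essentially the same route as the paper: the authors also invoke Lagrange's identity and substitute $c_k = a_k/\sqrt{b_k}$, $d_k = \sqrt{b_k}$, then divide by $\sum_k b_k$ and read off Bergstr\"om's inequality from the nonnegativity of the right-hand side. No gaps; the observation that the diagonal terms vanish is consistent with the paper stating Lagrange's identity directly with the off-diagonal sum.
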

\begin{proof}
    Lagrange's identity is that, 
    for any 
    $(c_{1}, \dots, c_{K}) \in \R^{K}$
    and 
    $(d_{1}, \dots, d_{K}) \in \R^{K}$, 
    \begin{align}
    \label{eq:L-ind}
        \bigg (\sum_{k=1}^{K} c_{k}^2
        \bigg)
        \bigg(\sum_{k=1}^{K} d_{k}^2
        \bigg)
        - 
        \bigg(\sum_{k=1}^{K} c_{k}d_{k}
        \bigg)^2
        =
    \frac{1}{2}
    \sum_{k=1}^{K}
    \sum_{\substack{\ell =1 \\ \ell \neq k}}^K 
     (c_{k} d_{\ell} 
    - 
    c_{\ell} d_{k}
    )^2   .
    \end{align}
    Fix arbitrary 
    $(a_{1}, \dots, a_{K}) \in \R^{K}$
    and 
    $(b_{1}, \dots, b_{K}) \in \R^{K}$
    with $b_{k} > 0 $ for all $k=1, \dots, K$. 
    Then, 
    taking 
    $c_{k} = a_{k} / \sqrt{b_{k}}$ 
    and 
    $d_{k} = \sqrt{b_{k}}$
    for all $k=1, \dots, K$
    in (\ref{eq:L-ind}), we can show that 
    \begin{align*}
        \bigg (
        \sum_{k=1}^{K} 
        \frac{a_{k}^2}{b_{k}}
        \bigg)
        \bigg(\sum_{k=1}^{K} b_{k}
        \bigg)
        - 
        \bigg(\sum_{k=1}^{K} a_{k}
        \bigg)^2
     &   =
    \frac{1}{2}
    \sum_{k=1}^{K}
    \sum_{\substack{\ell =1 \\ \ell \neq k}}^K 
     \bigg (
     \frac{a_{k}}{ \sqrt{b_{k}}}
     \sqrt{b_{\ell}} 
     - 
     \frac{a_{\ell}}{ \sqrt{b_{\ell}}}
     \sqrt{ b_{k}} 
     \bigg )^2   \\ 
     & = 
         \frac{1}{2}
    \sum_{k=1}^{K}
    \sum_{\substack{\ell =1 \\ \ell \neq k}}^K 
    \frac{
    (a_{k} b_{\ell} 
    - 
    a_{\ell} b_{k}
    )^2      
    }{
    b_{k} b_{\ell}
    },
    \end{align*}
    which leads to the desired equality. 
    Also, the last expression in the math display above 
    is non-negative, which leads to Bergstr\"{o}m's inequality. 
\end{proof}
\vspace{0.5cm}

To prove Theorem \ref{theorem:efficiency}, we introduce additional notation. 
Define the empirical probability measures of $X$ as 
\begin{align*}
    \widehat{\P}_{X} :=   
    \frac{1}{n}
    \sum_{i=1}^{n} 
    \delta_{X_i}
    \ \ \mathrm{and} \ \ 
    \widehat{\P}_{X}^{(w)} :=   
    \frac{1}{n_{w}}
    \sum_{i=1}^{n} 
    \1_{\{ W_{i} = w \} } \cdot 
    \delta_{X_i},
\end{align*}
for all observations and observations in the treatment group $w \in \W$, respectively. 
Here, $\delta_{x}$ is the measure that assigns mass 1 at $x \in \mathcal{X}$
and thus $\widehat{\P}_{X}$ and $\widehat{\P}_{X}^{(w)}$ can be interpreted as the random discrete
probability measures, which put mass $1/n$  and $1/n_{w}$ at each of the $n$ and $n_{w}$ points $\{X_{i}\}_{i=1}^{n}$
and 
$\{X_{i}:W_{i} =w\}_{i=1}^{n}$, respectively. 
Given a real-valued function $f: \mathcal{X} \to \R$, we denote by 
\begin{align*}
    \widehat{\P}_{X} f 
    = \int f \widehat{\P}_{X} 
    = \frac{1}{n} \sum_{i=1}^{n}f(X_{i}) 
    \ \ \ \ \mathrm{and} \ \ \ \
    \widehat{\P}_{X}^{(w)} f 
    = \int f \widehat{\P}_{X}^{(w)} 
    = \frac{1}{n_{w}} \sum_{i=1}^{n} \1_{\{ W_{i} = w \} } \cdot f(X_{i}) . 
\end{align*}

Given that the true conditional distribution $\gamma_{y}^{(w)}(X) \equiv F_{Y(w)|X}(y|X)$, 
the infeasible version of regression-adjusted distribution function for treatment $w \in \mathcal{W}$ is written as
\begin{align*}
  \widetilde{\F}_{Y(w)}(y)
  = 
  \widehat{\F}_{Y(w)}^{simple}(y)
  - 
  (\widehat{\P}_{X}^{(w)} - \widehat{\P}_{X})
  \gamma_{y}^{(w)}. 
\end{align*}

\begin{proof}[\textbf{Proof of Theorem \ref{theorem:efficiency}}]
\textbf{Part (a)}
  Choose any arbitrary $w \in \W$ and $y \in \Y$.  
  Applying the quadratic expansion for 
  $
  \widetilde{\F}_{Y(w)}(y)
  = 
  \widehat{\F}_{Y(w)}^{simple}(y)
  -
  (\widehat{\P}_{X}^{(w)} - \widehat{\P}_{X})  
  \gamma_{y}^{(w)} 
  $, 
  we can show that 
  \begin{align}
    \Var
    \big(
      \widetilde{\F}_{Y(w)}(y)
    \big)
    =
    &
    \Var
    \big(
    \widehat{\F}_{Y(w)}^{simple}(y)
    \big)  
    -
    2
    \Cov
    \Big (
    \widehat{\F}_{Y(1)}^{simple}(y),
    (\widehat{\P}_{X}^{(w)} - \widehat{\P}_{X}) 
    \gamma_{y}^{(w)}
    \Big ) 
     +
    \Var
    \Big (
    (\widehat{\P}_{X}^{(w)} - \widehat{\P}_{X}) 
    \gamma_{y}^{(w)}
    \Big ).     \label{eq:var-1}
  \end{align}

  We can write 
  $\widehat{\P}_{X} = \sum_{w' \in \W} 
  \hat{\pi}_{w'} \widehat{\P}_{X}^{(w')}
  $. 
  It is assumed that 
  observations are a random sample and $n_{w'} /n = \pi_{w'} + o(1)$ for every $w' \in \W$
  as $n \to \infty$. 
  Furthermore, all unconditional and conditional functions are 
  bounded. 
  By applying the dominated convergence theorem, we can show 
  \begin{align}
    n \Cov
    \Big (
    \widehat{\F}_{Y(w)}^{simple}(y),
    (\widehat{\P}_{X}^{(w)} - \widehat{\P}_{X}) 
    \gamma_{y}^{(w)}
    \Big ) \notag
    &
      = n
    \Cov    \Big (
    \widehat{\F}_{Y(w)}^{simple}(y),
    (1 - \hat{\pi}_{w})\widehat{\P}_{X}^{(w)} 
    \gamma_{y}^{(w)}(X)
    \Big ) \notag \\
    &
      = 
    \frac{1 - \pi_{w}}{ \pi_{w}}
    \Cov
    \big (
      \1_{ \{Y(w) \leq y\} },
      \gamma_{y}^{(w)}(X)
    \big ) + o(1). 
    \label{eq:cov-1}
  \end{align}
   Similarly, we can show that  
  \begin{align}
    n \Var
    \big (
    (\widehat{\P}_{X}^{(w)} - \widehat{\P}_{X}) 
    \gamma_{y}^{(w)}
    \big ) \notag 
    & 
      =
    n \Var
    \big (
    (1 -   \hat{\pi}_{w})
    \widehat{\P}_{X}^{(w)} 
    \gamma_{y}^{(w)}
    \big )
    + 
    n
    \sum_{w' : w' \not = w }
     \Var
    \big (
    \hat{\pi}_{w'}
    \widehat{\P}_{X}^{(w')} 
    \gamma_{y}^{(w)}
    \big )
 \notag \\ 
    & 
      =
 \frac{  (1 - \pi_{w})^2}{ \pi_{w}} 
    \Var
    \big (
    \gamma_{y}^{(w)}(X)
    \big )
    +
      \sum_{w' : w' \not = w }
      \frac{    \pi_{w'}^{2}}{\pi_{w'}}
    \Var
    \big (
    \gamma_{y}^{(w)}(X)
    \big ) + o(1)
    \notag \\ 
    & 
      =
      \bigg (
           \frac{  (1 -   \pi_{w})^2}{ \pi_{w}} 
        +
      \sum_{w' : w' \not = w }
      \pi_{w'} 
      \bigg )
    \Var
    \big (
    \gamma_{y}^{(w)}(X)
    \big ) + o(1)\notag \\ 
    & 
  =
  \frac{1-\pi_{w}}{\pi_{w}}
    \Var
    \big (
    \gamma_{y}^{(w)}(X)
    \big ) + o(1). 
    \label{eq:cov-2}
  \end{align}
  It follows from 
  (\ref{eq:var-1})-(\ref{eq:cov-2}) that
  \begin{align}
    \label{eq:bb2}
    n
    \big \{
    \Var \big(
    \widehat{\F}_{Y(w)}^{simple}(y)
    \big)
    -
    \Var \big(
      \widetilde{\F}_{Y(w)}(y)
    \big)
      \big \}
    &= 
    \frac{1-\pi_{w}}{\pi_{w}}
    \big \{
     2
    \Cov
    \big (
      \1_{ \{Y(w) \leq y\} },
      \gamma_{y}^{(w)}(X)
    \big )
    -
    \Var
    \big (
    \gamma_{y}^{(w)}(X)
    \big )
    \big \} + o(1).
  \end{align}
  An application of the law of iterated expectation yields \begin{align*}
      \Cov
    \big (
      \1_{ \{Y(w) \leq y\} },
      \gamma_{y}^{(w)}(X)
    \big )
    =
    \Var \big (    E[  \1_{ \{Y(w) \leq y\} }|X]     \big ),
  \end{align*}
  which together with (\ref{eq:bb2}) shows  
   \begin{align*}
    n
    \big \{
    \Var \big(
    \widehat{\F}_{Y(w)}^{simple}(y)
    \big)
    -
    \Var \big(
      \widetilde{\F}_{Y(w)}(y)
    \big) 
      \big \}
    &= 
    \frac{1-\pi_{w}}{\pi_{w}}
    \Var
      \big (
      \gamma_{y}^{(w)}(X)       
      \big ) + o(1). 
  \end{align*}
  Since $\pi_{w} \in (0, 1)$ and
  $    \Var
      \big (
      \gamma_{y}^{(w)}(X)      
      \big )  \geq 0$, 
  it follows that 
  $
    \Var \big(
    \widehat{\F}_{Y(w)}^{simple}(y)
    \big)
    \geq 
      \Var \big(
      \widetilde{\F}_{Y(w)}(y)
    \big) + o(n^{-1}).
  $
  Here, the equality hold 
  only when 
  $F_{Y(w)|X}(y) = F_{Y(w)}(y)$
  or  
  $X$ has no predictive power for the event $\1_{ \{Y(w) \leq y\} }$.  

  \vspace{0.2cm}
    \textbf{Part (b)}
    Choose any arbitrary $y \in \Y$.  
    First, we shall show that, for any $w, w' \in \W$, 
    \begin{align}
    \label{eq:cov-A}
        n
        \Cov \big(
        \widehat{\F}_{Y(w)}(y), \widetilde{\F}_{Y(w')}(y) 
        \big)
        =
        \Cov \big(
        \gamma_{y}^{(w)}(X),
        \gamma_{y}^{(w')}(X)
        \big). 
    \end{align}
    Fix any two distinct treatment statuses $w, w' \in \W$. 
    We can write 
  $
  \widetilde{\F}_{Y(w)}(y)
  = 
  \big (\widehat{\F}_{Y(w)}^{simple}(y)
  - \widehat{\P}_{X}^{(w)}\gamma_{y}^{(w)}
  \big ) 
  +
  \widehat{\P}_{X} \gamma_{y}^{(w)} 
  $
  and also  
  $\widehat{\P}_{X}  \gamma_{y}^{(w)} = \sum_{v\in \W} 
  \hat{\pi}_{v} \widehat{\P}_{X}^{(v)}  \gamma_{y}^{(w)} 
  $.   
  Given random sample and the bi-linear property of the covariance function, 
  we can show that 
  \begin{align*}
        \Cov \Big(
        \widetilde{\F}_{Y(w)}(y), 
        \widetilde{\F}_{Y(w')}(y) 
        \Big)
        =&
        \Cov
        \Big(
        \widehat{\F}_{Y(w)}^{simple}(y)
        - \widehat{\P}_{X}^{(w)} \gamma_{y}^{(w)} ,
        \hat{\pi}_{w'}\widehat{\P}_{X}^{(w')}  \gamma_{y}^{(w')} 
        \Big ) 
        \\ 
        & + \Cov
        \big(
        \hat{\pi}_{w}\widehat{\P}_{X}^{(w)}  \gamma_{y}^{(w)}, 
        \widehat{\F}_{Y(w')}^{simple}
        - \widehat{\P}_{X}^{(w')} \gamma_{y}^{(w')}
        \big ) \\
        & + \Cov
        \big(
        \widehat{\P}_{X}  \gamma_{y}^{(w)}, 
        \widehat{\P}_{X}  \gamma_{y}^{(w')}
        \big ) 
    , 
    \end{align*}
    where it can be shown that the first and second terms on the right-hand side are equal zero, due to the fact that
    $     E \big [   \widehat{\F}_{Y(w)}^{simple}(y)
        - \widehat{\P}_{X}^{(w)}\gamma_{y}^{(w)}|X_{1}, \dots, X_{n}
        \big ] = 0
    $. 
    Furthermore, under the random sample assumption, we can show that 
    $ \Cov
        \big(
        \widehat{\P}_{X}  \gamma_{y}^{(w)}, 
        \widehat{\P}_{X}  \gamma_{y}^{(w')}
        \big ) =
        n^{-1}
      \Cov
        \big(
         \gamma_{y}^{(w)}(X), 
         \gamma_{y}^{(w')}(X)
        \big )   $. 
        Thus, we can prove the equality in (\ref{eq:cov-A}).

    Next, we compare the variance-covariance matrices of the simple and regression-adjusted estimators.
    By applying the result from part (a) of this theorem and the one in (\ref{eq:cov-A}), we are able to show that
    \begin{align*}
    & n 
    \big \{
     \Var \big (
    \widehat{\theta}_{y}^{simple}
    \big ) 
    -
    \Var \big (
    \widetilde{\theta}_{y}
    \big )  
    \big \} \\ 
    & \ \ \ \ \ \ 
    = 
    \left [
    \begin{array}{cccc}
     \frac{1-\pi_{1}}{\pi_{1}}
    \Var
      \big (
      \gamma_{y}^{(1)}(X)       
      \big ),
    & 
    -\Cov
        \big(
         \gamma_{y}^{(1)}(X), 
         \gamma_{y}^{(2)}(X)
        \big ),   
        & 
        \dots,  
        & 
        -\Cov
        \big(
         \gamma_{y}^{(1)}(X), 
         \gamma_{y}^{(K)}(X)
        \big )  \\
        -\Cov
        \big(
         \gamma_{y}^{(2)}(X), 
         \gamma_{y}^{(1)}(X)
        \big ),   
        &     
        \frac{1-\pi_{2}}{\pi_{2}}
    \Var
      \big (
      \gamma_{y}^{(2)}(X)    
      \big ), 
        & \dots,  
        & 
        -\Cov
        \big(
         F_{Y(2)|X}, 
         F_{Y(K)|X}
        \big ) 
        \\ 
          \vdots &\vdots&\ddots & \vdots\\ 
        -\Cov
        \big(
         \gamma_{y}^{(K)}(X), 
         \gamma_{y}^{(1)}(X)
        \big ),   
        &     
        -\Cov
        \big(
         \gamma_{y}^{(K)}(X), 
         \gamma_{y}^{(2)}(X)
        \big ), 
        & \dots,  
        & 
        \frac{1-\pi_{K}}{\pi_{K}}
    \Var
      \big (
      \gamma_{y}^{(K)}(X)       
      \big )
    \end{array}
    \right ]  + o(1),
 \end{align*}
which can be written as  
\begin{align*}
       n 
    \big \{
     \Var \big (
    \widehat{\theta}_{y}^{simple}
    \big ) 
    -
    \Var \big (
    \widetilde{\theta}_{y}
    \big )  
    \big \}
    = 
    E 
    \Big [
     \big (\gamma_{y}(X) - E[\gamma_{y}(X)]\big) 
     A
     \big(\gamma_{y}(X) - E[\gamma_{y}(X)]\big)^{\top} 
    \Big ] + o(1),
\end{align*}
where 
$\gamma_{y}(X) = [\gamma_{y}^{(1)}(X), \dots, \gamma_{y}^{(K)}(X)]^{\top}$ and  
\begin{align*}
    A:= 
        \left [
    \begin{array}{cccc}
    {\pi}_{1}^{-1} - 1,
    & 
    -1,   
        & 
        \dots,  
        & 
        -1 \\
        -1,   
        &     
    {\pi}_{2}^{-1} - 1,
        & \dots,  
        & 
        -1 
        \\ 
          \vdots &\vdots&\ddots & \vdots\\ 
        -1,   
        &     
        -1, 
        & \dots,  
        & 
    {\pi}_{K}^{-1} - 1
    \end{array}
    \right ] .
    \end{align*}
    The variant of 
    Lagrange's identity
    in 
    Lemma \ref{lemma:L-ind}
    with $\sum_{w \in \W} \pi_{w} =1$
    shows that, 
    for an arbitrary vector $v:=(v_{1}, \dots, v_{K})^{\top} \in \R^{k}$,  
    \begin{align*}
     & v^{\top}
     \big (\gamma_{y}(X) - E[\gamma_{y}(X)]\big) 
     A
     \big(\gamma_{y}(X) - E[\gamma_{y}(X)]\big)^{\top} 
     v \\ 
      & \ \ = 
     \sum_{w \in \W}
     \frac{
     v_{w}^{2}
     \big ( 
     \gamma_{y}^{(w)}(X)
     -
     E[\gamma_{y}^{(w)}(X)]
     \big )^{2}
     }{
        \pi_{w}
     } 
      - 
     \bigg (
     \sum_{w \in \W}
     v_{w}
     \big ( 
     \gamma_{y}^{(w)}(X)
     -
     E[\gamma_{y}^{(w)}(X)]
     \big )
     \bigg )^{2} \\ 
     & \ \ \  = 
     \frac{1}{2}
    \sum_{w \in \W} 
    \sum_{\substack{w' \in \W \\ w' \neq w}}
     \frac{
     \big \{ 
         v_{w}
         \big ( 
         \gamma_{y}^{(w)}(X)
         -
         E[\gamma_{y}^{(w)}(X)]
         \big )
         {\pi}_{w'}
         -
         v_{w'}
         \big ( 
         \gamma_{y}^{(w')}(X)
         -
         E[\gamma_{y}^{(w')}(X)]
         \big )
         {\pi}_{w}     \big \} ^{2}
     }{
     {\pi}_{w}
     {\pi}_{w'}
     } . 
    \end{align*}
    It follows that 
    \begin{align*} 
       v^{\top}
    \big \{
     \Var \big (
    \widehat{\theta}_{y}^{simple}
    \big ) 
    -
    \Var \big (
    \widetilde{\theta}_{y}
    \big )  
    \big \}
    v
    = 
    \frac{1}{2}
    \sum_{w \in \W} 
    \sum_{\substack{w' \in \W \\ w' \neq w}}
     \frac{
     \Var 
     \Big (
         v_{w}
         \gamma_{y}^{(w)}(X)
         {\pi}_{w'}
         -
         v_{w'}
         \gamma_{y}^{(w')}(X)
         {\pi}_{w}     
         \Big )
     }{
     {\pi}_{w}
     {\pi}_{w'}
     } 
    + o(n^{-1}). 
    \end{align*}
  The above equality implies the desired positive semi-definiteness result, because $ \Var 
     \big (
         v_{w}
         \gamma_{y}^{(w)}(X)
         {\pi}_{w'}
         -
         v_{w'}
         \gamma_{y}^{(w')}(X)
         {\pi}_{w}     
         \big )
         \geq  0$
         for any 
         $w, w' \in \W$ with $ w \neq w'$.
  
  Furthermore, the positive definite result holds when 
    $ \Var 
     \big (
         v_{w}
         \gamma_{y}^{(w)}(X)
         {\pi}_{w'}
         -
         v_{w'}
         \gamma_{y}^{(w')}(X)
         {\pi}_{w}     
         \big )
         > 0$
         for any $v \in \R^{k}$ with $v \neq 0$
         and 
         for any 
         $w, w' \in \W$ with $ w \neq w'$. 
         Because 
        $v \in \R^{k}$ is chosen arbitrarily except  $v \neq 0$
        and $\pi_{w} \in (0,1)$ for all $w \in \W$, 
        the condition for the positive definiteness can be written as  
         $
             \Var 
         \big (
         \gamma_{y}^{(w)}(X)
         -
         r
         \cdot 
         \gamma_{y}^{(w')}(X)
         \big )
         > 0
         $
         for any  $r \in \R$
         and 
         for any 
         $w, w' \in \W$ with $ w \neq w'$. 
\end{proof}

\newpage

\section{Multiplier Bootstrap Procedure} \label{app:multiplier-bootstrap} 
We can obtain pointwise and uniform confidence bands for distributional parameters using multiplier bootstrap following, for example, \citet{chernozhukov2013inference} and \citet{belloni2017program}. We outline the procedure to obtain uniform confidence bands in Algorithm \ref{alg:uniform-band}. The algorithm can be altered slightly to generate pointwise confidence bands as explained in Remark \ref{remark:pointwise-ci}.
\begin{algorithm}[!h]
   \caption{Multiplier bootstrap procedure to obtain uniform confidence bands}
   \label{alg:uniform-band}
\begin{algorithmic}
\STATE
\STATE {\bfseries Input:} Data $\{(X_i, W_i, Y_i)\}_{i=1}^{n}$; point estimates $\hat\theta_y$; influence functions $\hat \psi_y(Z_i):=\psi_y(Z_i; \hat\theta_y, \hat\gamma_y)$
\STATE 
\STATE
   \begin{enumerate}
    \item[(1)] Draw multipliers $\{\xi_i\}_{i=1}^{n}= \{m_{1,i}/\sqrt{2} + ((m_{2,i})^2-1)/2\}_{i=1}^{n}$ independently from the data $\{Z_i\}_{i=1}^{n}$, where $m_{1,i}$ and $m_{2,i}$ are i.i.d. draws from two independent standard normal random variables.
    \item[(2)] For each $y\in\mathcal Y$, obtain the bootstrap draws $\phi^b(\hat\theta_y)$ of $\phi(\hat\theta_y)$ as 
    \begin{equation*}
    \phi^b(\hat\theta_y) =\phi(\hat\theta_y^b) \text{ where } \hat\theta_y^b=\hat\theta_y + \frac{1}{n}\sum_{i=1}^{n} \xi_i \hat \psi_y(Z_i).
    \end{equation*}
    \item[(3)] Repeat (1)-(2) $B$ times and index the bootstrap draws by $b=1, \dots, B$.
    \item[(4)] Obtain bootstrap standard error estimates for $\phi(\hat\theta_y)$ for each $y\in\mathcal Y$ as 
    \begin{equation*}
        \hat \Sigma(y) = \frac{q_{0.75}(y)-q_{0.25}(y)}{z_{0.75}-z_{0.25}},
    \end{equation*}
    where $q_p(y)$ is the $p$-th quantile of $\{\phi^b(\hat\theta_y):1\leq b\leq B\}$ and $z_p$ is the $p$-th quantile of the standard normal distribution.
    
    \item[(5)] Construct the bootstrap draw of the Kolmogorov-Smirnov maximal t-statistic as
    \begin{equation*}
    t_{\max}^b= \max_{y\in\mathcal Y} \frac{|\phi^b(\hat\theta_y) - \phi(\hat\theta_y)|}{\hat \Sigma(y)},
    \end{equation*}
    \item[(6)] Obtain the bootstrap estimators of the critical values as 
    \begin{equation*}
        \hat t_{1-\alpha} = (1-\alpha)-\text{quantile of } \{t_{\max}^b: 1\leq b \leq B\}.
    \end{equation*}
    \item[(7)] Construct $(1-\alpha)\times 100\%$ uniform confidence band for $(\phi(\theta_y))_{y\in\mathcal Y}$ as 
    \begin{equation*}
        I^{1-\alpha} = \{[\phi(\hat\theta_y) \pm \hat t_{1-\alpha} \times \hat\Sigma(y)]: y\in\mathcal Y\}.
    \end{equation*} 
\end{enumerate}

\STATE {\bfseries Result:}   Uniform confidence band $I^{1-\alpha}$ for $(\phi(\theta_y))_{y\in\mathcal Y}$
\end{algorithmic}
\end{algorithm}

\begin{remark} \label{remark:pointwise-ci}
To obtain pointwise confidence intervals using bootstrap, we skip Steps (5)-(6) and use $z_{1-\alpha/2}$ as a critical value instead of $\hat t_{1-\alpha}$.
\end{remark}

\begin{remark}
Note that the multiplier bootstrap method is computationally efficient since it does not involve recomputing the nuisance estimates $\hat\gamma_y$ and the influence functions $\hat\psi_y$.    
\end{remark}

\section{Key Tools and Proofs for Section \ref{sec:asymptotic}} \label{app:asymptotics}
\subsection{Additional Definitions: Empirical Processes} \label{sec:empirical-process}
We first introduce some basic definitions related to empirical processes in this section. The following definitions are taken from \citet{van1996weak}.
\begin{definition}[Brownian bridge]
$\mathbb G_P$ is called a \emph{P-Brownian bridge} on $\mathcal F$ if it is a mean-zero Gaussian process with covariance function $E[\mathbb G_P f \mathbb G_P g]= P(fg)- P(f)P(g)$. 
\end{definition}
\begin{definition}[Covering numbers and entropies]
The \emph{covering number} $N(\epsilon,\mathcal{F}, \| \cdot \|)$ is the minimal number of balls $\{g: \|g-f\|<\epsilon\}$ of radius $\epsilon$ needed to cover the set $\mathcal F$. The centers of the balls need not belong to $\mathcal F$, but they should have finite norms. The \emph{entropy} is the logarithm of the covering number.
\end{definition}

\begin{definition}[Envelope function]
An \emph{envelope function} of a class $\mathcal F$ is any function $x \mapsto F(x)$ such that $|f(x)| \leq F(x)$ for every $x$ and $f$.
\end{definition}

\subsection{Key tools} \label{app-sub:asymptotics-key-tools}
In this subsection, we introduce some key tools to derive our asymptotic results. Let $ \{ Z_i\}_{i=1}^\infty$ be a sequence of i.i.d. copies of the random element $ Z$  taking values in the measure space $({\mathcal{Z}}, \mathcal{A}_{{\mathcal{Z}}})$ according to the probability law $P$ on that space. Let $\mathcal{F}_P= \{f_{t,P} : t \in T\}$ be a set  of suitably measurable functions $z \longmapsto f_{t,P}(z)$ mapping ${\mathcal{Z}}$ to $\mathbb{R}$, equipped with a measurable envelope $F_P: \mathcal{Z} \longmapsto \mathbb{R}$.  The class is indexed by $P \in \mathcal{P}$ and $t \in T$, where $T$ is a fixed, totally bounded semi-metric space equipped with a semi-metric $d_{T}$. Let $N(\epsilon,\mathcal{F}_P, \| \cdot \|_{Q,2})$ denote the $\epsilon$-covering number of the class of functions $\mathcal{F}_P$ with respect to the $L^{2}(Q)$ seminorm $\| \cdot \|_{Q,2}$ for $Q$ a finitely-discrete measure on $({\mathcal{Z}}, \mathcal{A}_{{\mathcal{Z}}})$.

\begin{theorem}[\textbf{Uniform in $P$ Donsker Property}]\label{lemma: uniform Donsker} Suppose that for $q > 2$
\begin{eqnarray}\label{eq: characteristics1}
&& \sup_{P \in \mP} \|F_P\|_{P,q} \leq C  \text{ and }  \  \lim_{\delta \searrow 0} \sup_{P \in \mathcal{P}} \sup_{ d_{T}(t, \bar t) \leq \delta} \| f_{t,P} - f_{\bar t,P}\|_{P,2} = 0.
\end{eqnarray}
Furthermore, suppose that
\begin{eqnarray}\label{eq: characteristics2}
\lim_{\delta \searrow 0} \sup_{P \in \mathcal{P}}\int_0^\delta \sup_Q \sqrt{ \log N(\epsilon \|F_P\|_{Q,2}, \mathcal{F}_P, \| \cdot \|_{Q,2})} d \epsilon =0.
\end{eqnarray}
Let $\mathbb{G}_P$ denote the P-Brownian Bridge, and  consider $$Z_{n,P} := (Z_{n,P}(t))_{t \in T} := (\Gn(f_{t,P}))_{t \in T},  \ \  Z_P := (Z_P(t))_{t \in T} := (\mathbb{G}_P(f_{t,P}))_{t \in T}.$$

\noindent (a) Then, $Z_{n,P} \rightsquigarrow Z_P$ in $\ell^\infty(T)$ uniformly in $P \in \mathcal{P}$, namely
$$
\sup_{P \in \mP} \sup_{h \in \textrm{BL}_1(\ell^{\infty}(T))} | \Ep^*_P h (Z_{n,P}) - \Ep_P h(Z_P) | \to 0.
$$
(b) The process $Z_{n,P}$ is stochastically equicontinuous  uniformly in $P \in \mathcal{P}$, i.e.,  for every $\varepsilon>0$,
$$
\lim_{\delta \searrow 0} \limsup_{n \to \infty} \sup_{P \in \mathcal{P}} \Pr^*_P\left (  \sup_{d_T  (t, \bar t) \leq \delta} | Z_{n,P}(t) - Z_{n,P}(\bar t)| > \varepsilon \right ) =0.
$$
(c) The limit process $Z_P$ has the following continuity properties:
$$
\sup_{P \in \mP} \Ep_P \sup_{t \in T} |Z_P(t) | < \infty,  \ \  \lim_{\delta \searrow 0} \sup_{P \in \mathcal{P}} \Ep_P \sup_{d_T(t, \bar t) \leq \delta} |Z_{P}(t) - Z_{P}(\bar t)| = 0.
$$
(d) The paths $t \longmapsto  Z_P(t)$ are a.s. uniformly continuous on $(T,d_T)$ under each $P \in \mathcal P$.
\end{theorem}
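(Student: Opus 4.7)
The plan is to establish (a)--(d) by the standard two-part recipe: (i) finite-dimensional convergence uniformly in $P\in\mathcal{P}$, and (ii) asymptotic equicontinuity of the process $\{\Gn f_{t,P} : t\in T\}$ uniformly in $P\in\mathcal{P}$ with respect to the intrinsic pseudo-metric $\rho_{P}(t,\bar t):=\|f_{t,P}-f_{\bar t,P}\|_{P,2}$. Once both hold, the uniform weak-convergence conclusion in (a) follows from a uniform-in-$P$ Prokhorov argument (adapt Theorem~2.8.2 of van der Vaart--Wellner so every supremum is taken over $\mathcal{P}$), and (b) is a restatement of (ii) after using the second half of \eqref{eq: characteristics1} to pass from $\rho_P$ to $d_T$. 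Parts (c) and (d) are then Gaussian-process properties of $Z_P$ read off from the Dudley entropy bound applied with the same uniform entropy integral.

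I would begin with (ii), the main technical ingredient. Fix $q>2$. The uniform envelope bound $\sup_{P\in\mathcal{P}}\|F_P\|_{P,q}\leq C$ permits truncation of the envelope at a slowly growing level with a remainder that vanishes uniformly in $P$. To the truncated class I apply a uniform chaining maximal inequality (in the spirit of van der Vaart--Wellner Theorems~2.14.1--2.14.5 or Chernozhukov--Chetverikov--Kato) to obtain
\begin{equation*}
\sup_{P\in\mathcal{P}} \Ep^{*}_{P}\!\!\sup_{d_T(t,\bar t)\leq \delta}\!\! |\Gn(f_{t,P}-f_{\bar t,P})| \;\lesssim\; J\bigl(\sigma(\delta)\bigr) + r_{n},
\end{equation*}
where $J(\sigma):=\sup_{P}\int_{0}^{\sigma}\sup_{Q}\sqrt{\log N(\epsilon\|F_P\|_{Q,2},\mathcal{F}_P,\|\cdot\|_{Q,2})}\,d\epsilon$, $\sigma(\delta):=\sup_{P}\sup_{d_T(t,\bar t)\leq\delta}\rho_P(t,\bar t)$, and $r_n\to 0$ from the truncation. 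The second condition in \eqref{eq: characteristics1} forces $\sigma(\delta)\to 0$ as $\delta\to 0$ uniformly in $P$, and \eqref{eq: characteristics2} then forces $J(\sigma(\delta))\to 0$. Markov's inequality yields (ii).

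For (i), fix any finite grid $\{t_{1},\dots,t_{m}\}\subset T$. Each coordinate of $(\Gn f_{t_{j},P})_{j=1}^{m}$ is a normalized sum of centered i.i.d.\ summands with $L^{q}$ norm bounded by $\|F_P\|_{P,q}\leq C$. The Lindeberg condition is verified uniformly in $P$ via this $L^{q}$ bound with $q>2$, and a multivariate Berry--Esseen estimate gives convergence to a centered Gaussian with covariance $P(f_{t_{i},P}f_{t_{j},P}) - Pf_{t_{i},P}Pf_{t_{j},P}$ at a rate depending only on $C$, hence uniformly in $P\in\mathcal{P}$. Combining (i) with (ii) establishes (a) and (b). For (c), the covariance kernel of $Z_P$ is continuous in $d_T$ by the second half of \eqref{eq: characteristics1}, so Dudley's inequality applied pathwise with the uniform entropy integral in \eqref{eq: characteristics2} gives $\sup_{P}\Ep_P \sup_{t\in T} |Z_P(t)|<\infty$ and the vanishing modulus-of-continuity statement. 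Part (d) then follows from classical sample-path continuity of Gaussian processes with $L^2$-continuous covariance.

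The main obstacle is keeping every constant in the chaining bound and every Berry--Esseen rate independent of $P\in\mathcal{P}$. This is possible only because \eqref{eq: characteristics1}--\eqref{eq: characteristics2} already take suprema over $\mathcal{P}$ (the envelope moment, the $L^2$ modulus, and the pooled uniform covering numbers), so the chaining integral and the Lindeberg remainder can be dominated by quantities that vanish uniformly in $P$. A secondary delicate point is that replacing $F_P$ by a truncated envelope should not inflate the pooled entropy integral beyond control; this is a standard permanence argument for uniform entropy classes but deserves explicit verification.
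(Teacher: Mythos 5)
The paper does not actually prove this theorem: it is restated from \citet{belloni2017program} (their Theorem B.1) and the proof is given purely by citation. Your sketch reconstructs essentially the same standard argument underlying that cited result---a uniform-in-$P$ finite-dimensional CLT (where, with only $q>2$ moments, the $(2+\delta)$-moment Lindeberg/Berry--Esseen variant is the right tool, as you note) combined with uniform asymptotic equicontinuity from the uniform entropy chaining bound with envelope truncation, followed by Dudley's bound and the $d_T$-to-$\rho_P$ comparison for parts (c)--(d)---so your approach is correct and matches the proof the paper relies on.
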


\begin{proof}
See the proof of Theorem B.1 in \citet{belloni2017program}.    
\end{proof}

\paragraph{Uniform in $P$ Validity of Multiplier Bootstrap} Let $(\xi_i)_{i=1}^n$ be i.i.d. multipliers  whose distribution does not depend on $P$,
such that $\Ep \xi = 0$,  $\Ep \xi^2 = 1$, and $\Ep |\xi|^q \leq C$ for $q >2$. Consider the multiplier empirical process:
$$Z^*_{n,P} := (Z^*_{n,P}(t))_{t \in T} := (\Gn(\xi f_{t,P}))_{t \in T} := \left (\frac{1}{\sqrt{n}} \sum_{i=1}^n \xi_i f_{t,P} (Z_i) \right)_{t \in T}.$$
Here $\mathbb{G}_n$ is taken to be an extended empirical processes defined by
the empirical measure that assigns mass $1/n$ to each point $(Z_i, \xi_i)$ for $i=1,...,n$.
Let $Z_P = (Z_P(t))_{t \in T} = (\mathbb{G}_P(f_{t,P}))_{t \in T}$ as defined in Theorem \ref{lemma: uniform Donsker}.

\begin{theorem}[\textbf{Uniform in $P$ Validity of Multiplier Bootstrap}]\label{lemma: uniform Donsker for bootstrap} Assume
 the conditions of Theorem \ref{lemma: uniform Donsker} hold.

(a) Then, the  following  unconditional convergence takes place,  $Z^*_{n,P} \rightsquigarrow Z_P$
 in $\ell^{\infty}(T)$ uniformly in $P \in \mathcal{P}$, namely
 $$
\sup_{P \in \mP} \sup_{h \in \textrm{BL}_1(\ell^{\infty}(T))} | \Ep^*_P h (Z^*_{n,P}) - \Ep_P h(Z_P) | \to 0.
$$

(b) The following conditional convergence takes place,  $Z^*_{n,P} \rightsquigarrow_B Z_P$
 in $\ell^{\infty}(T)$ uniformly in $P \in \mathcal{P}$, namely uniformly in $P \in \mathcal{P}$
$$
 \sup_{h \in \textrm{BL}_1(\ell^{\infty}(T))} | \Ep_{B_n} h (Z^*_{n,P}) - \Ep_P h(Z_P) | =  o_P^*(1),
$$
where $\Ep_{B_n}$ denotes the expectation over the multiplier weights $(\xi_{i})_{i=1}^n$ holding the data $(Z_i)_{i=1}^n$ fixed.
\end{theorem}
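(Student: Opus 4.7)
}

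The plan is to reduce both parts to Theorem \ref{lemma: uniform Donsker} applied to an augmented function class built from $\mathcal{F}_P$ together with the multipliers, and then to upgrade the unconditional result in (a) to the conditional result in (b) via standard multiplier-bootstrap technology, with care taken to keep every bound uniform in $P \in \mathcal{P}$.

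For (a), I would work on the enlarged probability space $(\mathcal{Z} \times \R, P \times P_\xi)$ and consider the class
\begin{equation*}
\tilde{\mathcal{F}}_P := \big\{(z, \eta) \mapsto \eta \cdot f_{t,P}(z) : t \in T\big\},
\end{equation*}
with envelope $\tilde F_P(z,\eta) := |\eta| F_P(z)$. The multiplier process can then be rewritten as $Z^*_{n,P}(t) = \Gn(\eta f_{t,P})$ on the enlarged space, and it is already centered under $P \times P_\xi$ since $\Ep \xi = 0$ and $\xi$ is independent of $Z$. The bulk of the work is checking that the hypotheses (\ref{eq: characteristics1})--(\ref{eq: characteristics2}) transfer to $\tilde{\mathcal{F}}_P$: the envelope-moment bound follows from independence via $\|\tilde F_P\|_{P \times P_\xi, q} = \|\xi\|_q \|F_P\|_{P,q}$; the $L^2$-continuity transfers verbatim from $\mathcal{F}_P$ using $\Ep[\xi^2]=1$; and the uniform entropy integral is controlled by conditioning on $\eta$ and applying a standard covering-number argument. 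Theorem \ref{lemma: uniform Donsker} then delivers $Z^*_{n,P} \rightsquigarrow \tilde Z_P$ in $\ell^\infty(T)$ uniformly in $P \in \mathcal{P}$, and a direct covariance calculation using $\Ep[\xi^2]=1$ shows $\tilde Z_P \stackrel{d}{=} Z_P$, so the claim follows.

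For (b), I would appeal to a conditional multiplier central limit theorem as in \citet{belloni2017program}, whose two ingredients are (i) convergence of the finite-dimensional conditional distributions given $(Z_i)_{i=1}^n$, obtained from the Lindeberg--Feller CLT applied to the triangular array $\{n^{-1/2} \xi_i f_{t,P}(Z_i)\}_{i=1}^n$ (the Lindeberg condition holds with inner probability approaching one by the envelope bound (\ref{eq: characteristics1})); and (ii) conditional stochastic equicontinuity, established through a symmetrization step followed by a chaining bound driven by the uniform entropy integral (\ref{eq: characteristics2}) together with Hoffmann--J{\o}rgensen-type maximal inequalities for the conditional multiplier sums. Combining (i) and (ii) with the bounded-Lipschitz portmanteau characterization yields conditional weak convergence to $Z_P$ in $\ell^\infty(T)$.

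The hard part will be making the conditional convergence in (b) \emph{uniform} in $P \in \mathcal{P}$, since classical proofs of the conditional multiplier CLT only deliver the result for each fixed $P$. The key is to trace every constant appearing in the symmetrization and chaining bounds and to verify that it depends only on the uniform quantities $C$ and $q$ from (\ref{eq: characteristics1})--(\ref{eq: characteristics2}); this is precisely the sharpening carried out in the proof of Theorem B.2 of \citet{belloni2017program}. A useful leverage is that part (a) already gives an unconditional modulus of continuity for $Z^*_{n,P}$ uniform in $P$, and this can be upgraded to an in-probability conditional modulus uniform in $P$ via a moment-type inequality, after which the bounded-Lipschitz portmanteau produces the desired $o^*_P(1)$ rate uniformly over $\mathcal{P}$.
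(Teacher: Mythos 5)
This theorem is not proved in the paper at all: it is reproduced verbatim from \citet{belloni2017program}, and the paper's ``proof'' is simply a pointer to their Theorem B.2, whose argument is exactly the route you sketch --- part (a) by applying the uniform-in-$P$ Donsker theorem to the multiplied class $\{(z,\eta)\mapsto \eta f_{t,P}(z)\}$ with envelope $|\eta|F_P$ (the entropy being preserved under multiplication by $\xi$, cf.\ Lemma \ref{lemma: andrews}(3)), and part (b) by a conditional multiplier CLT whose symmetrization/chaining constants are tracked so as to depend only on $C$ and $q$. One caveat on your covariance step in (a): the unconditional limit of $\Gn(\xi f_{t,P})$ has covariance $P(f_{s,P}f_{t,P})$, which coincides with the $P$-Brownian bridge covariance of $Z_P$ only when $Pf_{t,P}=0$; this centering is not stated in the theorem as copied, so strictly your claim $\tilde Z_P \stackrel{d}{=} Z_P$ needs that hypothesis, though it is harmless here because the theorem is applied to the score functions $\psi_y$, which are mean zero by Lemma \ref{lemma:moment-condition}.
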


\begin{proof}
See Theorem B.2 of \citet{belloni2017program} for the proof.
\end{proof}

\begin{definition}[\textbf{Uniform Hadamard Tangential Differentiability}]\label{def:uhd}
Consider a map $\phi: \mathbb{D}_\phi  \longmapsto\mathbb{E}$, where the domain of the map $\D_\phi $
is a subset of a normed space $ \D$ and the range is a subset of the normed space $\mathbb{E}$. Let
$\mathbb{D%
}_{0}$ be a normed space, with $\mathbb{D}%
_{0} \subset\mathbb{D}$, and  $\mathbb{D}_\rho$ be a compact metric space, a subset of $\mathbb{D}_\phi$. The map $\phi: \mathbb{D}_\phi \longmapsto\mathbb{E}$ is called \textit{Hadamard-differentiable uniformly} in $\rho \in \mathbb{D}_\rho$ tangentially to $\mathbb{D}_0$ with derivative map $h \longmapsto \phi'_\rho(h)$, if
\begin{equation*}
\Big |\frac{\phi(\rho_n+ t_{n} h_{n}) - \phi(\rho_n)}{t_{n}} -\phi_{\rho}^{\prime}(h)\Big|  \to 0,\ \   \Big| \phi'_{\rho_n} (h_n) -  \phi_{\rho}^{\prime}(h) \Big|    \to 0 , \ \ \ n \rightarrow\infty,
\end{equation*}
for all convergent sequences $\rho_n  \to \rho$ in $\mathbb{D}_\rho$, $t_{n} \rightarrow0$ in $\mathbb{R}$, and $h_{n} \rightarrow h \in \D_0$ in $\D$ such that $\rho_n+ t_{n} h_{n} \in\mathbb{D}_{\phi}$ for every $n$.  As a part of
the definition, we require
that the derivative map $h \longmapsto \phi_{\rho}^{\prime}(h)$ from $\mathbb{D}_0$
to $\mathbb{E}$ is linear for each $\rho \in \D_\rho$.
\end{definition}

\begin{theorem}[\textbf{Functional delta-method uniformly in $P \in \mP$}]\label{thm: delta-method} Let $\phi: \D_{\phi} \subset\D \longmapsto%
\mathbb{E}$ be Hadamard-differentiable uniformly in $ \rho \in \D_{\rho} \subset \D_{\phi}$ tangentially to $\mathbb{D%
}_{0}$ with derivative map  $\phi_{\rho}^{\prime}$. Let $\hat \rho_{n,P}$ be a sequence of stochastic processes taking values in $\D_{\phi}$, where each $\hat \rho_{n,P}$ is an estimator of the parameter $\rho_P \in \D_{\rho}$. Suppose there exists a sequence of constants $r_{n} \rightarrow\infty$ such that $Z_{n,P}= r_{n} (\hat \rho_{n,P}- \rho_P) \rightsquigarrow Z_P$  in $\D$ uniformly in $P \in \mP_n$. The limit process $Z_P$ is separable and takes its values in $\D_{0}$ for all $P \in \mP = \cup_{n \geq n_0} \mP_n$, where $n_0$ is fixed.  Moreover, the set of stochastic processes $\{Z_P: P \in \mP\}$  is relatively compact in the topology of weak convergence in $\D_0$, that is, every sequence  in this set can be split into weakly convergent subsequences. Then, $r_{n}\left( \phi(\hat \rho_{n,P}) - \phi(\rho_P) \right) \rightsquigarrow\phi_{\rho_P}^{\prime}(Z_P)$ in $\mathbb{E}$
 uniformly in $P \in \mP_n$. If $(\rho, h) \longmapsto \phi_{\rho}^{'}(h)$ is defined and continuous on the whole of $\D_\rho \times \D$, then the sequence $r_{n}\left( \phi(\hat \rho_{n,P}) - \phi(\rho_P) \right) -
\phi_{\rho_P}^{\prime}\left( r_{n} (\hat \rho_{n,P}- \rho_P)\right) $  converges to zero in outer probability  uniformly in $P \in \mP_n$.  Moreover, the set of stochastic processes $\{\phi'_{\rho_P}(Z_P): P \in \mP\}$  is relatively compact in the topology of weak convergence in $\mathbb{E}$.
\end{theorem}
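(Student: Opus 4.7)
The theorem has two parts, both obtained by applying a uniform-in-$P$ (functional) delta-method — one unconditional, one conditional on the data — to the asymptotic linearization of $\hat\theta$ established in Theorem \ref{theorem:uniform-clt} and its multiplier-bootstrap analogue. The compactness of $\mathbb{D}_\theta$ (containing $\theta^0_P$ for every $P$) and the assumed uniform Hadamard differentiability of $\phi$ tangentially to $UC(\mathcal Y)^K$ directly match the hypotheses of Theorem \ref{thm: delta-method}.

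\textbf{Step 1: Unconditional limit.} From Theorem \ref{theorem:uniform-clt}, $\sqrt n(\hat\theta-\theta^0) = Z_{n,P} + o_P(1)$ in $\ell^\infty(\mathcal Y)^K$ uniformly in $P\in\mathcal P_n$, with $Z_{n,P}\rightsquigarrow Z_P$, where the paths of $Z_P$ are a.s.\ uniformly continuous on $(\mathcal Y,d_{\mathcal Y})$. Hence $Z_P$ takes values in $\D_0:=UC(\mathcal Y)^K$ almost surely. The uniform-in-$P$ continuity bounds supplied by Theorem \ref{theorem:uniform-clt} imply that the family $\{Z_P: P\in\mathcal P\}$ is relatively compact in the topology of weak convergence on $\D_0$ (every sequence has a uniformly tight subsequence, using the equicontinuity estimate $\lim_{\delta\to 0}\sup_{P}E_P\sup_{d_{\mathcal Y}(y,y')\le\delta}\|\mathbb G_P\psi_y-\mathbb G_P\psi_{y'}\|=0$). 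Taking $\rho_P=\theta^0_P$, $\hat\rho_{n,P}=\hat\theta$, $r_n=\sqrt n$, $\D_\rho=\mathbb{D}_\theta$, Theorem \ref{thm: delta-method} yields
\begin{align*}
\sqrt n\bigl(\phi(\hat\theta)-\phi(\theta^0)\bigr) \rightsquigarrow \phi'_{\theta^0_P}(Z_P)=:T_P \quad\text{in } \ell^\infty(\mathcal Q),
\end{align*}
uniformly in $P\in\mathcal P_n$. Because $Z_P$ is a tight centered Gaussian process on $\D_0$ and $\phi'_{\theta^0_P}$ is a continuous linear map from $\D_0$ to $\ell^\infty(\mathcal Q)$, $T_P$ inherits tightness, centering, and Gaussianity.

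\textbf{Step 2: Bootstrap limit.} Write $\sqrt n(\hat\theta^*_y-\hat\theta_y)=n^{-1/2}\sum_{i=1}^n\xi_i\,\hat\psi_y(Z_i)$ with $\hat\psi_y=\psi_y(\cdot;\hat\theta_y,\hat\gamma_y)$ and decompose
\begin{align*}
\frac{1}{\sqrt n}\sum_{i=1}^n \xi_i\,\psi_y(Z_i;\theta_y,\gamma_y) \ +\ \frac{1}{\sqrt n}\sum_{i=1}^n \xi_i\bigl(\hat\psi_y(Z_i)-\psi_y(Z_i;\theta_y,\gamma_y)\bigr).
\end{align*}
Theorem \ref{lemma: uniform Donsker for bootstrap} applied to the class $\{\psi_y:y\in\mathcal Y\}$ (whose uniform Donsker property under $P\in\mathcal P_n$ is what drives Theorem \ref{theorem:uniform-clt}) shows the first term converges conditionally in law to $Z_P$ in $\ell^\infty(\mathcal Y)^K$, uniformly in $P\in\mathcal P_n$. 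The second term is shown to be $o_P(1)$ uniformly in $y$ and in $P$, by the same combination used in Theorem \ref{theorem:uniform-clt}: Neyman orthogonality (Lemma \ref{lemma:neyman-orthogonality}) to kill the first-order bias from $\hat\gamma_y-\gamma_y$, cross-fitting to secure independence between $\hat\gamma_y$ and $Z_i$, the rate condition $\tau_n=o(n^{-1/4})$, and a conditional multiplier inequality exploiting $E[\xi_i\mid (Z_j)_{j=1}^n]=0$ together with $E|\xi_i|^q\le C$. Hence $\sqrt n(\hat\theta^*-\hat\theta)\rightsquigarrow_B Z_P$ in $\ell^\infty(\mathcal Y)^K$ uniformly in $P\in\mathcal P_n$.

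\textbf{Step 3: Bootstrap delta-method.} Invoke the bootstrap analogue of Theorem \ref{thm: delta-method} (the conditional version of the functional delta-method, uniform in $P$; it is a standard upgrade that uses exactly the same uniform Hadamard differentiability hypothesis, applied trajectory-wise to the bootstrap law). Since $\hat\theta\to\theta^0_P$ uniformly in $P$, the derivative at $\hat\theta$ converges to $\phi'_{\theta^0_P}$ and we obtain
\begin{align*}
\sqrt n\bigl(\phi(\hat\theta^*)-\phi(\hat\theta)\bigr) \rightsquigarrow_B \phi'_{\theta^0_P}(Z_P)=T_P \quad\text{in } \ell^\infty(\mathcal Q),
\end{align*}
uniformly in $P\in\mathcal P_n$, giving the second assertion.

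\textbf{Main obstacle.} The pivotal technical step is establishing the bootstrap linearization with estimated influence functions — namely, the uniform-in-$(y,P)$ negligibility of $n^{-1/2}\sum_i\xi_i(\hat\psi_y-\psi_y)(Z_i)$. Controlling this term requires simultaneously: (i) invoking Neyman orthogonality to reduce the leading bias to a second-order remainder in $\|\hat\gamma_y-\gamma_y\|$; (ii) leveraging cross-fit independence to turn conditional expectations into products of mean-square rates; and (iii) transferring a pointwise-in-$y$ rate to a uniform-in-$y$ rate via the same entropy bound underpinning Theorem \ref{theorem:uniform-clt}, all while retaining uniformity in $P\in\mathcal P_n$ and accommodating the extra randomness introduced by $\xi_i$. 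The remaining pieces — applying Theorems \ref{thm: delta-method} and its bootstrap counterpart — are essentially mechanical once this linearization is in hand.
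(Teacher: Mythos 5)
There is a fundamental mismatch between what you have proved and what the statement asks for. Theorem \ref{thm: delta-method} is the \emph{abstract} uniform-in-$P$ functional delta-method: it concerns an arbitrary normed-space-valued map $\phi:\D_\phi\subset\D\to\mathbb{E}$, an arbitrary sequence of stochastic processes $\hat\rho_{n,P}$ with rate $r_n$, and a family $\{Z_P\}$ assumed separable, $\D_0$-valued, and relatively compact in the topology of weak convergence. It says nothing about $\hat\theta$, $\psi_y$, cross-fitting, Neyman orthogonality, or the multiplier bootstrap. Your proposal instead proves (in outline) the \emph{application} of this tool to the paper's specific estimator, i.e.\ Theorem \ref{thm:uniform-bootstrap} from the main text, and in Step 1 you explicitly write ``Theorem \ref{thm: delta-method} yields\dots'' --- you are invoking the very result you were asked to establish, which makes the argument circular as a proof of Theorem \ref{thm: delta-method}. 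Steps 2 and 3 (bootstrap linearization and the conditional delta-method) are likewise machinery for the downstream application, not for the abstract statement.

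A genuine proof of Theorem \ref{thm: delta-method} would have to run along entirely different lines: fix an arbitrary sequence $P_n\in\mP_n$, use the assumed relative compactness of $\{Z_P:P\in\mP\}$ and the compactness of $\D_\rho$ to pass to a subsequence along which $Z_{P_n}\rightsquigarrow Z$ and $\rho_{P_n}\to\rho$ in $\D_\rho$, invoke an almost-sure representation (or the extended continuous mapping theorem) for the separable, $\D_0$-valued limits, and then apply the definition of \emph{uniform} Hadamard differentiability (Definition \ref{def:uhd}) along the convergent sequences $\rho_n\to\rho$, $t_n=1/r_n\to 0$, $h_n\to h\in\D_0$ to identify the limit as $\phi'_{\rho}(Z)$ and to control the remainder $r_n(\phi(\hat\rho_{n,P})-\phi(\rho_P))-\phi'_{\rho_P}(r_n(\hat\rho_{n,P}-\rho_P))$ in outer probability. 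None of these ingredients appear in your proposal. Note also that the paper itself does not prove this theorem; it simply cites Theorem B.3 of \citet{belloni2017program}, so either that citation or a self-contained argument of the above type is what is required here --- not a derivation of the main-text corollary that uses the theorem as a black box.
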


\begin{proof}
See Theorem B.3 of \citet{belloni2017program}  for the proof.
\end{proof}

Let $D_{n,P}= (W_{i,P})_{i=1}^n$ denote the data vector and $B_n= (\xi_i)_{i=1}^n$ be a vector of random variables used to generate bootstrap. Consider sequences of stochastic processes
$\hat \rho_{n,P}= \hat \rho_{n,P}(D_{n,P})$, where  $Z_{n,P}=r_n(\hat \rho_{n,P}- \rho_{P}) \rightsquigarrow Z_P$ in the normed space $\D$ uniformly in $P \in \mP_n$. Also consider the bootstrap stochastic process $Z^{*}_{n,P} = Z_{n,P}(D_{n,P}, B_{n})$ in  $%
\D$, where $Z_{n,P}$ is a measurable function of $B_n$ for each value of $D_n$.
Suppose that  $Z^{*}_{n,P}$ converges conditionally given $D_{n}$ in distribution to $Z_P$
uniformly in $P \in \mP_n$, namely that
$$
 \sup_{h \in BL_1(\D)} | \Ep_{B_n} [h(Z^*_{n,P})] -  \Ep_{P}  h(Z_P)| = o^*_P(1),
$$
uniformly in $P \in \mP_n$, where $\Ep_{B_n}$ denotes the expectation computed with respect
 to the law of $B_n$ holding the data $D_{n,P}$ fixed and $BL_1(\mathbb{D})$ denotes the space of functions mapping $\mathbb{D}$ to $[0,1]$ with Lipschitz norm at most 1. This is  denoted as $Z^{*}_{n,P}
\rightsquigarrow_{B} Z_P $ uniformly in $P \in \mP_n$. Finally, let $\hat \rho^{*}_{n,P} = \hat \rho_{n,P}+ Z^{*}_{n,P}/r_n $ denote the bootstrap or simulation
draw of $\hat \rho_{n,P}$.

\begin{theorem}[\textbf{Uniform in $P$ functional delta-method for bootstrap and other simulation methods}]
\label{theorem:delta-method-bootstrap}
Assume the conditions of Theorem \ref{thm: delta-method} hold.  Let $\hat \rho_{n,P}$ and $\hat \rho^{*}_{n,P}$ be maps as
indicated previously taking values in $\D_{\phi}$ such that $r_n%
(\hat \rho_{n,P}- \rho_P) \rightsquigarrow Z_P$ and  $r_n(\hat \rho^{*}_{n,P} - \hat \rho_{n, P}) \rightsquigarrow_{B} Z_P$ in $\D$ uniformly in $P \in \mP_n$. Then, $X^*_{n,P}=r_n(\phi(\hat \rho ^{*}_{n,P}) -
\phi(\hat \rho_{n,P})) \rightsquigarrow_{B} X_P= \phi_{\rho_P}^{\prime }(Z_P) $ uniformly in $P \in \mP_n$.
\end{theorem}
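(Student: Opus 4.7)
The plan is to follow the standard functional delta-method template, with bookkeeping adjusted to accommodate (i) the bootstrap mode of convergence $\rightsquigarrow_B$ and (ii) uniformity in $P \in \mP_n$. Write
\begin{equation*}
X^*_{n,P} \;=\; r_n\bigl[\phi(\hat\rho_{n,P} + Z^*_{n,P}/r_n) - \phi(\hat\rho_{n,P})\bigr] \;=\; g_n(\hat\rho_{n,P},\, Z^*_{n,P}),
\end{equation*}
where $Z^*_{n,P} := r_n(\hat\rho^*_{n,P} - \hat\rho_{n,P})$ and $g_n(\rho,h) := r_n[\phi(\rho + h/r_n) - \phi(\rho)]$. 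The target $\phi'_{\rho_P}(Z_P)$ is then the ``limit'' of $g_n$ evaluated at the pair $(\rho_P, Z_P)$. The uniform Hadamard differentiability of $\phi$ at $\rho_P$ tangential to $\D_0$ (Definition \ref{def:uhd}) is exactly the statement that $g_n(\rho_n,h_n) \to \phi'_{\rho_P}(h)$ whenever $\rho_n \to \rho_P$ in $\D_\rho$ and $h_n \to h \in \D_0$ in $\D$.

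First I would establish the auxiliary consistency $\hat\rho_{n,P} \to \rho_P$ in outer probability uniformly in $P \in \mP_n$; this is immediate from the unconditional weak convergence $r_n(\hat\rho_{n,P} - \rho_P) \rightsquigarrow Z_P$ (which is uniformly tight in $P$) combined with $r_n \to \infty$. Next, for each $P \in \mP_n$ I would apply the extended continuous mapping theorem to the pair $(\hat\rho_{n,P}, Z^*_{n,P})$ on the product space $\D_\rho \times \D$, viewing $g_n$ as a sequence of maps that converges to $(\rho,h) \mapsto \phi'_{\rho_P}(h)$ on sequences $(\rho_n,h_n) \to (\rho_P, h)$ with $h \in \D_0$. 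Since $Z_P$ takes values in $\D_0$ almost surely and has separable paths, the set of continuity points of the limiting map covers the support of $(\rho_P, Z_P)$, which is precisely the hypothesis needed for the extended CMT.

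To promote this to bootstrap convergence uniformly in $P$, I would argue by subsequences: given any subsequence $\{n_k\} \subset \N$ and any sequence $P_{n_k} \in \mP_{n_k}$, use the uniform Donsker hypotheses (via Theorem \ref{lemma: uniform Donsker for bootstrap}) together with a Skorohod-type almost sure representation on a common probability space to obtain, along a further subsequence, $\hat\rho_{n_k,P_{n_k}} \to \rho_{P_\infty}$ almost surely and $Z^*_{n_k,P_{n_k}} \to Z_{P_\infty}$ conditionally on the data in the sense of $\rightsquigarrow_B$. Applying uniform Hadamard differentiability pointwise in the representation gives $g_{n_k}(\hat\rho_{n_k,P_{n_k}}, Z^*_{n_k,P_{n_k}}) \rightsquigarrow_B \phi'_{\rho_{P_\infty}}(Z_{P_\infty})$ along the subsequence. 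Since the subsequence is arbitrary, the asserted uniform-in-$P$ bootstrap convergence of $X^*_{n,P}$ to $\phi'_{\rho_P}(Z_P)$ follows, with the target process $\phi'_{\rho_P}(Z_P)$ being tight by the last statement of Theorem \ref{thm: delta-method}.

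The main obstacle will be keeping the two layers of randomness straight: the data $D_{n,P}$ drives $\hat\rho_{n,P}$, while the bootstrap weights $B_n$ drive $Z^*_{n,P}$ conditionally on $D_{n,P}$. Uniform Hadamard differentiability is a deterministic statement about sequences, but $\hat\rho_{n,P}$ and $Z^*_{n,P}$ are converging in different probabilistic modes, so the combination must be handled either via a joint almost sure representation (delicate because $\rightsquigarrow_B$ is a conditional notion) or via a direct bound using the $\textrm{BL}_1$ metrization, where one writes $|\Ep_{B_n} h(X^*_{n,P}) - \Ep_P h(\phi'_{\rho_P}(Z_P))|$, inserts the approximation $g_n(\hat\rho_{n,P}, \cdot) \approx \phi'_{\rho_P}(\cdot)$ valid on the high-probability event $\{\hat\rho_{n,P}\text{ close to }\rho_P\}$, and controls the remainder using the uniform tightness of $Z^*_{n,P}$ under $\Ep_{B_n}$ supplied by Theorem \ref{lemma: uniform Donsker for bootstrap}(b). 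This last route avoids the Skorohod step and is the one I would attempt first.
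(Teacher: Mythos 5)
Your sketch is essentially correct, but note that the paper itself offers no argument for this theorem: its ``proof'' is a one-line citation to Theorem B.4 of \citet{belloni2017program}, and your proposal reconstructs precisely the strategy used there (and in the van der Vaart--Wellner conditional delta method): write $X^*_{n,P}=g_n(\hat\rho_{n,P},Z^*_{n,P})$ with $g_n(\rho,h)=r_n[\phi(\rho+h/r_n)-\phi(\rho)]$, use consistency $\hat\rho_{n,P}\to\rho_P$ plus \emph{uniform} Hadamard differentiability to handle the moving base point, and pass to subsequences $P_n\in\mP_n$ using the compactness of $\D_\rho$ and the relative compactness of $\{Z_P\}$ assumed in Theorem \ref{thm: delta-method}. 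The only point needing care is the one you flag yourself: $\rightsquigarrow_B$ is a conditional notion, so you do not get almost surely convergent realizations of $Z^*_{n,P}$ to feed into the differentiability definition; the clean resolution is your second route --- work conditionally on the data (where $\hat\rho_{n,P}(\omega)$ is a deterministic sequence converging to $\rho_P$ along the chosen subsequence), apply the extended continuous mapping theorem or a direct $BL_1(\mathbb{D})$ bound on the high-probability event where $\hat\rho_{n,P}$ is close to $\rho_P$, and control the remainder via the conditional tightness of $Z^*_{n,P}$ from Theorem \ref{lemma: uniform Donsker for bootstrap}(b). With that route made primary, your outline matches the cited proof and I see no substantive gap.
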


\begin{proof}
See Theorem B.4  of \citet{belloni2017program} for the proof.
\end{proof}

\begin{lemma}[\textbf{Maximal Inequality \cite{chernozhukov2014gaussian}}]
\label{lemma:CCK} Suppose that $F\geq \sup_{f \in \mathcal{F}}|f|$ is a measurable envelope
with $\| F\|_{P,q} < \infty$ for some $q \geq 2$.  Let $M = \max_{i\leq n} F(Z_i)$ and $\sigma^{2} > 0$ be any positive constant such that $\sup_{f \in \mF}  \| f \|_{P,2}^{2} \leq \sigma^{2} \leq \| F \|_{P,2}^{2}$. Suppose that there exist constants $a \geq e$ and $v \geq 1$ such that
$\log \sup_{Q} N(\epsilon \| F \|_{Q,2}, \mF,  \| \cdot \|_{Q,2}) \leq  v (\log a + \log(1/\epsilon)), \ 0 <  \epsilon \leq 1.
$
Then
\begin{equation*}
\Ep_P [ \| \bG_{n} \|_{\mF} ] \leq C  \left( \sqrt{v\sigma^{2} \log \left ( \frac{a \| F \|_{P,2}}{\sigma} \right ) } + \frac{v\| M \|_{\Pr_P, 2}}{\sqrt{n}} \log \left ( \frac{a \| F \|_{P,2}}{\sigma} \right ) \right),
\end{equation*}
where $C$ is an absolute constant.  Moreover, for every $t \geq 1$, with probability $> 1-t^{-q/2}$,
\begin{multline*}
\| \bG_{n} \|_{\mF} \leq (1+\alpha) \Ep_P [ \| \bG_{n} \|_{\mF} ] + C(q) \Big [ (\sigma + n^{-1/2} \| M \|_{\Pr_P,q}) \sqrt{t}
+  \alpha^{-1}  n^{-1/2} \| M \|_{\Pr_P,2}t \Big ], \ \forall \alpha > 0,
\end{multline*}
where $C(q) > 0$ is a constant depending only on $q$.  In particular, setting $a \geq n$ and $t = \log n$,
with probability $> 1- c(\log n)^{-1}$,
\begin{equation} \label{simple bound}
\| \bG_{n} \|_{\mF} \leq C(q,c) \left ( \sigma \sqrt{v \log \left ( \frac{a \| F \|_{P,2}}{\sigma} \right ) } + \frac{v
 \| M \|_{\Pr_P,q} } {\sqrt{n}}\log \left ( \frac{a \| F \|_{P,2}}{\sigma} \right ) \right),
\end{equation}
where $  \| M \|_{\Pr_P,q}  \leq n^{1/q} \| F\|_{P,q}$ and  $C(q,c) > 0$ is a constant depending only on $q$ and $c$.
\end{lemma}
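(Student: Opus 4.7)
The plan is to combine symmetrization and chaining for the expectation bound with Talagrand's concentration inequality (Bousquet's form) for the tail bound, which is the standard Chernozhukov--Chetverikov--Kato strategy, organized in three stages.

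\emph{Stage 1 --- chaining for the expectation.} Apply the classical symmetrization inequality $\mathbb{E}_P \|\mathbb{G}_n\|_\mathcal{F} \leq 2\, \mathbb{E}_P \|\mathbb{G}_n^\varepsilon\|_\mathcal{F}$, where $\mathbb{G}_n^\varepsilon f := n^{-1/2}\sum_{i=1}^{n} \varepsilon_i f(Z_i)$ with i.i.d.\ Rademacher signs $\{\varepsilon_i\}$ independent of the data. Conditional on $Z_1,\dots,Z_n$, Hoeffding's inequality makes the symmetrized process sub-Gaussian with respect to the empirical metric $d_n(f,g)=\|f-g\|_{\mathbb{P}_n,2}$, and Dudley's entropy bound yields
\begin{equation*}
\mathbb{E}[\,\|\mathbb{G}_n^\varepsilon\|_\mathcal{F} \mid Z_1,\dots,Z_n\,] \lesssim \int_0^{\widehat\sigma} \sqrt{\log N(\epsilon, \mathcal{F}, d_n)}\, d\epsilon,
\end{equation*}
with $\widehat\sigma := \sup_{f\in\mathcal F} \|f\|_{\mathbb{P}_n,2}$. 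Plugging in the VC-type hypothesis with $Q = \mathbb{P}_n$ and substituting $u = \epsilon/\|F\|_{\mathbb{P}_n,2}$, elementary calculus collapses the integral to $\widehat\sigma \sqrt{v \log(a\|F\|_{\mathbb{P}_n,2}/\widehat\sigma)}$ up to absolute constants.

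\emph{Stage 2 --- derandomizing $\widehat\sigma$.} To replace empirical quantities by their population analogs, run the same maximal inequality recursively on the squared class $\mathcal{F}^2 := \{f^2 : f \in \mathcal F\}$. Since $|f^2-g^2|\leq 2F|f-g|$, the $L^2(Q)$ covering numbers of $\mathcal F^2$ inherit the VC-type bound of $\mathcal F$ up to a known factor, while a pointwise envelope of $\mathcal F^2$ is $F\cdot M$. This recursion yields $\mathbb{E}[\widehat\sigma^2] \leq \sigma^2 + c\,v\,\|M\|_{P,2}\log(a\|F\|_{P,2}/\sigma)/\sqrt{n}$; taking square roots and using $\sqrt{a+b}\leq \sqrt a + \sqrt b$ together with Jensen's inequality produces the additive $n^{-1/2}\,v\,\|M\|_{P,2}\log(a\|F\|_{P,2}/\sigma)$ correction in the stated expectation bound.

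\emph{Stage 3 --- tail bound via Talagrand.} Apply Bousquet's version of Talagrand's concentration inequality for empirical processes: for every $t > 0$ and every almost-sure envelope $U$, with probability at least $1-e^{-t}$,
\begin{equation*}
\|\mathbb{G}_n\|_\mathcal{F} \leq (1+\alpha)\,\mathbb{E}_P\|\mathbb{G}_n\|_\mathcal{F} + \sigma\sqrt{2t} + c\,\alpha^{-1}\, U\, t/\sqrt{n}.
\end{equation*}
Since only $L^q$ control on $M$ is assumed, truncate at $U = \|M\|_{P,q}\, t^{1/q}$; by Markov's inequality this truncation fails only on an event of probability $O(t^{-q/2})$, matching the claim. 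Careful bookkeeping and optimization over the truncation level yields the stated $n^{-1/2}\|M\|_{P,q}\sqrt{t}$ and $\alpha^{-1} n^{-1/2}\|M\|_{P,2}\, t$ terms, and specializing to $t = \log n$ with $a \geq n$ delivers the final in-probability bound. The main obstacle is Stage 2: the entropy of $\mathcal{F}^2$ must be bounded sharply enough that the self-bounding argument preserves the exact logarithmic factor $\log(a\|F\|_{P,2}/\sigma)$, and the resulting implicit inequality for $\mathbb{E}\widehat\sigma^2$ must be solved without inflating the dependence on $v$ and so that only the weaker norm $\|M\|_{P,2}$ (rather than $\|M\|_{P,q}$) appears in the expectation bound.
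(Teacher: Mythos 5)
The paper does not actually prove this lemma: it is quoted verbatim from \citet{chernozhukov2014gaussian}, and the appendix's ``proof'' is only the citation, so there is no in-paper argument to compare against. Your sketch is therefore an outline of the original proof, and at the architectural level it is the right one: symmetrization plus a conditional Dudley entropy-integral bound under the VC-type entropy hypothesis, a self-bounding/derandomization step to pass from the empirical radius $\widehat\sigma$ to $\sigma$ (which is where $\|M\|_{P,2}$ enters the expectation bound), and a Talagrand-type concentration inequality adapted to an envelope with only $q$ finite moments for the deviation bound. This is exactly the strategy behind the cited result.

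Two steps, however, are either wrong as specified or left unproved, and they are precisely the steps that make the lemma nontrivial. First, in Stage 3 the truncation level $U=\|M\|_{P,q}\,t^{1/q}$ does not deliver the stated probability: Markov applied to $M^{q}$ gives an exceptional event of probability $t^{-1}$, not $t^{-q/2}$, and plugging this $U$ into Bousquet's bound produces a term of order $n^{-1/2}\|M\|_{P,q}\,t^{1+1/q}$ rather than the claimed $n^{-1/2}\|M\|_{P,q}\sqrt{t}$. To match the statement the truncation must be taken at level of order $\|M\|_{P,q}\sqrt{t}$, and the contribution of the unbounded (truncated-away) part of the class has to be controlled via a Hoffmann--J{\o}rgensen / Fuk--Nagaev-type argument rather than simply discarded by Markov; ``optimization over the truncation level'' does not substitute for that argument. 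Second, Stage 2 is the crux and is deferred: writing that ``a pointwise envelope of $\mathcal F^{2}$ is $F\cdot M$'' conflates an envelope function with the random variable $M$; in the actual proof the factor $\|M\|_{P,2}$ arises by applying the Ledoux--Talagrand contraction principle to the symmetrized process indexed by $f^{2}$ (with random Lipschitz constant $2\max_{i\le n}F(Z_i)$), followed by Cauchy--Schwarz and the solution of the resulting quadratic (self-bounding) inequality for the expected supremum; without carrying this out, the additive $n^{-1/2}\,v\,\|M\|_{P,2}\log\!\left(a\|F\|_{P,2}/\sigma\right)$ term is not established. So your plan identifies the correct route, but as written it does not constitute a proof; for the purposes of this paper the appropriate resolution is the one the authors take, namely invoking the published proof.
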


\begin{proof}
See \citet{chernozhukov2014gaussian} for the proof.     
\end{proof}

\begin{lemma}[\textbf{{Algebra for Covering Entropies}}] \text{  } \label{lemma: andrews}
(1) Let $\F$ be a VC subgraph class with a finite VC index $k$ or any
other class whose entropy is bounded above by that of such a VC subgraph class, then
the covering entropy of $\mF$ obeys:
\begin{equation*}
 \sup_{Q} \log  N(\epsilon \|F\|_{Q,2}, \F,  \| \cdot \|_{Q,2}) \lesssim 1+ k \log (1/\epsilon)\vee 0
\newline
\end{equation*}
(2) For any measurable classes of functions $\F$ and $\F^{\prime
} $ mapping $\mathcal{Z}$ to $\mathbb{R}$,
$$\begin{array}{l}
\log N(\epsilon \Vert F+F^{\prime }\Vert _{Q,2},\F+\F^{\prime
}, \| \cdot \|_{Q,2})\leq \log   N\left(\mbox{$ \frac{\epsilon }{2}$}\Vert F\Vert _{Q,2},\F, \| \cdot \|_{Q,2}\right)
+ \log N\left( \mbox{$ \frac{\epsilon }{2}$}\Vert F^{\prime }\Vert _{Q,2},\F^{\prime
}, \| \cdot \|_{Q,2}\right), \\
\log  N(\epsilon \Vert F\cdot F^{\prime }\Vert _{Q,2},\F\cdot \F^{\prime
}, \| \cdot \|_{Q,2})\leq \log   N\left( \mbox{$ \frac{\epsilon }{2}$}\Vert F\Vert _{Q,2},\F, \| \cdot \|_{Q,2}\right)
+ \log N\left( \mbox{$ \frac{\epsilon }{2}$}\Vert F^{\prime }\Vert _{Q,2},\F^{\prime
}, \| \cdot \|_{Q,2}\right), \\
 N(\epsilon \Vert F\vee F^{\prime }\Vert _{Q,2},\F\cup \F^{\prime
}, \| \cdot \|_{Q,2})\leq   N\left(\epsilon\Vert F\Vert _{Q,2},\F, \| \cdot \|_{Q,2}\right)
+ N\left( \epsilon\Vert F^{\prime }\Vert _{Q,2},\F^{\prime
}, \| \cdot \|_{Q,2}\right).
\end{array}
$$

(3)  Given a measurable class $%
\mathcal{F}$ mapping $\mathcal{Z}$ to $\mathbb{R}$ and a random variable $\xi$ taking values in $\mathbb{R}$,
\begin{equation*}
 \log \sup_{Q} N(\epsilon \Vert |\xi|F\Vert _{Q,2},\xi\F, \| \cdot \|_{Q,2})\leq \log \sup_{Q} N\left(
\epsilon /2\Vert F\Vert _{Q,2},\F, \| \cdot \|_{Q,2}\right)
\end{equation*}
(4)  Given measurable classes $\F_j$ and envelopes $F_j$, $j=1,\ldots,k$, mapping $\mathcal{Z}$ to $\mathbb{R}$, a function $\phi:\mathbb{R}^k\to\mathbb{R}$ such that for $f_j,g_j\in\F_j$,
$ |\phi(f_1,\ldots,f_k) - \phi(g_1,\ldots,g_k) | \leq \sum_{j=1}^k L_j(x)|f_j(x)-g_j(x)|$, $L_j(x)\geq 0$, and fixed functions $\bar f_j \in \F_j$,  the class of functions $\mathcal{L}=\{\phi(f_1,\ldots,f_k)-\phi(\bar f_1,\ldots,\bar f_k):f_j \in\mathcal{F}_j, j=1,\ldots,k\}$ satisfies
\begin{equation*}
 \log \sup_Q N(\epsilon\|\sum_{j=1}^kL_jF_j\|_{Q,2},\mathcal{L}, \| \cdot \|_{Q,2})\leq \sum_{j=1}^k\log  \sup_Q  N\left(
\mbox{$\frac{\epsilon}{k}$}\|F_j\|_{Q,2},\F_j, \| \cdot \|_{Q,2}\right).
\end{equation*}
\end{lemma}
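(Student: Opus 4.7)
My plan is to prove the four parts in order, with $Q$ throughout denoting an arbitrary finitely supported probability measure. Part (1) I would invoke rather than reprove: it is the classical uniform entropy bound for VC-subgraph classes of finite index $k$ (Haussler; see van der Vaart and Wellner, 1996), which states
\begin{equation*}
\sup_Q N(\epsilon\|F\|_{Q,2},\F,\|\cdot\|_{Q,2}) \;\leq\; C\,k\,(16e)^{k}(1/\epsilon)^{2(k-1)},
\end{equation*}
and on taking logarithms yields the stated $1+k\log(1/\epsilon)$ bound; the ``dominated in entropy'' case is then immediate.

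For part (2), all three inclusions follow by constructing product grids of approximate centers. For the sum, take $(\epsilon/2)\|F\|_{Q,2}$- and $(\epsilon/2)\|F'\|_{Q,2}$-covers $\{\bar f_i\}$ and $\{\bar g_j\}$; then $\{\bar f_i+\bar g_j\}$ covers $\F+\F'$ with radius at most $\tfrac{\epsilon}{2}\|F\|_{Q,2}+\tfrac{\epsilon}{2}\|F'\|_{Q,2}\leq \epsilon\|F+F'\|_{Q,2}$, using $F,F'\geq 0$ pointwise so that $\|F\|_{Q,2}\vee\|F'\|_{Q,2}\leq\|F+F'\|_{Q,2}$. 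The union claim is immediate by merging the two center sets. The product claim uses the identity $|fg-\bar f\bar g|\leq F'|f-\bar f|+F|g-\bar g|$ together with a change of measure: covering $\F$ in $L^{2}(Q_{F'})$ with $dQ_{F'}\propto F'^{2}dQ$ gives
\begin{equation*}
\|F'(f-\bar f)\|_{Q,2}\;=\;\|F'\|_{Q,2}\,\|f-\bar f\|_{Q_{F'},2}\;\leq\;\tfrac{\epsilon}{2}\|F'\|_{Q,2}\|F\|_{Q_{F'},2}\;=\;\tfrac{\epsilon}{2}\|FF'\|_{Q,2},
\end{equation*}
and symmetrically for the $g$ term; uniformity of the entropy hypothesis in the underlying measure ensures the covers under $Q_{F'}$ and $Q_{F}$ obey the same bounds as under $Q$.

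Part (3) is a direct application of the same reweighting: for fixed $\xi$, set $d\tilde Q \propto \xi^{2}dQ$, so $\|\xi(f-\bar f)\|_{Q,2}=\|\xi\|_{Q,2}\|f-\bar f\|_{\tilde Q,2}$ and any $(\epsilon/2)\|F\|_{\tilde Q,2}$-cover of $\F$ under $\tilde Q$ transfers to an $(\epsilon/2)\||\xi|F\|_{Q,2}$-cover of $\xi\F$ under $Q$ with the same number of centers. Part (4) iterates this $k$ times: with $(\epsilon/k)$-covers of each $\F_j$ under $d\tilde Q_j\propto L_j^{2}dQ$ and chosen approximants $\bar f_j^*$, the grid $\{\phi(\bar f_1^*,\dots,\bar f_k^*)\}$ has $L^{2}(Q)$ radius bounded by
\begin{equation*}
\sum_{j=1}^{k}\|L_j(f_j-\bar f_j^*)\|_{Q,2}\;\leq\; \frac{\epsilon}{k}\sum_{j=1}^{k}\|L_jF_j\|_{Q,2}\;\leq\; \epsilon\,\Bigl\|\sum_{j=1}^{k}L_jF_j\Bigr\|_{Q,2},
\end{equation*}
where the last step uses $L_jF_j\geq 0$ pointwise, so that $\|L_jF_j\|_{Q,2}\leq\|\sum_{j'}L_{j'}F_{j'}\|_{Q,2}$ for each $j$.

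The main obstacle is the change-of-measure bookkeeping in parts (2)--(4): one must keep careful track of which reweighted measure governs each sub-cover so that the envelope norms recombine into precisely the claimed envelope products $FF'$, $|\xi|F$, and $\sum_j L_jF_j$, and must verify that $Q_{F'}$, $\tilde Q$, and $\tilde Q_j$ remain finitely supported probability measures so that the uniform-in-$Q$ entropy hypothesis applies to them verbatim. Beyond this, every step reduces to the triangle inequality and the pointwise non-negativity of envelopes.
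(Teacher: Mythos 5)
Your proposal is correct, but it is worth noting that the paper does not actually prove this lemma: it simply cites Andrews (1994) for parts (1) and (2), observes that (3) follows from (2) by treating $\xi\mathcal{F}$ as the product of $\mathcal{F}$ with the singleton class $\{\xi\}$, and cites Lemma K.1 of Belloni et al.\ (2017) for part (4). What you have written out is precisely the standard argument underlying those citations: Haussler's bound for VC-subgraph classes in (1), product grids of centers plus the triangle inequality for sums and unions, and the change-of-measure trick ($dQ_{F'} \propto (F')^{2}\,dQ$, etc.) for products, multiplication by $\xi$, and Lipschitz compositions. Your route is therefore more self-contained than the paper's, at the cost of some bookkeeping that you correctly identify as the delicate point: the reweighted measures are only finitely discrete probability measures when the relevant envelope has positive $Q$-norm (the degenerate case is trivially covered by a single center), and the cover produced under $Q_{F'}$ or $\tilde Q_j$ controls the entropy only after taking $\sup_Q$ on the right-hand side, which is how the product and composition bounds are meant to be read and how they are used in the paper. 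One further small point you should make explicit: in the product and Lipschitz steps you need the approximating centers $\bar f$ to satisfy $|\bar f|\leq F$ pointwise so that $|\bar f||g-\bar g|\leq F|g-\bar g|$; this is harmless because any center can be replaced by its pointwise truncation to $[-F,F]$ without increasing its distance to elements of $\mathcal{F}$. With those caveats your argument is complete and yields exactly the stated inequalities.
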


\begin{proof}
See \citet{andrews1994empirical} for the proofs of $(1)$ and $(2)$. $(3)$ follows from $(2)$. See  Lemma K.1 of \citet{belloni2017program} for the proof of $(4)$.
\end{proof}

\subsection{Regularity conditions for Theorem \ref{theorem:uniform-clt}} \label{app:regularity-conditions}
In this subsection, we lay out the regularity conditions for Theorem \ref{theorem:uniform-clt}. In what follows, let $\delta$, $c_0$, $c$, and $C$ denote some positive constants. Let $\Delta_n \searrow 0$, $\delta_n \searrow 0$, and $\tau_n \searrow 0$ be sequences of constants approaching zero from above at a speed at most polynomial in $n$.

\begin{assumption}[Moment condition problem]\label{ass: S1}
Uniformly for all $n \geq n_0$ and $P \in \mathcal{P}_n$, the following conditions hold:
(i) The true parameter value $\theta_y$ obeys \eqref{eq:moment_condition} and is interior relative to $\Theta_y \subset \Theta \subset \mathbb{R}^{K}$, namely there is a ball
of radius $\delta$ centered at $\theta_y$ contained in $\Theta_y$ for all $y\in \mathcal{Y}$, and $\Theta$ is compact.

(ii) For
$\nu := (\nu_k)_{k=1}^{2K} = (\theta, t)$,  each $w \in \W$ and $y\in \mathcal{Y}$,  the map $ \Theta_y \times \Gamma_y \ni  \nu \longmapsto \Ep_P[\psi_{y}^{(w)}(Z; \nu)]$ is twice continuously differentiable a.s. with derivatives obeying the integrability conditions specified in Assumption \ref{ass: S2}.

(iii) The following identifiability condition holds: $\|\Ep_P[\psi_y(Z, \theta, \gamma_y)])\| \geq 2^{-1} ( \|(\theta- \theta_y)\| \wedge c_0)\ \text{  for all } \theta \in \Theta_y$.
\end{assumption}

\begin{assumption}[Entropy and smoothness]\label{ass: S2}
The set $(\mathcal{Y}, d_{\mathcal{Y}})$ is a semi-metric space such that $\log N(\epsilon, \mathcal{Y},  d_{\mathcal{Y}}) \leq  C \log (\mathrm{e}/\epsilon) \vee 0$.  Let
$\alpha \in [1,2]$, and let $\alpha_1$ and $\alpha_2$ be some positive constants. Uniformly for all $n \geq n_0$ and $P \in  \mathcal{P}_n$,  the following conditions hold: 

(i) The set of functions $\mathcal{F}_0 = \{  \psi_{y}^{(w)}(Z; \theta_y, \gamma_y):  w  \in \W, y \in \mathcal{Y}\}$, viewed as functions of $Z$ is suitably measurable;  has an envelope function $F_0(Z)= \sup_{w\in \W, y \in \mathcal{Y}, \nu \in \theta_y\times\Gamma_y}|\psi_{y}^{(w)}(Z; \nu)|$ that is measurable with respect to $Z$ and obeys $\|F_0\|_{P, q} \leq C$, where $q\geq 4$ is a fixed constant; and has a uniform covering entropy obeying
$
\sup_Q  \log N(\epsilon \|F_0\|_{Q,2}, \mathcal{F}_0, \| \cdot \|_{Q,2}) \leq C  \log(\mathrm{e}/\epsilon) \vee 0.
$

(ii) For all $w \in \W$ and  $k,r \in [2K]$, and $\psi_{y}^{(w)}(Z) := \psi_{y}^{(w)}(Z; \theta_y, \gamma_y )$,
\begin{enumerate}
\item [(a)] $\sup_{y \in \mathcal{Y}, (\nu, \bar \nu) \in (\theta_y\times \Gamma_y)^2} \Ep_P[  ( \psi_{y}^{(w)}(Z; \nu) - \psi_{y}^{(w)}(Z; \bar \nu))^2] / \| \nu - \bar \nu\|^{\alpha}\leq C$, $P$-a.s.,
\item [(b)] $\sup_{d_\mathcal{Y} (y, \bar y) \leq \delta } \Ep_P[  ( \psi_{y}^{(w)}(Z) - \psi_{\bar{y}}^{(w)}(Z))^2] \leq C \delta^{ \alpha_1},$
\item [(c)]
 $\Ep_P  \sup_{y \in \mathcal{Y}, \nu \in \Theta_y\times \Gamma_y} |\partial_{\nu_r} \Ep_P \left [ \psi_{y}^{(w)}(Z; \nu)\right ]|^2  \leq C$,
\item[(d)] $\sup_{y \in \mathcal{Y}, \nu \in \Theta_y\times \Gamma_y} |\partial_{\nu_k} \partial_{\nu_r}  \Ep_P[\psi_{y}^{(w)}(Z; \nu)]| \leq  C,$ $P$-a.s.
\end{enumerate}
\end{assumption}

\begin{assumption}[Estimation of nuisance functions]\label{ass: AS}
The following conditions hold for each $n \geq n_0$ and all $P \in  \mathcal{P}_n$. The estimated functions $\hat \gamma_y = (\hat \gamma_y^{(w)})_{w=1}^{K} \in \mathcal{G}_{yn}$ with probability at least $1- \Delta_n$, where
$\mathcal{G}_{yn}$ is the set of measurable maps $x\longmapsto \gamma = (\gamma^{(w)})_{w=1}^{K}(x) \in \Gamma_y(x)$ such that
$$
\| \gamma^{(w)} - \gamma_y^{(w)}\|_{P,2} \leq \tau_n,  \quad \tau_n^2 \sqrt{n} \leq \delta_n,
$$
and whose complexity does not grow too quickly in the sense that
$\mathcal{F}_1 = \{  \psi_{y}^{(w)}(Z; \theta, \gamma): w\in \W, y\in \mathcal{Y}, \theta \in \Theta_y, \gamma \in \mathcal{G}_{yn} \}$
is suitably measurable and its uniform covering entropy obeys
 $$
\sup_Q  \log N(\epsilon \|F_1\|_{Q,2}, \mathcal{F}_1, \| \cdot \|_{Q,2}) \leq \log (e/\epsilon) \vee 0,
$$
where $F_1(Z)$ is an envelope for $\mathcal{F}_1$ which is measurable with respect to  $Z$ and satisfies $F_1(Z) \leq F_0(Z)$ for $F_0$  defined in Assumption \ref{ass: S2}.
\end{assumption}

\subsection{Proofs for Section \ref{sec:asymptotic}}
In this subsection, we state the proofs for Theorems \ref{theorem:uniform-clt} and \ref{thm:uniform-bootstrap}.

\begin{proof}[\textbf{Proof of Theorem \ref{theorem:uniform-clt}}]

\textbf{\textsc{Step 0.}} In the proof $a \lesssim b$ means that $a \leq A b$, where the constant
$A$ depends on the constants  in Assumptions \ref{ass: S2}, but not on $n$ once $n \geq n_0$, and not on $P \in \mathcal{P}_n$. In Step 1, we consider a sequence $P_n$ in $\mathcal{P}_n$, but for simplicity, we write  $P =P_n$ throughout the proof,  suppressing the index $n$.  Since the argument is asymptotic, we can  assume that $n \geq n_0$ in what follows.

Also, let
\begin{eqnarray}\label{eq: B}
 B(Z) & : = & \max_{ w \in \W, k \in [2K]} \sup_{\nu \in \Theta_y \times \Gamma_y, y \in \mathcal{Y}} \Big | \partial_{\nu_k} \Ep_P[\psi_{y}^{(w)}(Z; \nu)] \Big |.
\end{eqnarray}

\textbf{\textsc{Step 1.}} (A Preliminary Rate Result). In this step, we claim that  with probability $1- o(1)$, 
$$\sup_{y \in \mathcal{Y}}\| \hat \theta_y - \theta_y\| \lesssim \tau_n.$$

Since $\hat \theta_y$ is defined as a solution to the sample moment condition, we have
$$\| \En \psi_y(Z, \hat \theta_y, \hat\gamma_y)\| \leq \inf_{\theta \in \Theta_y}\| \En \psi_y(Z, \theta, \hat \gamma_y)\| + \epsilon_n \text{  for each $y \in \mathcal{Y}$, }$$ 
where $\epsilon_n=o(n^{-1/2})$. This implies
via triangle inequality that uniformly in $y \in \mathcal{Y}$ with probability $1-o(1)$
\begin{equation}\label{eq: rate proof}
\Big \| P [\psi_y(Z; \hat \theta_y, \gamma_y)] \Big \| \leq \epsilon_n + 2 I_1+ 2 I_2 \lesssim  \tau_n,
 \end{equation}
for $I_1$ and $I_2$ defined in Step 2 below.   The $\lesssim$  bound in  (\ref{eq: rate proof}) follows
 from Step 2 and from the assumption $\epsilon_n = o(n^{-1/2})$.
 Since by Assumption \ref{ass: S1} (iii), $2^{-1} ( \| (\hat \theta_y- \theta_y)\| \wedge c_0)$ does not exceed the left side of  (\ref{eq: rate proof}), we conclude that
$\sup_{y \in \mathcal{Y}}\| \hat \theta_y - \theta_y\|\lesssim \tau_n.$

\textbf{\textsc{Step 2.}} (Define and bound $I_1$ and $I_2$)  We claim that with probability $1- o(1)$:
\begin{eqnarray*}
I_1 & :=&  \sup_{\theta \in \Theta_y, y \in \mathcal{Y}} \Big \| \En \psi_y(Z; \theta, \hat \gamma_y) - \En \psi_y(Z; \theta, \gamma_y ) \Big \| \lesssim  \tau_n , \\
I_2 & := &   \sup_{\theta \in \Theta_y, y \in \mathcal{Y}}  \Big \| \En \psi_y(Z; \theta, \gamma_y) - P \psi_y(Z; \theta, \gamma_y ) \Big \| \lesssim  \tau_n.
\end{eqnarray*}
To establish this, we can bound $I_1 \leq 2I_{1a} + I_{1b}$ and $I_2 \leq I_{1a}$, where
with probability $1- o(1)$,
\begin{eqnarray*}
I_{1a} & :=&  \sup_{\theta \in \Theta_y, y \in \mathcal{Y}, \gamma \in \mathcal{G}_{yn} \cup \{ \gamma_y \}}  \Big\| \En \psi_y(Z; \theta, \gamma) - P \psi_y(Z; \theta, \gamma )  \Big \| \lesssim  \tau_n , \\
I_{1b} & := &   \sup_{\theta \in \Theta_y, y \in \mathcal{Y}, \gamma \in \mathcal{G}_{yn} \cup \{ \gamma_y \} }  \Big \| P \psi_y(Z; \theta, \gamma) - P \psi_y(Z; \theta, \gamma_y )  \Big \| \lesssim  \tau_n.
\end{eqnarray*}
These bounds in turn hold by the following arguments.

In order to bound $I_{1b}$, we employ Taylor's expansion and the triangle inequality. For $\bar \gamma(X, y, w, \theta)$ denoting a point on a line connecting vectors $\gamma(X)$ and $\gamma_y(X)$, and ${t_m}$ denoting the $m$th element of the vector $t$,
\begin{eqnarray*}
I_{1b} & \leq & \sum_{w=1}^{K} \sum_{m=1}^{K}  \sup_{\theta \in \Theta_y, y \in \mathcal{Y}, \gamma \in \mathcal{G}_{yn}} \Big  |  P\left [ \partial_{t_m} P \left [ \psi_{y}^{(w)}(Z, \theta, \bar \gamma(X, y, w, \theta))\right ] (\gamma^{(m)}(X) - \gamma_{y}^{(m)}(X))  \right]  \Big | \\
& \leq&K \cdot K \cdot  \|B\|_{P,2} \max_{y \in \Y, \gamma \in \mathcal{G}_{yn}, w \in \W} \| \gamma^{(w)} - \gamma_y^{(w)}\|_{P,2},
\end{eqnarray*}
where the last inequality holds by the definition of $B(Z)$ given earlier and H\"{o}lder's inequality.
 By  Assumption \ref{ass: S2}(ii)(c),  $\|B\|_{P,2}\leq C$, and  by Assumption \ref{ass: AS},  $\sup_{y \in \Y,  \gamma \in \mathcal{G}_{yn}, w\in \W} \| \gamma^{(w)} - \gamma_y^{(w)}\|_{P,2} \lesssim \tau_n$, hence we conclude that $I_{1b} \lesssim \tau_n$ since K is fixed.

In order to bound $I_{1a}$, we employ the maximal inequality of Lemma \ref{lemma:CCK}
to the class $$\mathcal{F}_{1} = \{ \psi_{y}^{(w)}(Z, \theta, \gamma):  w \in \W, y \in \mathcal{Y}, \theta \in \Theta_y, \gamma \in
\mathcal{G}_{yn} \cup \{ {\gamma}_{y}\}  \},$$
defined in Assumption \ref{ass: AS} and equipped with an envelope  $F_1 \leq F_0$, to conclude that with probability $1-o(1)$,
\begin{eqnarray*}
I_{1a} & \lesssim \tau_n.
\end{eqnarray*}
Here we use that $\log \sup_{Q} N(\epsilon \| F_1 \|_{Q,2}, \mF_1,  \| \cdot \|_{Q,2}) \leq \log (e/\epsilon) \vee 0$ by Assumption  \ref{ass: AS}; $\|F_0\|_{P, q} \leq C$ and $ \sup_{f \in \mathcal{F}_1} \| f\|^2_{P,2} \leq \sigma^2 \leq \| F_0 \|^2_{P,2}$ for $c \leq \sigma \leq C$ by Assumption \ref{ass: S2}(i).

\textbf{\textsc{Step 3.}} (Linearization)  By definition, we have
$$\sqrt{n} \|  \En \psi_y(Z; \hat \theta_y, \hat \gamma_y ) \| \leq \inf_{\theta \in \Theta_y} \sqrt{n} \|  \En \psi_y(Z; \theta, \hat \gamma_y ) \|+ \sqrt{n}\cdot \epsilon_n.$$
By Taylor's theorem, for all $y \in \mathcal{Y}$,
\begin{eqnarray*}
\sqrt{n}  \En \psi_y(Z; \hat \theta_y, \hat \gamma_y ) &=&
 \sqrt{n}  \En \psi_y(Z; \theta_y, \gamma_y)  \\
 &-& \sqrt{n} (\hat \theta_y - \theta_y) + \mathrm{D}_{u,0}(\hat \gamma_y -\gamma_y ) + II_1(y) + II_2(y),
\end{eqnarray*}
where the terms $II_1(y)$ and $II_2(y)$ are defined in Step 4 and $\mathrm{D}_{y,0}(\hat \gamma_y -\gamma_y )$ is treated in the next paragraph. Then, by the triangle inequality, for all $y \in \mathcal{Y}$ and Steps 4 and 5, we have
\begin{eqnarray*}
&& \left \| \sqrt{n}  \En \psi_y(Z; \theta_y, \gamma_y) - \sqrt{n} (\hat \theta_y - \theta_y) + \mathrm{D}_{y,0}(\hat \gamma_y -\gamma_y )\right \| \\
 & & \leq  \epsilon_n \sqrt{n}  + \sup_{y \in \mathcal {Y}} \Bigg  ( \inf_{\theta \in \Theta_y} \sqrt{n} \|  \En \psi_y(Z; \theta, \hat \gamma_y ) \| + \|II_1(y)\| + \|II_2(y)\|  \Bigg ) = o_P(1),
\end{eqnarray*}  where
the $o_P(1)$ bound
follows from Step 4,  $\epsilon_n \sqrt{n} = o(1)$  by assumption, and Step 5.

Moreover, by the orthogonality condition:
$$
\mathrm{D}_{y,0}(\hat \gamma_y- \gamma_y ) :=  \Bigg ( \sum_{m = 1}^{K}  \sqrt{n}  P\Big [ \partial_{t_m} P [\psi_{y}^{(w)}(Z; \theta_y, \gamma_y)] (\hat \gamma^{(m)}(X) - \gamma_y^{(m)}(X)) \Big ] \Bigg )_{w=1}^{K} = 0.
$$
Conclude using Assumption \ref{ass: S1} (iii) that
{\small $$
\sup_{y \in \mathcal{Y}}\left \| -\sqrt{n}  \En \psi_y(Z; \theta_y, \gamma_y) +  \sqrt{n} (\hat \theta_y - \theta_y) \right \| \leq o_P(1).
$$}\!

Furthermore, the empirical process $(\sqrt{n}  \En\psi_y(Z; \theta_y, \gamma_y))_{ y \in \mathcal{Y}}$ is equivalent to
an empirical process $\Gn$ indexed by $ \mathcal{F}_{P} := \Big \{\psi_{y}^{(w)} :  w \in \W,  y \in \mathcal{Y}  \Big \}_{},$
where $ \psi_{y}^{(w)}$ is the $w$-th element of  $\psi_y(Z; \theta_y, \gamma_y)$ and we make explicit the dependence of $\mF_P$ on $P$.

The conditions on $\mathcal{F}_0$ in Assumption \ref{ass: S2}(ii) imply that $\mathcal{F}_{P}$ has a uniformly well-behaved uniform covering entropy by Lemma \ref{lemma: andrews}, namely
$$
\sup_{P \in \mathcal{P} = \cup_{n \geq n_0} \mathcal{P}_n}\log \sup_{Q} N(\epsilon \| C F_0 \|_{Q,2}, \mF_P,  \| \cdot \|_{Q,2}) \lesssim \log (e/\epsilon) \vee 0,
$$
where $F_P = C F_0$ is an envelope for $\mF_P$ since $\sup_{f \in \mF_P} |f| \lesssim C F_0$ by Assumption \ref{ass: S2} (i).  The class $\mF_P$ is therefore Donsker uniformly in $P$ because $\sup_{P \in \mathcal{P}} \| F_P \|_{P,q} \leq C \sup_{P \in \mathcal{P}} \| F_0\|_{P,q}$ is bounded by Assumption \ref{ass: S2} (ii), and $\sup_{P \in \mathcal{P}} \| \psi_y - \psi_{\bar y} \|_{P,2} \to 0$ as $d_{\Y}(y, \bar y) \to 0$ by Assumption  \ref{ass: S2} (ii) (b).  Application of Theorem \ref{lemma: uniform Donsker} gives the results of the theorem.

\textbf{\textsc{Step 4.}} (Define and Bound $II_1(y)$ and $II_2(y)$).  Let  $II_1(y) := (II_{1w}(y))_{w=1}^{K}$ and  $II_2(y) = (II_{2w}(y))_{w=1}^{K}$, where{\small  \begin{eqnarray*}
&& II_{1w} (y) :=    \sum_{r,k = 1}^{2K} \sqrt{n} P \left [  \partial_{\nu_k} \partial_{\nu_r}  P[\psi_{y}^{(w)}(Z, \bar \nu_y(X,  w))] \{ \hat \nu_{yr}(X) - \nu_{yr}(X)\}\{ \hat \nu_{yk}(X) - \nu_{yk}(X)\} \right],\\
&& II_{2w} (y)  :=  \Gn(    \psi_{y}^{(w)}(Z, \hat \theta_y, \hat \gamma_y)- {\psi_{y}^{(w)}(Z, \theta_y, \gamma_y) )} ,
\end{eqnarray*}}
$\nu_y(X): =  ( \nu_{yk}(X))_{k=1}^{2K} : = (\theta_y^\top, \gamma_y(X)^\top)^\top,$   $\hat \nu_y(X): =  ( \hat \nu_{yk}(X))_{k=1}^{d_{\nu}}:=
(\hat \theta_y^\top, \hat \gamma_y^\top)^\top$, and $\bar\nu_y(X, w )$ is a vector on the line connecting $\nu_y(X)$ and $\hat \nu_y(X)$.

First, by Assumptions  \ref{ass: S2}(ii)(d) and \ref{ass: AS}, the claim of Step 1, and the H\"{o}lder inequality,
\begin{eqnarray*}
\max_{w \in \W} \sup_{y\in \mathcal{Y}}|II_{1w}(y)| & \leq&    \sup_{y \in \mathcal{Y}}\sum_{r,k = 1}^{2K} \sqrt{n} P \left [  C | \hat \nu_{yr}(X) - \nu_{yr}(X)| | \hat \nu_{yk}(X) - \nu_{yk}(X)| \right] \\
& \leq & C \sqrt{n} {K^2}   \max_{k \in {[2K]}} \sup_{y \in \mathcal{Y}}\| \hat \nu_{yk} - \nu_{yk} \|^2_{P,2} \lesssim_P \sqrt{n}  \tau_n^2 = o(1).
\end{eqnarray*}

Second, we have that with probability $1-o(1)$,
$
\max_{w\in\W} \sup_{y \in \mathcal{Y}}|II_{2w}(y)|  \lesssim  \sup_{f \in \mathcal{F}_2} | \Gn(f)|,
$
where, for $\Theta_{yn} := \{ \theta \in \Theta_y: \| \theta - \theta_y \| \leq C \tau_n \}$,
$$
\mathcal{F}_2 =  \Big \{ \psi_{y}^{(w)}(Z; \theta, \gamma) - \psi_{y}^{(w)}(Z; \theta_y, \gamma_y):  w \in\W, y\in \mathcal{Y},
\gamma \in \mathcal{G}_{yn}, \theta \in \Theta_{yn}  \Big \}_{}.$$
{Application of Lemma \ref{lemma:CCK} with an envelope $F_2 \lesssim F_0$ gives that with probability $1-o(1)$
\begin{eqnarray}
\sup_{f \in \mathcal{F}_2} | \Gn(f)| \lesssim  \tau_n^{\alpha/2}   +  n^{-1/2}n^{\frac{1}{q}},
\end{eqnarray}
since $ \sup_{f \in \mF_2} |f| \leq 2 \sup_{f \in \mF_1} |f| \leq 2  F_0$ by  Assumption \ref{ass: AS};   $\|F_0\|_{P,q} \leq C$ by Assumption \ref{ass: S2}(i);  $\log \sup_Q   N( \epsilon \|F_2\|_{Q,2},  \mathcal{F}_{2}, \|\cdot\|_{Q,2}) \lesssim (1 + \log (e/\epsilon))\vee 0$ by Lemma \ref{lemma: andrews} because $\mF_2 = \mF_1 - \mF_0$ for the $\mF_0$ and $\mF_1$ defined in Assumptions  \ref{ass: S2}(i)  and  \ref{ass: AS}; and $\sigma$  can be chosen so that $
 \sup_{f \in \mathcal{F}_2} \| f\|_{P,2} \leq \sigma  \lesssim  \tau^{\alpha/2}_n$.  Indeed,
 \begin{eqnarray*}
  \sup_{f \in \mathcal{F}_2} \| f\|^2_{P,2} & \leq &    \sup_{w \in\W, y\in \mathcal{Y}, \nu \in \Theta_{yn} \times  \mathcal{G}_{yn}}   P \left( P[  ( \psi_{y}^{(w)}(Z; \nu(X)) - \psi_{y}^{(w)}(Z, \nu_y(X)))^2]  \right) \\
 & \leq &  \sup_{y \in \mathcal{Y}, \nu \in \Theta_{yn} \times  \mathcal{G}_{yn}}  P  \left ( C  \| \nu(X) - \nu_y(X) \|^{\alpha} \right)  \\
 & = &  \sup_{y \in \mathcal{Y}, \nu \in \Theta_{yn} \times  \mathcal{G}_{yn}}  C \| \nu - \nu_y \|^{\alpha}_{P, \alpha}  \leq    \sup_{ y \in \mathcal{Y}, \nu \in \Theta_{yn} \times  \mathcal{G}_{yn}} C  \| \nu - \nu_u \|_{P, 2}^{\alpha}  \lesssim \tau_n^{\alpha},
 \end{eqnarray*}
where the first inequality
follows by the law of iterated expectations;  the second inequality follows  by Assumption \ref{ass: S2}(ii)(a);
and the last inequality follows from $\alpha \in [1,2]$ by Assumption \ref{ass: S2},  the monotonicity of the norm
$\|\cdot\|_{P,\alpha}$ in $\alpha \in [1, \infty]$, and  Assumption \ref{ass: AS}.} Conclude that with probability $1-o(1)$
\begin{eqnarray}
\max_{w \in \W} \sup_{y \in \mathcal{Y}}|II_{2w}(y)| \lesssim \tau^{\alpha/2}_n   +  n^{-1/2}  n^{\frac{1}{q}}  = o(1).
\end{eqnarray}

\textbf{\textsc{Step 5.}} In this step we show that
$
 \sup_{y\in \mathcal {Y}} \inf_{\theta \in \Theta_y} \sqrt{n} \|  \En \psi_y(Z; \theta, \hat \gamma_y ) \| =o_P(1).
$
We have that with probability $1- o(1)$
$$
\inf_{\theta \in \Theta_y} \sqrt{n} \|  \En \psi_y(Z; \theta, \hat \gamma_y) \| \leq \sqrt{n} \|  \En \psi_y(Z; \bar  \theta_y, \hat \gamma_y) \|,
$$
where $\bar\theta_y = \theta_y +\En \psi_y(Z, \theta_y, \gamma_y)$, since $\bar  \theta_y \in \Theta_y$ for all $y \in \mathcal{Y}$ with probability $1- o(1)$, and, in fact, $\sup_{y \in \mathcal{Y}} \| \bar  \theta_y - \theta_y \| =O_P( 1/\sqrt{n})$
by the last paragraph of Step 3.

Then, arguing similarly to Step 3 and 4, we can conclude that uniformly in $y \in \mathcal{Y}$:
{\small \begin{eqnarray*}
\sqrt{n} \|  \En \psi_y(Z; \bar  \theta_y, \hat \gamma_y ) \| & \leq &  \sqrt{n} \|  \En \psi_y(Z; \theta_y, \gamma_y ) -(\bar  \theta_y - \theta_y) + {\mathrm{D}_{y,0}(\hat \gamma_y - \gamma_y)} \| + o_P(1)
\end{eqnarray*}}\!
where the first term on the right side is zero by definition of $\bar  \theta_y$ and  $\mathrm{D}_{y,0}(\hat \gamma_y - \gamma_y) = 0$.
\end{proof}

\begin{proof}[\textbf{Proof of Theorem \ref{thm:uniform-bootstrap}}]
\textsc{Step 0.}  In the proof $a \lesssim b$ means that $a \leq A b$, where the constant
$A$ depends on the constants  in Assumptions  \ref{ass: S1}-- \ref{ass: AS}, but not on $n$ once $n \geq n_0$, and not on $P \in \mathcal{P}_n$. In Step 1, we consider a sequence $P_n$ in $\mathcal{P}_n$, but for simplicity, we write  $P =P_n$ throughout the proof,  suppressing the index $n$.  Since the argument is asymptotic, we can  assume that $n \geq n_0$ in what follows.

We first show that
$$ \hat Z^*_{n,P} \rightsquigarrow_B  Z_{P} \text{ in }  \ell^\infty(\Y)^{K},
\text{ uniformly in $P \in \mP_n$}.$$

In other words, we first show that the multiplier bootstrap provides a valid approximation to the large sample law of $\sqrt{n}(\hat \theta_y- \theta_y)_{y \in \mathcal{Y}}$. Let $\Pn$ denote the measure that puts mass $n^{-1}$ at the points $(\xi_i, Z_i)$ for $i=1,...,n$.
Let $\En$ denote the expectation with respect to this measure, so that
$\En f = n^{-1} \sum_{i=1}^n f(\xi_i, Z_i)$, and $\Gn$ denote the corresponding empirical process $\sqrt{n} ( \En - P)$,
i.e.
$$
\Gn f = \sqrt{n}(\En f - P f) = \frac{1}{\sqrt{n}} \sum_{i=1}^n \Bigg ( f(\xi_i, Z_i) - \int f(s, z) d P_\xi (s) dP (z) \Bigg).
$$

Recall that we define the bootstrap draw as:
\begin{align*}
Z^*_{n,P}:= \sqrt{n}(\hat \theta^*- \hat \theta)  =  \( \frac{1}{\sqrt{n}} \sum_{i=1}^n \xi_i \hat \psi_{y}(Z_i) \)_{y\in \mathcal{Y}}= \left (\Gn \xi \hat \psi_y \right)_{y \in \mathcal{Y}},
\end{align*}
where
$
 \hat \psi_y(Z) = \psi_y(Z, \hat \theta_y, \hat \gamma_y).
$

\textsc{Step 1.}( {Linearization})  In this step we establish that
\begin{equation}\label{REL1}
\ \  \zeta^*_{n,P}:  = Z^*_{n,P}-   G^*_{n,P} =   o_P(1) \ \  \text{ in $\D=\ell^\infty(\Y)^{K}$},
\end{equation}
where $G^*_{n, P} :=   (\mathbb{G}_{n}  \xi \bar \psi_{y})_{y \in \mathcal{Y}},$
and  $\bar \psi_y(Z) = \psi_y(Z; \theta_y, \gamma_y)$.

To show (\ref{REL1}),  we note that {with probability $1-\delta_n$,  $\hat \gamma_y \in \mathcal{G}_{yn},$ $\hat \theta_y \in \Theta_{yn} = \{ \theta \in \Theta_y: \| \theta - \theta_y \| \leq C \tau_n \}$,  so that
$
\| \zeta^*_{n,P} \|_{\D} \lesssim \sup_{f \in \mF_3} |\Gn[\xi f]|,
$
where
$$
\mathcal{F}_{3} = \Big \{ \tilde \psi_{y}^{(w)}(\bar \theta_y, \bar \gamma_y) - \bar \psi_{y}^{(w)} :
w \in \W, y \in \Y,  \bar \theta_y \in \Theta_{yn},  \bar \gamma_y \in \mathcal{G}_{yn}\Big\},
$$
where $\tilde \psi_{y}^{(w)}(\bar \theta_y, \bar \gamma_y))$ is the $j$-th element of $ \psi_y(Z; \bar \theta_y, \bar \gamma_y(X))$, and $\bar \psi_{y}^{(w)}$ is the $j$-th element of $\psi_y(Z;  \theta_y,  \gamma_y(X))$.
By the arguments similar to those employed in the proof of the previous theorem, $\mF_3$ obeys $$
\log \sup_Q   N( \epsilon \|F_3\|_{Q,2}, \mathcal{F}_{3}, \|\cdot\|_{Q,2}) \lesssim  (1 +\log (e/\epsilon)) \vee 0,$$
for an envelope  $F_3 \lesssim F_0 $.
By Lemma \ref{lemma: andrews}, multiplication of this class by $\xi$ does not change the entropy bound modulo  an absolute constant, namely
$$
\log \sup_Q   N( \epsilon \||\xi| F_3\|_{Q,2}, \xi \mathcal{F}_{3}, \|\cdot\|_{Q,2}) \lesssim (1 + \log (e/\epsilon))\vee 0.
$$
Also $\Ep[\exp(|\xi|)] < \infty$ implies $(\Ep [\max_{ i \leq n} |\xi_i|^2])^{1/2} \lesssim \log n$, so that, using independence
of $(\xi_i)_{i=1}^n$ from $(Z_i)_{i=1}^n$ and Assumption \ref{ass: S2}(i),
 $$\|  \max_{i \leq n } \xi_i F_0(Z_i)\|_{\Pr_P, 2}   \leq  \|   \max_{i \leq n } \xi_i \|_{\Pr_P, 2}
 \| \max_{i \leq n }F_0(Z_i)\|_{\Pr_P,2} \lesssim n^{1/q}  \log n.$$
Applying Lemma \ref{lemma:CCK},
\begin{eqnarray*}
&&  \sup_{f \in \xi \mathcal{F}_3} | \Gn (f) |  = O_{P} \(  \tau_n^{\alpha/2}  +  \frac{n^{1/q}  \log n }{\sqrt{n}} \)  = o_P(1),
\end{eqnarray*}
for $\sup_{f \in \xi \mathcal{F}_3}\|  f\|_{P,2}= \sup_{f \in  \mathcal{F}_3}\| f\|_{P,2} \lesssim \sigma_n \lesssim \tau_n^{\alpha/2}$, where the details of calculations are similar to those in the  proof of Theorem \ref{theorem:uniform-clt}.
Indeed, with probability $1 - o(\delta_n),$
 \begin{eqnarray*}
  \sup_{f \in \mathcal{F}_3} \| f\|^2_{P,2} & \lesssim & \sup_{w\in\W, y\in \mathcal{Y}, \nu \in \Theta_{yn} \times  \mathcal{G}_{yn}}   \ P \left( P[  ( \psi_{y}^{(w)}(Z, \nu(X)) - \psi_{y}^{(w)}(Z, \nu_y(X)))^2]  \right) \\
 & \lesssim &  \sup_{y \in \mathcal{Y}, \nu \in \Theta_{yn} \times  \mathcal{G}_{yn}}   \| \nu - \nu_y \|^{\alpha}_{P, \alpha}   \\
 & \lesssim &   \sup_{ y \in \mathcal{Y}, \nu \in \Theta_{yn} \times  \mathcal{G}_{yn}}   \| \nu - \nu_y \|_{P, 2}^{\alpha}  \\
 & \lesssim & \tau_n^{\alpha},
 \end{eqnarray*}
where the first inequality
follows from the triangle inequality and the law of iterated expectations;  the second inequality follows  by Assumption \ref{ass: S2}(ii)(a) and Assumption \ref{ass: S2}(i); the third inequality follows from $\alpha \in [1,2]$ by Assumption \ref{ass: S2},  the monotonicity of the norm
$\|\cdot\|_{P,\alpha}$ in $\alpha \in [1, \infty]$, and  Assumption \ref{ass: AS}; and the last inequality follows from $ \| \nu - \nu_y \|_{P, 2} \lesssim \tau_n$ by the definition of $ \Theta_{yn}$ and $ \mathcal{G}_{yn}$.} The equation (\ref{REL1}) follows.

\textsc{Step 2}. Here we are claiming that
$Z^*_{n,P} \rightsquigarrow_B  Z_{P}$  in $\D=\ell^\infty(\Y)^{K}$,  under any sequence $P =P_n \in \mP_n$, were $Z_{P} =   (\mathbb{G}_{P}  \bar \psi_{y})_{y\in \mathcal{Y}}$. By the triangle inequality and Step 1,
\begin{eqnarray*}
&& \sup_{h \in BL_1(\D) } \Big |   \Ep_{B_n}  h ( Z^*_{n,P} )  - \Ep_P h ( Z_{P})  \Big |\leq \sup_{h \in BL_1(\D) } \Big |   \Ep_{B_n}  h (G^*_{n,P} )  - \Ep_P h ( Z_{P})  \Big |
+   \Ep_{B_n} ( \|  \zeta^*_{n,P} \|_{\D} \wedge 2 ),
\end{eqnarray*}
where the first term  is $o^*_P(1)$, since $
G^*_{n,P} \rightsquigarrow_B  Z_{P}$  by Theorem \ref{lemma: uniform Donsker for bootstrap}, and the second term is $o_P(1)$  because  $ \|\zeta^*_{n,P}\|_{\D}  = o_P(1) $ implies that  $ \Ep_P ( \|  \zeta^*_{n,P} \|_{\D} \wedge 2 ) =
\Ep_P \Ep_{B_n} ( \|  \zeta^*_{n,P} \|_{\D} \wedge 2 ) \to 0$, which in turn implies that  $\Ep_{B_n} ( \|  \zeta^*_{n,P} \|_{\D} \wedge 2 ) = o_P(1)$ by the Markov inequality.
\end{proof}

\section{Additional Experimental Details and Results} \label{sec:app_experiment}
All experiments are carried out using R version 4.3.1 on a MacBook Pro with Apple M2 Max chip and 64GB memory. The code is available at \href{https://github.com/CyberAgentAILab/dte-ml-adjustment}{https://github.com/CyberAgentAILab/dte-ml-adjustment}. Additionally, we are in the process of developing a Python package that implements our proposed method.

\subsection{Simulation Study} \label{sec:simulation_details}
\subsubsection{Data Generating Process (DGP)} \label{sec:dgp} We fix the number of covariates $d_x$ as $d_x=100$ and the sample size $n$ to be in $\{500, 1000, 5000\}$. For each $i=1,\dots, n$, we generate $X_i=(X_{1i}, \dots, X_{100i})$ from $U_{100}((0,1)^{100})$,  a multivariate uniform distribution on $(0,1)$. Binary treatment variable $W_i$ follows Bernoulli distribution with success probability of $\rho=0.5$. A continuous outcome variable $Y_i$ is then generated from the outcome equation $Y_i=f(X_i, W_i)+U_i$, where the error term $U_i \sim N(0,1)$. We consider the functional form of 
\begin{equation} \label{eq:dgp-outcome}
f(X_i, W_i)= W_i + \sum_{j=1}^{100} \beta_j X_{ji} + \sum_{j=1}^{100} \gamma_j X_{ji}^2
\end{equation}
so that the outcome is nonlinear in covariates. We set
\begin{align*}
\beta_j = 
    \begin{cases}
        1 \text{ for } j\in \{1,\dots, 50\} \\
        0 \text{ for } j\in \{51,\dots, 100\},
    \end{cases}
\end{align*}
and 
\begin{align*}
\gamma_j = 
    \begin{cases}
        1 \text{ for } j\in \{1,\dots, 50\} \\
        0 \text{ for } j\in \{51,\dots, 100\}.
    \end{cases}
\end{align*}
In other words, the first 50 variables are relevant but the other 50 variables are irrelevant to the outcome variable. 

\subsubsection{Evaluation metrics} We evaluate the performance of our estimators using

\begin{enumerate}
    \item Bias ratio, computed as $100\%\times\frac{\frac{1}{R}\sum_{r=1}^{R}(\hat\Delta_{y,r} - \Delta_y)}{\Delta_y}$
    \item Root mean squared error (RMSE), computed as $RMSE=\sqrt{\frac{1}{R}\sum_{r=1}^{R}(\hat\Delta_{y,r} - \Delta_y)^2}$
    \item RMSE reduction, computed as $100\% \times (1-\frac{RMSE_{adjusted}}{RMSE_{simple}})$
\end{enumerate}
where $\Delta_y$ is the true distributional parameter (e.g., DTE, QTE) at threshold $y$, $\hat\Delta_{y,r}$ is the estimate from the replication $r$, and $R$ is the number of replications. We consider $R=1000$ in our experiments. To approximate the true distributions, we generate a dataset with 1,000,000 observations and calculate the distributional parameter at each $y$. In the simulations, we consider 9 values of threshold $y$ at the quantiles $\{0.1, 0.2, \dots, 0.9\}$ of the true outcome distribution.

\subsubsection{Implementation}
In the simulation study, we implement logistic LASSO with \textit{cv.glmnet} function in \texttt{glmnet} package in R \cite{friedman2010regularization, tay2023elastic}.

\subsubsection{Relevance of covariates and RMSE reduction}
We demonstrate the relationship between the predictive power of covariates on the outcome and the reduction in RMSE using a simple experiment. For this purpose, we explore a series of data generating processes where the relevance of covariates ranges from high to low. Specifically, we construct a slowly decaying sequence of coefficients $\kappa_s = 2\times s^{-1}$ for $s=1, \dots, 10$. Then, in our outcome equation \eqref{eq:dgp-outcome}, we set the coefficients as follows:
\begin{align*}
\beta_j = 
    \begin{cases}
        \kappa_s \text{ for } j\in \{1,\dots, 50\} \\
        0 \text{ for } j\in \{51,\dots, 100\},
    \end{cases}
\end{align*}
and 
\begin{align*}
\gamma_j = 
    \begin{cases}
        \kappa_s \text{ for } j\in \{1,\dots, 50\} \\
        0 \text{ for } j\in \{51,\dots, 100\}.
    \end{cases}
\end{align*}

Note that we consider a decaying sequence: $s=1$ corresponds to the case with the highest relevance, and the relevance diminishes as we increase $s$ up to $s=10$. Figure \ref{fig:rmse_relevance} illustrates how covariates with higher relevance result in a greater reduction in RMSE across all quantiles when sample size is $n=1000$.

\begin{figure*}[!htbp]
\vskip 0.2in
\begin{center}
\includegraphics[width=0.85\textwidth]{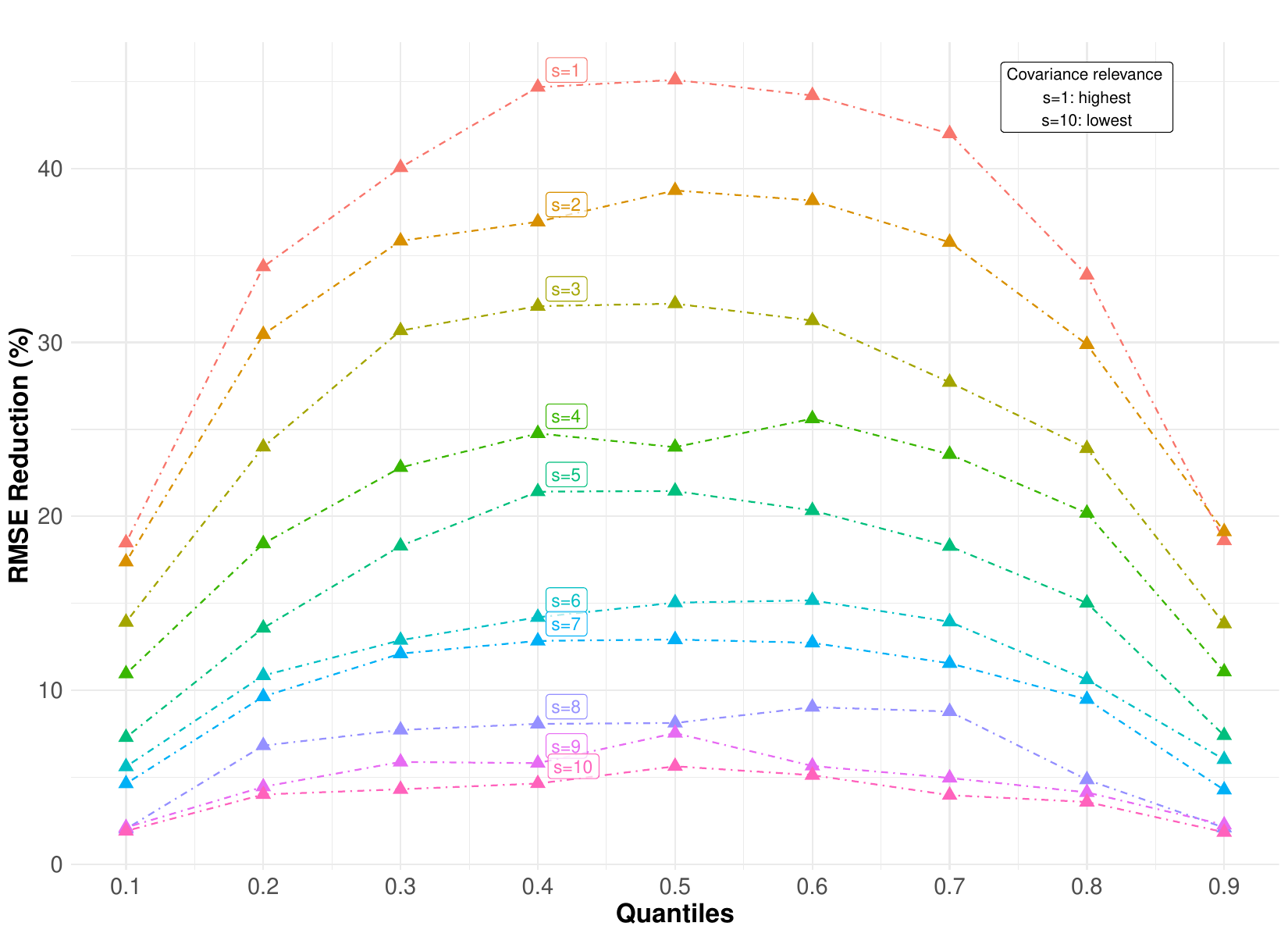}
\caption{RMSE reduction in \% of the ML adjusted estimator compared to the simple DTE estimator, under various data generating processes indexed by $s=1, \dots, 10$, calculated over 1000 simulations. $s=1$: highest relevance of covariates, diminishing relevance of covariates as $s$ increases up to $s=10$. The simple estimator is derived from empirical distribution functions, while the ML adjusted estimator is obtained using LASSO with 5-fold cross-fitting. $n=1000$.}
\label{fig:rmse_relevance}
\end{center}
\vskip -0.2in
    
\end{figure*}

\clearpage

\begin{figure*}[ht]
\vskip 0.2in
\begin{center}
\includegraphics[width=0.85\textwidth]{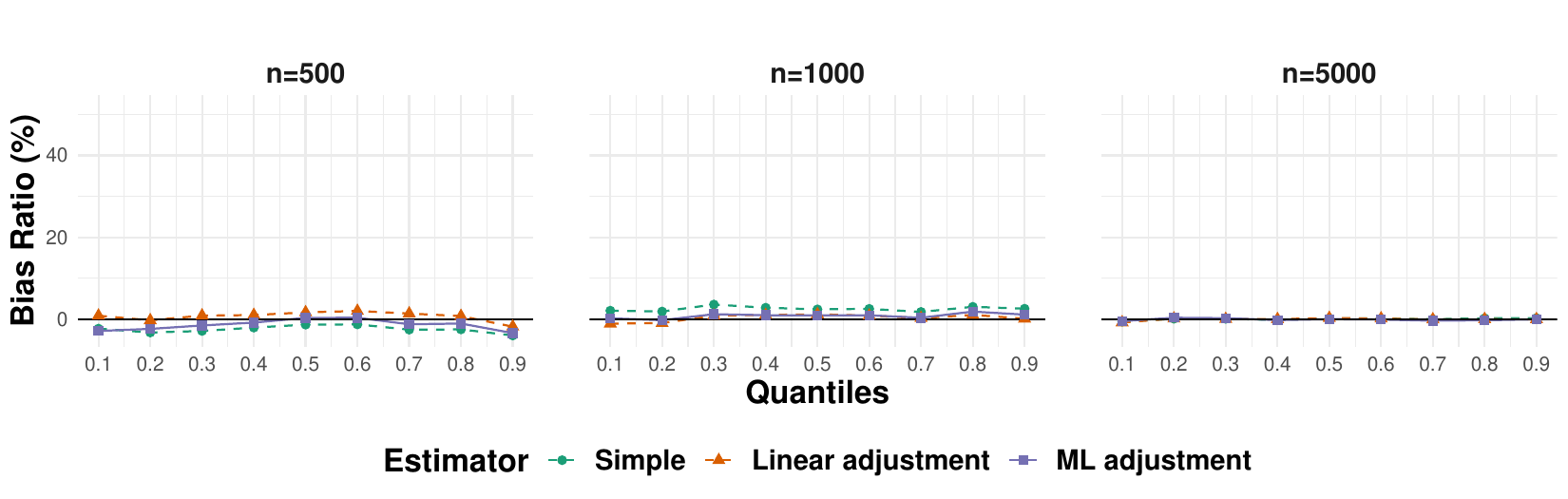}
\includegraphics[width=0.85\textwidth]{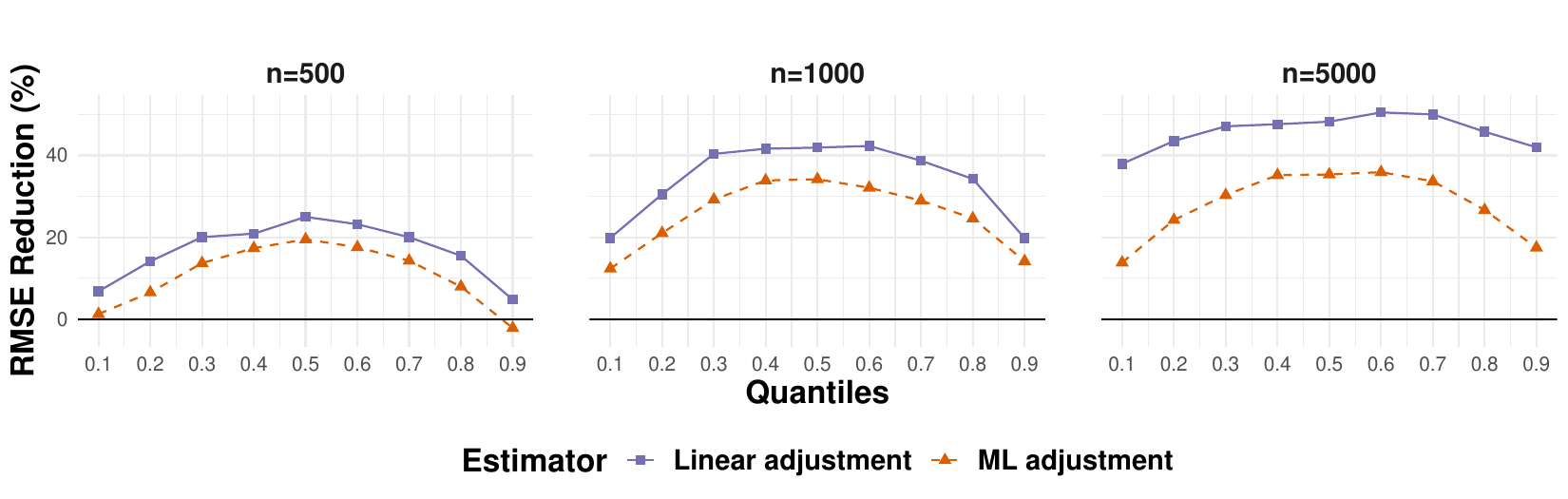}
\caption{Bias (top figure), as a \% of true value, of different QTE estimators and RMSE reduction in \% (bottom figure) of adjusted estimators compared to simple QTE estimator, under sample sizes $\{500, 1000, 5000\}$, calculated over 1000 simulations. The simple estimator is calculated from empirical distribution functions. The regression-adjusted estimators (linear adjustment and ML adjustment based on LASSO) are implemented using 5-fold cross-fitting.}
\label{fig:qte-bias-rmse}
\end{center}
\vskip -0.2in
\end{figure*} 

\subsubsection{Quantile Treatment Effect (QTE)} We also consider simple and regression-adjusted QTE estimators. In the setup introduced in Section \ref{sec:dgp}, the outcome variable is continuous and hence the QTE is well-defined. The true value of QTE is constant and equals 1 at all quantiles. The top figure of Figure \ref{fig:qte-bias-rmse} plots the bias as a \% of the true value of the QTE. The bottom figure of Figure \ref{fig:qte-bias-rmse} plots the RMSE reduction in \% terms for the linear and ML adjustment, compared to the simple estimator. We confirm the bias is small for all QTE estimators. Even when sample size is small ($n=500$), the bias is at most 4\%. As for the RMSE, the results are similar to that for the DTE explained in Section \ref{subsec:simulation}. The variance reduction is around 13\%-35\% for linearly adjusted estimator and is around 37\%-50\% for the ML adjusted estimator when sample size is large ($n=5000$).

\subsection{Nudges to reduce water consumption}\label{sec:water-nudge}
\subsubsection{Data and implementation } The dataset from the randomized experiment can be downloaded at \href{https://doi.org/10.7910/DVN1/22633}{https://doi.org/10.7910/DVN1/22633} \cite{DVN1/22633_2013}.

In our analysis, we implement gradient boosting with \texttt{xgboost} package in R \cite{Chen:2016:XST:2939672.2939785}.

\clearpage
\begin{figure*}[!htbp]
\vskip 0.2in
\begin{center}
\includegraphics[width=0.4\columnwidth]{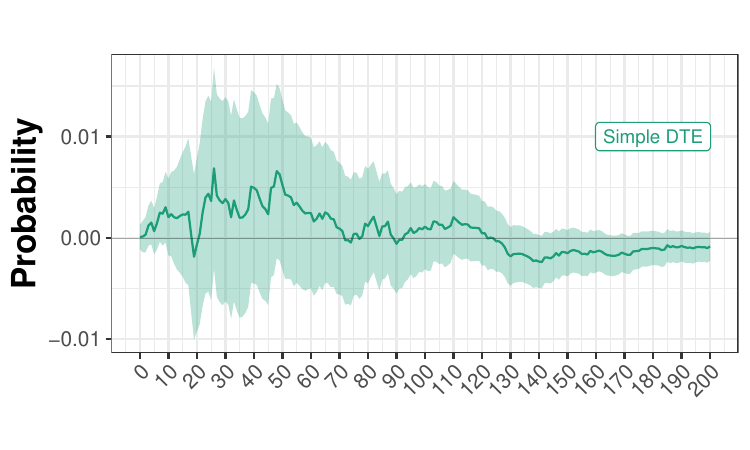}
\includegraphics[width=0.4\columnwidth]{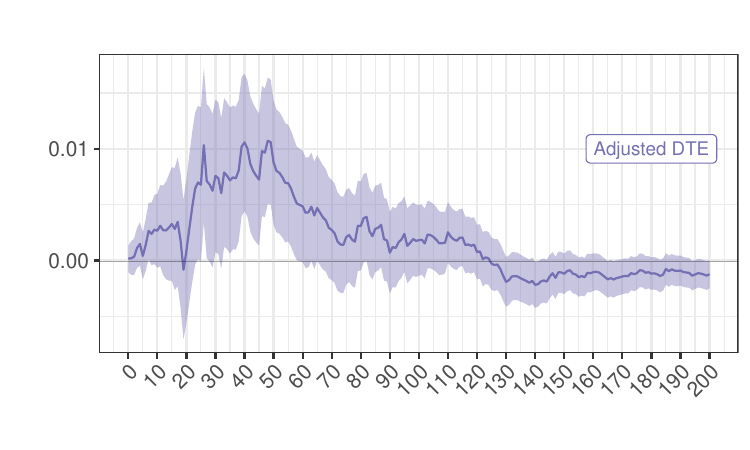}
\includegraphics[width=0.4\columnwidth]{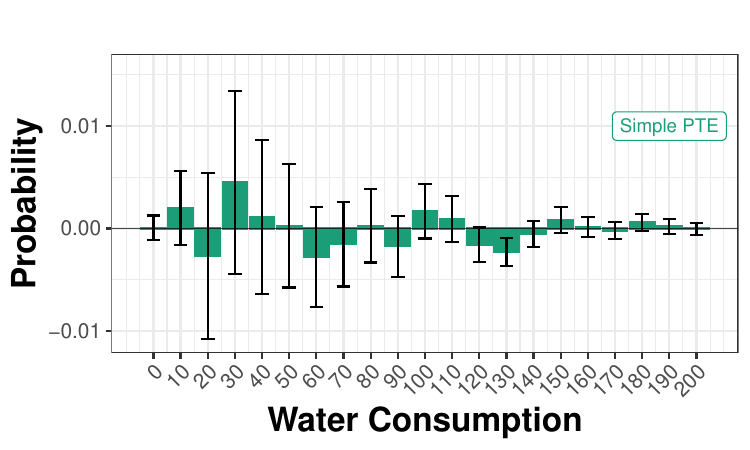}
\includegraphics[width=0.4\columnwidth]{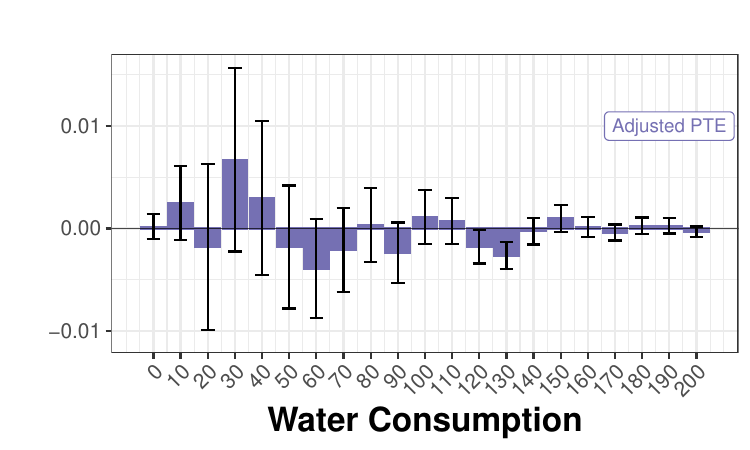}
\caption{Technical Advice (T1) vs. Control. Distributional Treatment Effect (DTE) and Probability Treatment Effect (PTE) on water consumption (in thousands of gallons). The top left figure represents the simple DTE; the top right figure depicts the regression-adjusted DTE, computed for $y\in\{0,1,2,\dots, 200\}$. The bottom left figure  represents the simple PTE; the bottom right figure represents the regression-adjusted PTE, computed for $y\in\{0,10,20,\dots, 200\}$ and $h=10$. The regression adjustment is implemented via gradient boosting with 10-fold cross-fitting. The shaded areas and error bars represent the 95\% pointwise confidence intervals. $n=78,478$.}
\label{fig:water-dte-t1}
\end{center}
\vskip -0.2in
\end{figure*}

\begin{figure*}[!htbp]
\vskip 0.2in
\begin{center}
\includegraphics[width=0.4\columnwidth]{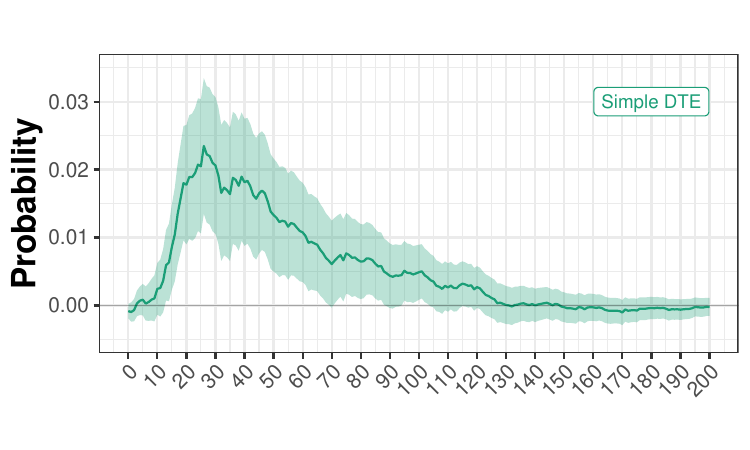}
\includegraphics[width=0.4\columnwidth]{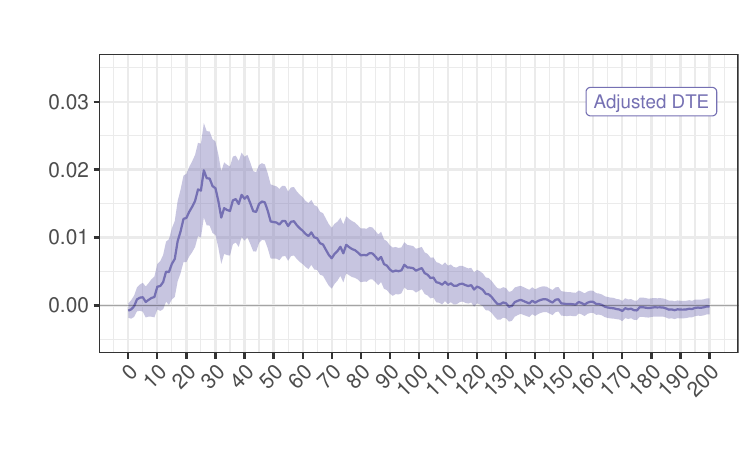}
\includegraphics[width=0.4\columnwidth]{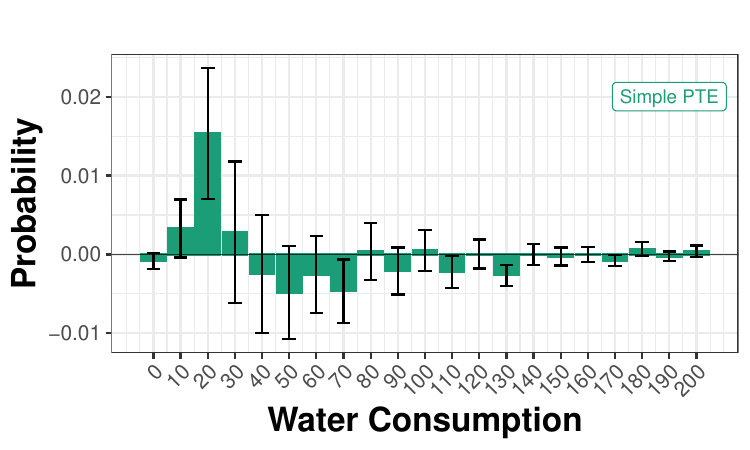}
\includegraphics[width=0.4\columnwidth]{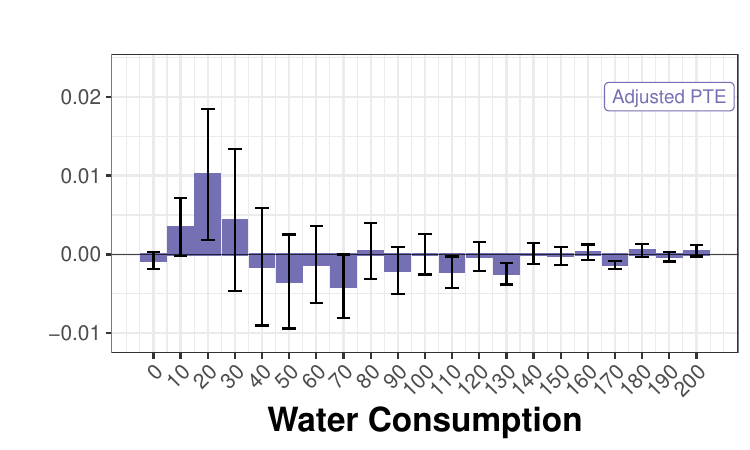}
\caption{Weak Social Norm (T2) vs. Control. Distributional Treatment Effect (DTE) and Probability Treatment Effect (PTE) on water consumption (in thousands of gallons). The top left figure represents the simple DTE; the top right figure depicts the regression-adjusted DTE, computed for $y\in\{0,1,2,\dots, 200\}$. The bottom left figure  represents the simple PTE; the bottom right figure represents the regression-adjusted PTE, computed for $y\in\{0,10,20,\dots, 200\}$ and $h=10$. The regression adjustment is implemented via gradient boosting with 10-fold cross-fitting. The shaded areas and error bars represent the 95\% pointwise confidence intervals. $n=78,468$.}
\label{fig:water-dte-t2}
\end{center}
\vskip -0.2in
\end{figure*}

\subsubsection{Results with multiple treatments}
The randomized experiment considered four treatment groups: technical advice (T1), weak social norm (T2), strong social norm (T3), and a control group. Technical advice (T1) involved providing residents with information on ways to reduce water use. The weak social norm (T2) treatment combined technical advice with an appeal to prosocial preferences. The strong social norm (T3) treatment further included social comparisons along with the elements of T2. On average, all three treatments resulted in a reduction in water use compared to the control group, with the strong social norm (T3) showing the largest effect and the technical advice (T1) showing the smallest effect. See \citet{ferraro2013using} for more details about the experimental design and average treatment effect analysis.

We extended the analysis by examining the entire distribution of water use. Figure \ref{fig:water-dte-t1} displays the DTE and PTE of technical advice (T1) compared to the control group. The PTE results indicate a reduction in water use in the range of (120, 140] under T1. Although regression adjustment results in tighter confidence intervals for the DTE and PTE, the overall conclusions remain the same.

Figure \ref{fig:water-dte-t2} presents the DTE and PTE of the weak social norm (T2) compared to the control group. Under T2, water use increased in the range of (20, 30] but decreased in the ranges of (110, 120], (130, 140], and (only slightly in) (170, 180]. Similar to T1, regression adjustment leads to tighter confidence intervals for the DTE and PTE without altering the primary conclusions for this treatment.

\end{document}